\theoremstyle{plain}
\newtheorem{theorem}{Theorem}[section]
\newtheorem{lemma}[theorem]{Lemma}
\newtheorem{corollary}[theorem]{Corollary}
\theoremstyle{definition}
\newtheorem{definition}{Definition}[section]
\newtheorem{example}[theorem]{Example}
\newtheorem{remark}[theorem]{Remark}
\newtheorem*{theorem*}{Theorem}
\newtheorem*{definition*}{Definition}
\newcommand{\monoid}[1]{\langle \underline{#1} \rangle}
\newcommand{\bsym}[1]{ {\boldsymbol{ #1 }} }
\newcommand*{\Tr}[1]{\mathop{}\!\mathrm{Tr}{\left(#1\right)}}
\newcommand*{\tr}[2]{\mathop{}\!\mathrm{Tr}_{#1}{\left(#2\right)}}
\newcommand{\Span}[1]{\mathrm{Span}\left\lbrace #1 \right\rbrace}
\def\id{\mathbbm{1}}
\newcommand{\ket}[1]{\ensuremath{\left|#1\right\rangle}}
\newcommand{\braket}[2]{\langle #1  |#2\rangle}
\newcommand{\ketbra}[2]{|#1\rangle\langle #2|  }
\newcommand{\sandwich}[3]{\langle #1|#2 |#3\rangle  }
\let\C\relax 
\newcommand{\cB}{\mathcal{B}}
\newcommand{\cH}{\mathcal{H}}
\newcommand{\A}{A_{a|x}}
\newcommand{\B}{B_{b|y}}
\newcommand{\C}{C_{c|z}}
\newcommand{\hA}{\hat{A}_{a|x}}
\newcommand{\hB}{\hat{B}_{b|y}}
\newcommand{\hC}{\hat{C}_{c|z}}
\newcommand{\D}{D_{d|w}}
\newcommand{\hD}{\hat{D}_{d|w}}
\newcommand{\Ainf}[1]{{A}^{#1}_{a^{#1}|x^{#1}}}
\newcommand{\Binf}[2]{{B}^{#1,#2}_{b^{#1,#2}|y^{#1,#2}}}
\newcommand{\Cinf}[1]{{C}^{#1}_{c^{#1}|z^{#1}}}
\newcommand{\hAinf}[1]{\hat{A}^{#1}_{a^{#1}|x^{#1}}}
\newcommand{\hBinf}[2]{\hat{B}^{#1,#2}_{b^{#1,#2}|y^{#1,#2}}}
\newcommand{\hCinf}[1]{\hat{C}^{#1}_{c^{#1}|z^{#1}}}
\newcommand{\balpha}{ \boldsymbol{\alpha}}
\newcommand{\bbeta}{ \boldsymbol{\beta}}
\newcommand{\bgamma}{ \boldsymbol{\gamma}}
\newcommand{\bomega}{ \boldsymbol{\omega}}
\newcommand{\bnu}{ \boldsymbol{\nu}}
\newcommand{\bkappa}{ \boldsymbol{\kappa}}
\newcommand{\halpha}{ \hat{\alpha}}
\newcommand{\hbeta}{ \hat{\beta}}
\newcommand{\hgamma}{ \hat{\gamma}}
\newcommand{\homega}{ \hat{\omega}}
\newcommand{\hnu}{ \hat{\nu}}
\newcommand{\hkappa}{ \hat{\kappa}}
\newcommand{\bmu}{ \boldsymbol{\mu}}
\newcommand{\bdelta}{ \boldsymbol{\delta}}
\newcommand{\hdelta}{ \hat{\delta}}
\numberwithin{equation}{section}
\begin{document}

%------
% Insert the title of your paper and (if necessary)
% a short title for the running head.
%------
\title{Two Convergent NPA-like Hierarchies for the Quantum Bilocal Scenario}
%\titlemark{SHORT TITLE FOR THE RUNNING HEAD}

%------
% Insert full names of the authors.
% Add further authors as follows:
%  \emsauthor{2}{}{}
%  \emsauthor{3}{}{}
% etc.
% Abbreviate first names for the running head.
%------
\emsauthor{1}{Marc-Olivier Renou$^{1,2,3,4}$}{M.-O.~Renou}
\emsauthor{2}{Xiangling Xu$^{1,2,3,5, \dagger}$}{X.~Xu}
\emsauthor{3}{Laurens T. Ligthart$^{5}$}{L.T.~Ligthart}

%------
% Use \authormark if the list of authors is too
% long for the running head: \authormark{A.~Doe et al.}
%------

%------
% Add one \emsaffil and one \email for each author.
%------
\emsaffil{1}{$^1$Inria Paris-Saclay, B\^atiment Alan Turing, 1 rue Honor\'e d’Estienne d’Orves, 91120 Palaiseau, France}
\emsaffil{2}{$^2$CPHT, Ecole polytechnique, Institut Polytechnique de Paris, Route de Saclay, 91128 Palaiseau, France}
\emsaffil{3}{$^3$LIX, Ecole polytechnique, Institut Polytechnique de Paris, Route de Saclay, 91128 Palaiseau, France}

\emsaffil{4}{$^4$ICFO-Institut de Ciencies Fotoniques, The Barcelona Institute of Science and Technology, Castelldefels (Barcelona), Spain}

\emsaffil{5}{$^5$Institute for Theoretical Physics, ETH Z\"urich, Switzerland}

\emsaffil{6}{$^6$Institute for Theoretical Physics, University of Cologne, Germany}

\emsaffil{7}{$^\dagger$\scriptsize\texttt{xu.xiangling@inria.fr}}
% \emsaffil{6}{$^*$\email{marc-olivier.renou@inria.fr}}
% \emsaffil{7}{$^\dagger$\email{xu.xiangling@inria.fr}}
% \emsaffil{8}{$^\ddagger$\email{ligthart@thp.uni-koeln.de}}
% \emsaffil{1}{$^1$ICFO-Institut de Ciencies Fotoniques, The Barcelona Institute of Science and Technology, Castelldefels (Barcelona), Spain \email{marc-olivier.renou@icfo.eu}}

% \emsaffil{2}{$^2$Institute for Theoretical Physics, ETH Z\"urich, Switzerland \email{xiangxu@student.ethz.ch}}

% \emsaffil{3}{$^3$Institute for Theoretical Physics, University of Cologne, Germany \email{ligthart@thp.uni-koeln.de}}

% \emsaffil{2}{$^3$Department of Mathematics, ETH Z\"urich, Switzerland \email{xiangxu@student.ethz.ch}}

%------
% Add MSC 2020 codes according to www.ams.org/msc/msc2020.html.
% Secondary codes (in square brackets) are optional.
%------
%\classification[YYyYY]{XXxXX}

%------
% Add a list of keywords.
%------
%\keywords{AAA, BBB}

%------
% Insert your abstract.
%------
\begin{abstract}
Characterising the correlations that arise from locally measuring a single part of a joint quantum system is one of the main problems of quantum information theory.
The seminal work [M. Navascués \emph{et al}, NJP 10,7,073013 (2008)], known as the NPA hierarchy, reformulated this question as a polynomial optimisation problem over noncommutative variables and proposed a convergent hierarchy of necessary conditions, each testable using semidefinite programming.
More recently, the problem of characterising the quantum \emph{network} correlations, which arise when locally measuring \emph{several independent} quantum systems distributed in a network, has received considerable interest.
Several generalisations of the NPA hierarchy, such as the scalar extension [Pozas-Kerstjens \emph{et al}, Phys. Rev. Lett. 123, 140503 (2019)], were introduced while their converging sets remain unknown.
In this work, we introduce a new \emph{bilocal factorisation NPA hierarchy}, prove its equivalence to a \emph{modified} bilocal scalar extension NPA hierarchy, and characterise its convergence in the case of the simplest network, the bilocal scenario.
We further explore its relations with the other known generalisations.
\end{abstract}

\maketitle

%------
% INSERT THE BODY OF THE PAPER HERE (except
% acknowledgments, funding info and bibliography)
%------

\tableofcontents
\section{Introduction}

Quantum correlations (the joint probability distribution established between independent parties measuring a single shared quantum state) are important both for foundations and applications of quantum theory.
They underlie the Bell theorem~\cite{BellTheorem,CHSH}, a theoretical physics milestone~\cite{BrunnerReview}. 
They are used in the device independent framework to obtain practical applications in terms of certification~\cite{Scarani2012device}, e.g. of quantum devices~\cite{Mayers1998, Arnon2016,SelfTestingReview}, randomness~\cite{Pironio2010,Colbeck2012} or cryptographic protocols~\cite{Acin2006,Vazirani2014}.
In all these approaches, no assumption is made on the state and measurements.
Hence, being able to characterise the set of probability distributions allowed by quantum theory is crucial. 
This can be done with the \emph{NPA hierarchy}~\cite{NPA2008,ParriloHierarchy,LasserreHierarchy}, which provides a converging hierarchy of Semi-Definite Programs (SDPs) to this set, using non-commutative polynomial optimisation theory. 

More recently, a generalised framework in which several independent quantum sources are distributed and measured by several parties in a \emph{network scenario} received considerable interest~\cite{NetworkNonlocReview,Branciard2010,Fritz2012,Renou2022a,Renou2019}. 
It offered new insights on foundations and applications of quantum theory~\cite{Henson2014,CoiteuxRoyPRL,CoiteuxRoyPRA,Supic2022,Pavel2022}, in particular from the simple bilocal scenario in which two independent sources are distributed to three parties in a line~\cite{Renou2021,Weilenmann2020,Lee2018,Renou2018,Bancal2018}.
This calls for a generalisation of the \emph{NPA hierarchy}. 
Several methods were proposed, but to our knowledge no proof of monotonic convergence exists. 

In this paper, we introduce two new hierarchies: the \emph{factorisation NPA hierarchy} and a modified \emph{scalar extension hierarchy}. The latter is based on the ideas presented in~\cite{ScalarExtension2019,klep2024state}. We prove that both hierarchies converge to the set of Projector Bilocal Quantum Distributions, a relaxation of the quantum correlations obtainable in the bilocal scenario.
We discuss the relation of these two hierarchies with the \emph{inflation-NPA hierarchy} of~\cite{QuantumInflation2021} and generalisation of our result to generic networks.

\medskip
\noindent\textbf{Note Added}---This work was first made available as a preprint in October 2022~\cite{renou2022bilocalv1} and inspired the works~\cite{ligthart2023inflation, klep2024state}.
In particular, our factorisation NPA hierarchy approach (Section~\ref{sec:TripartiteQDistribStandardNPA}) motivated the results of~\cite{ligthart2023inflation}, and the proposed scalar extension polynomial construction (Appendix~\ref{sec:ScaExtPoly}) is systematically studied by~\cite{klep2024state}.
However, the formulation of the initial scalar extension NPA hierarchy was incorrect, resulting in an incorrect proof of convergence.
In this new version, we propose a correction of it, inspired both by the works~\cite{ligthart2023inflation,klep2024state}.
LL joined the project at this second stage and contributed to the development of the corrected hierarchy and manuscript.
% See Section~\ref{sec:discussion} for a more detailed account.

\subsection{Motivation}

\textbf{Bilocal quantum correlations}\\
In this article, we focus on the quantum bilocal scenario involving three parties, $A,B,C$ measuring two independent sources $\rho, \sigma$ distributed through the bilocal network (see Figure~\ref{fig:ThreePartiesScenarios}b).
We focus on the set of quantum correlations $\vec{P}=\{p(abc|xyz)\}$  which can be obtained in this scenario, where $p(abc|xyz)$ is the probability of measurement results $a, b, c$ given that $A, B, C$ respectively performed measurements labeled by $x, y, z$.
According to the Born rule, we write
\begin{align*}
   p(abc|xyz)=\tr{\rho\otimes\sigma}{A_{a|x}\otimes B_{b|y}\otimes C_{c|z}}, 
\end{align*}
where $\rho\in\cB(\cH_A\otimes\cH_{B_L}), \sigma\in\cB(\cH_{B_R}\otimes\cH_{C})$ are projectors over pure states, $A_{a|x}\in\cB(\cH_A), B_{b|y}\in\cB(\cH_{B_L}\otimes\cH_{B_R}), C_{c|z}\in\cB(\cH_C)$ are PVMs, and $\cH_A,\cH_{B_L},\cH_{B_R},\cH_C$ are arbitrary Hilbert spaces. (Throughout the text, when $\tau$ is a state, we use notation $\tr{\tau}{M}:=\Tr{\tau\cdot M}$.)

Such correlations, which we call \textit{Tensor Bilocal Quantum Distributions}, are the most basic examples of quantum network correlations. 
These arise when multiple independent quantum states are distributed among parties within a network~\cite{NetworkNonlocReview}. 

Arbitrarily good \emph{inner} approximations of this set of correlations can be obtained, e.g. by sampling Hilbert spaces, states, and measurement operators.
As we explain below, arbitrarily good \emph{outer} approximations of this set cannot be obtained. 
In this work, we will provide a method for obtaining arbitrarily good outer approximations of a natural relaxation of these correlations, which we introduced below and call \emph{Projector Bilocal Quantum Distributions}.

\begin{figure}
    \centering
    \includegraphics[width=\textwidth]{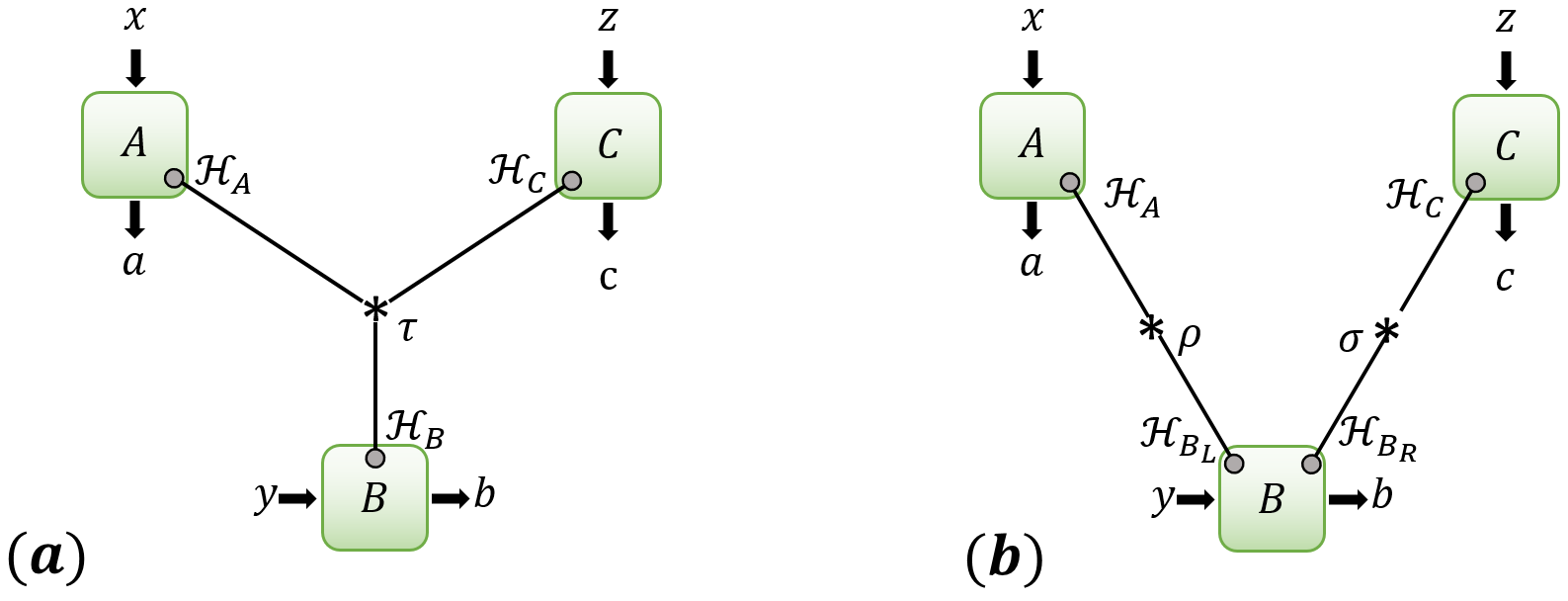}
    \caption{(a) Standard three-party Bell scenario. A three-particles quantum state, mathematically represented by a projector over a pure state $\tau\in\cB(\cH_A\otimes\cH_B\otimes\cH_C)$ (a positive operator such that $\Tr{\tau}=1$ and $\tau^2=\tau$), is created, each particle is sent to one of three separated parties $A, B, C$. $A$ measures the received particle according to some input $x$, obtaining an output $a$, mathematically represented by PVMs $A_{a|x}\in\cB(\cH_A)$ (a set of positive operators such that $\sum_a A_{a|x}=\id$), and $B,C$ do the same. The behavior of the experiment is described by a probability distribution $\vec{P}=\{p(abc|xyz)\}$ with $p(abc|xyz)=\tr{\tau}{A_{a|x}\otimes B_{b|y}\otimes C_{c|z}}$. \\ 
    (b) Bilocal scenario. Two two-particles quantum state (mathematically represented by two projectors over pure states $\rho\in\cB(\cH_A\otimes\cH_{B_L}), \sigma\in\cB(\cH_{B_R}\otimes\cH_{C})$ such that $\Tr{\rho}=\Tr{\sigma}=1$ and $\rho^2=\rho, \sigma^2=\sigma$) are created, $A$ (resp. $C$) receiving one particle from $\rho$ (resp. $\sigma$) and $B$ one of each state, as depicted. $A, B, C$ measurement operators are mathematically represented by positive operators $A_{a|x}\in\cB(\cH_A), B_{b|y}\in\cB(\cH_{B_L}\otimes\cH_{B_R}), C_{c|z}\in\cB(\cH_C)$ such that $\sum_a A_{a|x}=\sum_b B_{b|y}=\sum_c C_{c|z}=\id$). The behavior of the experiment is described by a probability distribution $\vec{P}=\{p(abc|xyz)\}$ with $p(abc|xyz)=\tr{\rho\otimes\sigma}{A_{a|x}\otimes B_{b|y}\otimes C_{c|z}}.$ }
    \label{fig:ThreePartiesScenarios}
\end{figure}

\medskip
\noindent\textbf{Standard Bell scenario quantum correlations and NPA hierarchy}\\
Let us first quickly recall an existing result already known for the more standard Bell scenario involving three parties, $A,B,C$ measuring \emph{a unique source} $\tau$ (see Figure~\ref{fig:ThreePartiesScenarios}a).
In this scenario, a quantum correlation $\vec{P}=\{p(abc|xyz)\}$ (a \emph{tensor quantum correlation}) is given through the Born rule
\begin{align*}
    p(abc|xyz)=\tr{\tau}{A_{a|x}\otimes B_{b|y}\otimes C_{c|z}},
\end{align*}
where the projector over a pure state $\tau\in\cB(\cH_A\otimes\cH_B\otimes\cH_C)$ represents the state, the PVMs $A_{a|x}\in\cB(\cH_A), B_{b|y}\in\cB(\cH_B), C_{c|z}\in\cB(\cH_C)$ represent the performed measurement, and $\cH_A,\cH_B,\cH_C$ are arbitrary Hilbert spaces.

An outer approximation was found by Navascués, Pironio, and Acín in~\cite{NPA2008}.
Their method, called the NPA hierarchy, is the noncommutative counterpart of the Parrilo-Lassere hierarchy~\cite{ParriloHierarchy,LasserreHierarchy}. 
It provides a converging hierarchy of Semi-Definite Programs (SDPs) to any noncommutative polynomial optimisation problem over an archimedean semialgebraic set (a condition always satisfied in our context).

For the set of quantum correlation $\vec{P}$ obtainable in the scenario of Figure~\ref{fig:ThreePartiesScenarios}a, for a hierarchy level $n$, the NPA method asks for the existence of a \emph{moment matrix} (or Hankel matrix) $\Gamma_n$ compatible with $\vec{P}$ (see Equation~\eqref{eq:MomentMatrix}).
The non-existence of such $\Gamma_n$ proves that $\vec{P}$ cannot be a tensor quantum correlation. 
Importantly, this existence problem is an SDP, which is easily solvable on a computer, with certificates of non-existence.
Increasing the hierarchy level $n$ (which indexes the size of $\Gamma_n$) gives stronger tests, and as $n$ increases, this sequence of tests singles out the larger set of \emph{commutator quantum correlations} $\vec{Q}=\{q(abc|xyz)\}$ writing
\begin{align*}
    q(abc|xyz)=\tr{\tau}{A_{a|x} B_{b|y} C_{c|z}},
\end{align*}
where $\tau$ is a projector over a pure state, $A_{a|x}, B_{b|y}, C_{c|z}\in\cB(\cH)$ are PVMs commuting together and all these operators are now part of the same global Hilbert spaces $\cH$.
It was recently proven that for arbitrary infinite dimensional Hilbert spaces, the set of tensor quantum correlations is strictly included in the set of commutator quantum correlations~\cite{ji2021mip}: informally, $\{\vec{P}\}\subsetneq\{\vec{Q}\}$.
However, in the case where the Hilbert spaces are restricted to be finite dimensional, the two sets coincide~\cite{tsirelsonproblem}. 

Moreover,~\cite{ji2021mip} proved that no algorithm can provide converging outer approximations of the set of tensor quantum correlations (hence of the set of tensor bilocal quantum correlations). 
Note also that the set of \emph{commutator quantum correlations} is considered as a potential \emph{alternative axiomatisation} of the set of quantum correlations in the standard Bell scenario.

\subsection{Main Contribution}

In this work, we introduce two hierarchies of outer approximations of the set of \emph{Projector Bilocal Quantum Distributions} which converge to that set.

Our first hierarchy, the \emph{factorisation bilocal NPA Hierarchy}, is introduced in Section~\ref{sec:nonSDPNPAhierarchy}.  
It is based on the same moment matrix $\Gamma$ as in the standard NPA hierarchy, to which we impose an additional nonlinear factorisation constraint corresponding to the relation \begin{equation}\label{eq:IntroFactorisation}
    \tr{\tau}{\halpha\hgamma} = \tr{\tau}{\halpha} \cdot \tr{\tau}{\hgamma},
\end{equation} 
where $\halpha$ (resp. $\hgamma$) is a monomial in Alice's PVMs (resp. Charlie's PVMs), and $\tau$ should be thought of as $\tau=\rho\otimes\sigma$: see Definition~\ref{def:FactorBMMatrix}.
This results in a hierarchy of (impractical) non-SDP problems.

Our second hierarchy, the \emph{scalar extension bilocal hierarchy}, is introduced in Section~\ref{sec:bilocal_hierarchy}. It is a modified version of what was already proposed in~\cite{ScalarExtension2019}, where the proof of convergence was absent.
To this, we add a scalar extension letter, $\kappa_{\bomega}$, for each operator-matching word $\bomega$ to the standard NPA hierarchy. These scalar extension letters should model the scalar operator $\hkappa_{\bomega}=\tr{\tau}{\homega}\id_{\cH}$, in particular, they are required to commute with all other letters.
Inspired by the \emph{polarisation} technique in~\cite{ligthart2023inflation}, the factorisation constraints are imposed through the quadratic version of Equation~\eqref{eq:IntroFactorisation}:
\begin{align}
    \tr{\tau}{ ( \hkappa_{\balpha\bgamma} - \hkappa_{\balpha} \hkappa_{\bgamma} )^2 }.
\end{align}
This results in a hierarchy of SDPs, although it is worth noticing that the hierarchy is not constructive: the existence of a solution to all NPO levels implies only the existence, but one cannot obtain a concrete quantum model via the GNS construction in general.
Lastly, the hierarchy is similar to the \emph{state polynomial optimisation} in~\cite{klep2024state}, with the main difference that the polynomial equality constraints are imposed through polarisation instead of localising matrices.

Our main result is the following theorem (see Theorems~\ref{thm:ConvnonSDPBilocNPAHierarchy} and~\ref{thm:ConvScalarBilocNPAHierarchy}) of convergence of these two hierarchies to the same set of \emph{projector bilocal quantum correlations}, introduced below after the theorem. 

\begin{theorem*}[Convergence of the factorisation bilocal and scalar extension bilocal hierarchies]
Let $\vec{Q}=\{q(abc|xyz)\}$ be a probability distribution. The following are equivalent:
\begin{enumerate}[(i)]
    \item $\vec{Q}$ is a projector bilocal quantum correlation,
    \item $\vec{Q}$ passes all factorisation bilocal NPA hierarchy tests,
    \item $\vec{Q}$ passes all scalar extension bilocal hierarchy tests,
\end{enumerate} 
\end{theorem*}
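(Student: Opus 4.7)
The plan is to prove the equivalence (i) $\Leftrightarrow$ (ii) in the two natural directions, with (i) $\Rightarrow$ (ii) a direct verification and (ii) $\Rightarrow$ (i) following the classical NPA compactness-plus-GNS strategy, enriched to carry the nonlinear factorisation constraint \eqref{eq:IntroFactorisation} through to the limit.

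For (i) $\Rightarrow$ (ii): starting from a commutator bilocal realisation $q(abc|xyz)=\omega(A_{a|x}B_{b|y}C_{c|z})$ with pairwise commuting PVMs $A_{a|x},B_{b|y},C_{c|z}$ on a Hilbert space $\cH$ and a state $\omega$ whose restriction to Alice's and Charlie's monomials factorises, I would simply define $\Gamma[\alpha,\beta]:=\omega(\alpha^{*}\beta)$ on pairs of monomials. This is positive semidefinite, reproduces $\vec{Q}$, satisfies the linear identities encoding the PVM and pairwise commutation relations, and inherits the factorisation identity by construction. Truncation at any level $n$ then certifies that $\vec{Q}$ passes the corresponding factorisation bilocal NPA test.

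For (ii) $\Rightarrow$ (i): hypothesis (ii) provides, for each $n$, a moment matrix $\Gamma_{n}$ compatible with $\vec{Q}$ satisfying the PVM/commutation identities together with the factorisation constraints $\Gamma_{n}[1,\hat\alpha\hat\gamma]=\Gamma_{n}[1,\hat\alpha]\cdot\Gamma_{n}[1,\hat\gamma]$. Since all entries lie in $[0,1]$, a diagonal/Tychonoff extraction produces a sub-sequence along which the entries converge to a single infinite moment matrix $\Gamma$. Every constraint passes to this limit: the linear ones trivially, positive semidefiniteness through convergence of principal minors, and the nonlinear factorisation because for any fixed $\hat\alpha,\hat\gamma$ the product of converging scalars converges to the product of the limits. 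From $\Gamma$ one then runs the standard GNS-type construction: form the free $*$-algebra on the symbols $A_{a|x},B_{b|y},C_{c|z}$, quotient by the kernel of $(\alpha,\beta)\mapsto \Gamma[\alpha,\beta]$, complete to obtain a Hilbert space $\cH$, and represent the generators by left multiplication. The imposed equivalences ensure that these bounded operators form genuine PVMs which commute pairwise across the three parties, and the functional $\omega(\cdot):=\Gamma[1,\cdot]$ realises $\vec{Q}$ as $q(abc|xyz)=\omega(A_{a|x}B_{b|y}C_{c|z})$. The factorisation identity now holds for \emph{all} Alice/Charlie monomials $\hat\alpha,\hat\gamma$, which is precisely the defining property of a commutator bilocal quantum correlation.

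The main obstacle is not the compactness step, which is standard, but the bookkeeping required to ensure that the nonlinear factorisation is imposed at \emph{every} hierarchy level large enough for the pair $(\hat\alpha,\hat\gamma)$ to index an entry, so that it genuinely survives in the limit, and that the left-multiplication operators arising from the GNS quotient really are self-adjoint projectors summing to $\id$ for each fixed input rather than mere POVM elements. A secondary subtlety is to verify that pairwise commutation between the three families of generators is already built into the algebraic relations used to define the moment matrix so that it survives both the quotient and the completion. Once these technical checks are in place, the factorisation condition is exactly what upgrades the standard NPA convergence to the commutator \emph{bilocal} setting.
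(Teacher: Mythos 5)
Your (i)~$\Rightarrow$~(ii) direction is essentially fine, though you implicitly assume that the defining data of a Commutator Bilocal Quantum Distribution supplies a functional that \emph{already factorises} over Alice and Charlie monomials. In fact the definition (Definition~\ref{def:CommutatorBilocQDistrib}) only hands you projectors $\rho,\sigma$ with $\tau=\rho\sigma=\sigma\rho$ pure, the pairwise commutation of the PVMs, and the extra commutations $[\hA,\sigma]=[\rho,\hC]=0$. The factorisation
$\Gamma^n_{\balpha,\bgamma}=\Gamma^n_{\balpha,1}\Gamma^n_{1,\bgamma}$
still has to be \emph{derived} from these hypotheses, via the algebraic identity
$\tr{\tau}{\halpha^\dagger\hgamma}=\tr{\tau}{\halpha^\dagger\tau\hgamma}=\tr{\tau}{\halpha^\dagger}\tr{\tau}{\hgamma}$,
which uses the cyclicity of the trace, $\tau=\tau^2$, and the commutations with $\rho,\sigma$ repeatedly. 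This is a short but genuinely necessary computation that your sketch bypasses by stipulating the conclusion.

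The real gap, however, is in (ii)~$\Rightarrow$~(i). Your argument stops at: compactness gives an infinite moment matrix $\Gamma^\infty$ satisfying all constraints, GNS gives a Hilbert space $\cH$, a state $\tau=\ketbra{\phi_1}{\phi_1}$, and pairwise-commuting PVMs realising $\vec{Q}$, and the entries of $\Gamma^\infty$ still obey the factorisation constraint — then you declare this to be ``precisely the defining property of a commutator bilocal quantum correlation.'' It is not. Definition~\ref{def:CommutatorBilocQDistrib} requires you to \emph{exhibit} two projectors $\rho,\sigma$ on $\cH$ with $\tau=\rho\sigma=\sigma\rho$, $[\hA,\sigma]=[\rho,\hC]=0$, and the remaining compatibility conditions; the factorisation of $\Gamma^\infty$ is a numerical statement about the limit functional, and converting it into the existence of such operator-level data is exactly the nontrivial step. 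The paper supplies this as follows: set $V_{AB_L}=\Span{P(\{\halpha\})\ket{\phi_1}}_P$ and $V_{B_RC}=\Span{Q(\{\hgamma\})\ket{\phi_1}}_Q$, build orthogonal bases $\{\ket{v^{i0}}\}_i$, $\{\ket{v^{0j}}\}_j$ starting from $\ket{v^{00}}=\ket{\phi_1}$, and define $\sigma$ (respectively $\rho$) as the orthogonal projector onto $V_{AB_L}$ (respectively $V_{B_RC}$) — note the deliberate label swap. The factorisation constraint then forces $\braket{v^{i0}}{v^{0j}}=0$ for $i,j\neq 0$, so the two complementary subspaces beyond $\Span{\ket{\phi_1}}$ are mutually orthogonal, giving $\rho\sigma=\sigma\rho=\ketbra{\phi_1}{\phi_1}=\tau$; and one checks $[\rho,\hgamma]=[\halpha,\sigma]=0$ by computing matrix elements of the commutator in these bases. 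None of this appears in your sketch. Without the construction of $\rho,\sigma$ and the orthogonality argument, you have at best shown that $\vec{Q}$ is a Commutator \emph{Tripartite} Quantum Distribution whose limit functional happens to factorise across Alice/Charlie — not that it is a Commutator \emph{Bilocal} Quantum Distribution in the sense of Definition~\ref{def:CommutatorBilocQDistrib}.
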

\noindent
where the set of projector bilocal quantum correlations (see Definition~\ref{def:CommutatorBilocQDistrib}) is a strict relaxation of the set of tensor bilocal quantum, defined as
\begin{definition*}[Projector Bilocal Quantum Distributions]
$\vec{Q}$ is a \emph{Projector Bilocal Quantum Distribution} iff there exist a Hilbert space $\cH$, projectors (possibly infinite trace) $\rho, \sigma$, and PVMs $\hA, \hB, \hC\in\cB(\cH)$ such that
\begin{enumerate}[(i)]
    \item $q(abc|xyz)=\Tr{\rho\sigma\cdot \hA\hB\hC}$
    \item $\forall \hA, \hB, \hC, [\hA, \hB]=[\hB, \hC]=[ \hC,\hA]=0$ 
    \item $\tau:=\rho\sigma =\sigma\rho$ is a projector over a pure state (i.e. $\Tr{\tau}=1$ and $\tau^2=\tau$)
    \item $\forall \hA, \hC, [\hA,\sigma] = [\rho, \hC] = 0$.
\end{enumerate}
\end{definition*}

In this definition, the projective property of $\tau$ is necessary: as we discuss in Appendix~\ref{sec:AppendixPureMixFormulationInequivalence}, it cannot be obtained by purification, contrary to tensor bilocal quantum distributions. 
We also provide a stopping criterion for our two hierarchies corresponding to the case where the reconstructed Hilbert space $\cH$ is finite-dimensional (see Theorem~\ref{thm:StopCriteriaFinite}).
Lastly, in Section~\ref{sec:OtherNoLoopNetworksInflationNPA} we compare our two hierarchies to the known generalisation, the inflation-NPA hierarchy~\cite{Inflation}, showing it is stricter than our two hierarchies (Theorem~\ref{thm:InflationNPAIsTighter} and explore generalisations to more general quantum network scenarios.

\subsection{First version of this manuscript, discussion, and open problems}\label{sec:discussion}
As explained above, two (generally nonequivalent) axiomatisations exist for quantum correlations in the standard Bell scenario of Fig.~\ref{fig:ThreePartiesScenarios}a.  
The most familiar is the set of \emph{tensor-product quantum correlations}; the other, standard in algebraic quantum field theory, is the set of \emph{commuting-observable quantum correlations}.  
The NPA hierarchy~\cite{NPA2008} provides a converging outer approximation to the latter set.

In this work we target the bilocal scenario of Figure~\ref{fig:ThreePartiesScenarios}b.  
We introduce two hierarchies, one of which can be cast as a SDP program, that converge to the set of \emph{Projector Bilocal Quantum Distributions} and compare them with the inflation-NPA hierarchy~\cite{Inflation}, showing that, in the bilocal case, the latter is generally stricter.  

\medskip
\noindent\textbf{First draft of this work~\cite{renou2022bilocalv1}, developments in~\cite{ligthart2023inflation,klep2024state}, and present work.}  
In the first version of this manuscript~\cite{renou2022bilocalv1}, two of us introduced the set of {Projector Bilocal Quantum Distributions} and claimed to have obtained a first hierarchy that cannot be cast as an SDP, and a second hierarchy that can be cast as an SDP program that both converge to the set of {Projector Bilocal Quantum Distributions}.
While our first claim was correct, the second was wrong (as noticed by Laurens T. Ligthart and David Gross).

Hence, this first draft left two questions open:  
(i) Does our projector-based bilocal set coincide with the natural $C^*$-algebraic one, thereby yielding a bilocal form of Tsirelson’s theorem (Appendix~\ref{sec:AppendixBilocalTsirelson})?  
(ii) Can our first hierarchy be adapted into a new one which can be cast as an SDP program, still converging to the same projector-based bilocal set?

Two subsequent papers~\cite{ligthart2023inflation,klep2024state}, both inspired by the approach followed in this first draft, resolved these issues:
\begin{enumerate}
  \item \emph{Ligthart \& Gross}~\cite{ligthart2023inflation} proved that the natural $C^*$-algebraic definition of bilocal correlations is equivalent to our projector definition (i) (see Appendix~\ref{sec:AppendixBilocalTsirelsonSolved}), thereby solving the bilocal Tsirelson problem and showing that, in finite dimensions, commutator-based and tensor-based models coincide.  
        They also introduced a \emph{polarisation hierarchy} whose convergence, together with that of the inflation-NPA hierarchy, settles the second open question (ii).  
        The quadratic constraints at the heart of their polarisation technique inspired the convergent scalar-extension hierarchy now given in Definition~\ref{def:BSE_hierarchy_Pol}. 
  \item \emph{Klep {et al.}}~\cite{klep2024state} pushed our scalar extension polynomial construction (Appendix~\ref{sec:ScaExtPoly}) further, casting it as a general theory of \emph{state polynomial optimisation}.  
        They proved powerful Positivstellens\"atze for state polynomials in full generality, obtained the associated convergent SDP hierarchies, independently solving (ii). They also demonstrated its numerical implementability.  
        Definition~\ref{def:BSE_hierarchy} is a concrete instance of their framework.
\end{enumerate}
Our present work corrects the incomplete convergence proof for the bilocal scalar extension hierarchy that appeared in our first preprint~\cite{renou2022bilocalv1}, thanks to the discussion with the authors of~\cite{ligthart2023inflation, klep2024state}.
The corrected scalar extension hierarchy (Definition~\ref{def:BSE_hierarchy_Pol}) can be seen as an instance of state polynomial optimisation where the factorisation constraints are imposed via the polarisation technique.

\medskip
\noindent\textbf{Outlook.}
Taken together, the factorisation NPA, polarisation, and scalar extension/state polynomial hierarchies now give a nearly complete SDP characterisation of quantum correlations in bilocal networks.
However, the case of more general quantum networks remains open.
For instance, in the triangle scenario (Figure~\ref{fig:GeneralisedScenarios}a), we note that all hierarchies discussed here are trivial, as no additional factorisation condition can be imposed beyond the standard NPA hierarchy.
The inflation-NPA hierarchy~\cite{QuantumInflation2021} therefore remains the only non-trivial general candidate (Appendix~\ref{sec:AppendixSharedRandBitInflationNPA}).  
Whether it converges to the $C^*$-algebraic set in arbitrary networks is an open problem: the bilocal proof of convergence in~\cite{ligthart2023inflation} fundamentally relies on Arveson’s Radon–Nikodym theorem~\cite{arveson1969subalgebras}, and extending that argument to general networks to ``split'' multiple parties simultaneously and consistently is far from straightforward.  
Establishing (or refuting) the completeness of the inflation-NPA hierarchy for general quantum networks thus remains a key challenge for future work.

\subsection{Content}

Section~\ref{sec:TripartiteQDistribStandardNPA} formally introduces the standard NPA hierarchy and the Projector Bilocal Quantum Distributions (Definition~\ref{def:CommutatorBilocQDistrib}). 
It then introduces the factorisation bilocal NPA hierarchy and demonstrates its convergence to the set of Projector Bilocal Quantum Distributions. 
Section~\ref{sec:ScalarExtHierarchyMainSec} first introduces the original scalar extension hierarchy of~\cite{ScalarExtension2019} and illustrates its failure to converge. 
It then proposes a modified scalar extension hierarchy and proves its convergence to the same set of Projector Bilocal Quantum Distributions.
Section~\ref{sec:OtherNoLoopNetworksInflationNPA} discusses the connection of our two hierarchies to the inflation-NPA hierarchy~\cite{QuantumInflation2021} and the potential generalisation of these results to larger networks in light of the inflation-NPA hierarchy.

\section{Generalising NPA hierarchy to quantum bilocal network scenario}\label{sec:TripartiteQDistribStandardNPA}
This section aims to extend the standard NPA hierarchy to the quantum bilocal scenario. Section~\ref{sec:standardNPA} provides a brief recap of the Bell scenario and the standard NPA hierarchy as preliminary background. In Section~\ref{sec:basicnotation}, the notation for noncommutative polynomials is introduced, which will be used throughout the paper. Section~\ref{sec:BilocQDistrib} delves into the bilocal scenario and its quantum models, namely the tensor product vs. commutator formulations. Lastly, Section~\ref{sec:nonSDPNPAhierarchy} presents the factorisation bilocal NPA hierarchy, a generalisation of the standard NPA hierarchy for bilocal scenarios, and demonstrates its convergence to the set of Projector Bilocal Quantum Distributions, marking the first main result of the paper.

\subsection{Tripartite Bell scenario and standard NPA hierarchy}\label{sec:standardNPA}
As a preliminary background and to fix the notation, we briefly discuss the standard NPA hierarchy introduced by Navascués, Pironio, and Acín~\cite{NPA2008} for the outer approximation of the quantum distribution obtainable in the standard tripartite Bell scenario in Figure~\ref{fig:ThreePartiesScenarios}.a. 

We first introduce the set of Tensor Tripartite Quantum Distributions, which are all the behaviours feasible in this scenario according to quantum theory:

\begin{definition}[Tensor Tripartite Quantum Distributions]\label{def:TensorTripartiteQDist}
Let $\vec{P}=\{p(abc|xyz)\}$ be a three-party probability distribution. 
We say that $\vec{P}$ is a \emph{Tensor Tripartite Quantum Distribution} iff there exist a composite Hilbert space $\cH = \cH_A \otimes \cH_B\otimes \cH_C$, a projector over a pure state $\tau$ acting over $\cH$, and PVMs $\{\hA\}$ over $\cH_A$, $\{\hB\}$ over $\cH_B$ and $\{\hC\}$ over $\cH_C$ such that 
\begin{align*}
    p(abc|xyz)=\tr{\tau}{\hA\otimes\hB\otimes\hC}.
\end{align*}
\end{definition}

The outer approximations of this set are given by the NPA hierarchy method, which we now sketch.
First, given that $\vec{P}$ is a Tensor Tripartite Quantum Distribution with $p(abc|xyz)=\tr{\tau}{\hA\otimes\hB\otimes\hC}$, one can construct the $n\geq 3$ order moment matrix $\Gamma^n$ as:
\begin{equation}
\kbordermatrix{
                          & \id & A_{a|x} & B_{b|y} & C_{c|z}              & A_{a|x}A_{a'|x'} & A_{a|x}B_{b|y}   &\cdots\\
\id                       & 1   &         &         &                      &                  & \tr{\tau}{\hA\hB}       \\
(A_{a|x})^\dagger         &     &         & \tr{\tau}{\hA \hB}             &                  &                  &      \\
(B_{b|y})^\dagger         &     &         &  \tr{\tau}{\hB}                &                  &                  &      \\
(C_{c|z})^\dagger         &     &         &                                &                  &                  &      \\
(A_{a|x}A_{a'|x'})^\dagger&     &         &\tr{\tau}{\hat{A}_{a'|x'}\hA\hB}&                  &                  &      \\
(A_{a|x}B_{b|y})^\dagger  &     &         &                                &                  &                  &      \\
\cdots                    &     &         &                                &                  &                  & 
},
\label{eq:MomentMatrix}
\end{equation}
where the rows and columns are labelled by all formal monomials in $\{A_{a|x},B_{b|y},C_{c|z}\}$ of length at most $n$, and the entries are the trace over $\tau$ of the product of the monomials labelling the entries (we only indicated some of the coefficients and omitted tensor products for readability, and used the relation $\hB^2=\hB$). $\Gamma^n$ satisfies several properties:
\begin{enumerate}[(i)]
\item $\Gamma^n_{1,1} = 1$.
\item $\Gamma^n$ is symmetric positive semidefinite.
\item $\Gamma^n$ satisfies several linear constraints, such as $\Gamma^n_{(A_{a|x})^\dagger, B_{b|y}}=\Gamma^n_{\id, A_{a|x}B_{b|y}}$.
\end{enumerate}
Moreover, if $\Gamma^n_{\id, A_{a|x}B_{b|y}C_{c|z}}=p(abc|xyz)$, we say that $\Gamma^n$ is compatible with $\vec{P}$.

Hence, if $\vec{P}$ is a Tensor Tripartite Quantum Distribution, then for all $n\geq 3$ there exists a moment matrix $\Gamma^n$ compatible with $\vec{P}$. 
This provides a hierarchy of outer approximations of the set of Tensor Tripartite Quantum Distributions.
Given $\vec{P}$ and some $n\geq 3$, if there is no moment matrix $\Gamma^n$ compatible with $\vec{P}$, then $\vec{P}$ is not a Tensor Tripartite Quantum Distributions.
Importantly, the problem of finding such $\Gamma_n$ is an SDP, which can be solved on a computer, with certificates of nonexistence.

A higher hierarchy level $n$ provides more accurate tests (as a compatible matrix $\Gamma^{n-1}$ can always be extracted from $\Gamma^n$). 
However, as this hierarchy is only sensitive to the algebraic relations between the operators (such as $\hB^2=\hB$ or $[\hA,\hC]=0$), this hierarchy of outer approximation does not obviously characterise the set of Tensor Tripartite Quantum Distributions.
Instead, it was shown to converge to the larger set of Commutator Tripartite Quantum Distributions (\cite{NPA2008}):

\begin{theorem}[Convergence of the standard NPA hierarchy]\label{thm:ConvStandardNPAHierarchy}
Suppose that $\vec{Q}=\{q(abc|xyz)\}$ is a probability distribution for Alice, Bob, and Charlie. Then $\vec{Q}$ passes all the standard NPA hierarchy tests iff $\vec{Q}$ is a Commutator Tripartite Quantum Distribution,
\end{theorem}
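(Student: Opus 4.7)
The easy direction, that any Commutator Tripartite Quantum Distribution passes every NPA test, is handled constructively: given a commuting realisation $(\cH,\tau,\hA,\hB,\hC)$, define $\Gamma^n$ by setting its $(\alpha,\beta)$ entry to $\tr{\tau}{\alpha^\dagger\beta}$ for formal monomials $\alpha,\beta$ of length at most $n$. The three listed properties (normalisation, positive semidefiniteness, the linear identifications coming from $\alpha^\dagger\beta=\gamma^\dagger\delta$ as abstract words modulo the relations) are then immediate, and compatibility with $\vec Q$ holds by construction. So the whole content is in the converse.

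For the converse direction, I would follow a GNS-type construction. Suppose $\vec Q$ passes every level, so for each $n\ge 3$ there is a moment matrix $\Gamma^n$ compatible with $\vec Q$. Since each entry of $\Gamma^n$ is the value of some linear functional evaluated on a product of PVMs, it lies in $[-1,1]$; by a diagonal/compactness argument one can pass to a subsequence so that, for every fixed formal word $w$ in the generators $\{A_{a|x},B_{b|y},C_{c|z}\}$, the quantity $\Gamma^{n}_{\id,w}$ converges. Call the limit $L(w)$ and extend by linearity. This gives a unital linear functional $L$ on the free $*$-algebra $\mathcal A$ generated by the symbols modulo the ``standard'' algebraic relations: $A_{a|x}^\dagger=A_{a|x}$, $A_{a|x}A_{a'|x}=\delta_{a,a'}A_{a|x}$, $\sum_a A_{a|x}=\id$ (and analogously for $B,C$), together with cross-party commutation $[A_{a|x},B_{b|y}]=[B_{b|y},C_{c|z}]=[C_{c|z},A_{a|x}]=0$. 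The linear identifications built into the moment matrices ensure $L$ is well-defined on the quotient, and positivity of every $\Gamma^n$ forces $L(p^\dagger p)\ge 0$ for all $p\in\mathcal A$.

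Now I would run the GNS construction on $(\mathcal A,L)$: quotient $\mathcal A$ by the null ideal $N=\{p:L(p^\dagger p)=0\}$, complete with respect to the inner product $\langle p,q\rangle:=L(p^\dagger q)$ to obtain a Hilbert space $\cH$, and represent elements of $\mathcal A$ as left-multiplication operators. The image $\ket{\Omega}$ of $\id$ is a cyclic unit vector, so $\tau:=\ketbra{\Omega}{\Omega}$ is a projector over a pure state, and one reads off $q(abc|xyz)=\sandwich{\Omega}{\hA\hB\hC}{\Omega}=\tr{\tau}{\hA\hB\hC}$ directly from the definition of $L$. The commutation relations carry over from $\mathcal A$ to the operators on $\cH$, and the PVM relations show that each $\hA$ is a self-adjoint idempotent that, together with the others indexed by the same $x$, sums to the identity.

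The main technical obstacle is boundedness. Left multiplication by $\hA$ is only defined on a dense subspace a priori, and one must show it extends to a bounded (in fact contractive and projection-valued) operator on all of $\cH$. The key point is that $\hA$ is a selfadjoint idempotent in $\mathcal A$ and so $(\hA)^\dagger\hA=\hA$, hence $\|\hA p+N\|^2=L(p^\dagger \hA p)\le L(p^\dagger p)=\|p+N\|^2$ using that $L((p-\hA p)^\dagger(p-\hA p))\ge 0$ together with the projection identity; the same argument works for $\hB,\hC$. After this contractivity step, one obtains genuine bounded PVMs on $\cH$ and, by density of $\mathcal A\ket{\Omega}$, the cross-party commutators vanish as operator identities. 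A minor subtlety is showing that $L$ is defined on the \emph{whole} algebra from the truncations $\Gamma^n$, but the diagonal extraction together with the fact that every monomial has some finite length makes this routine.
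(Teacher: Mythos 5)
Your proposal is correct and follows essentially the same route as the NPA~2008 proof that the paper cites (and that the paper itself mirrors in the Appendix~\ref{sec:nonSDPhardproof} proof of Theorem~\ref{thm:ConvnonSDPBilocNPAHierarchy}): pass to a limiting moment matrix by a compactness/diagonal argument (the paper invokes Banach--Alaoglu for the same purpose), interpret it as a unital symmetric positive linear functional on the free $*$-algebra modulo the PVM and cross-party commutation relations, run GNS, and use the self-adjoint idempotent relations $l^\dagger l = l$ together with $L(p^\dagger(\id-l)p)\ge 0$ to get contractivity and hence well-defined bounded PVMs on the completion. The one point worth being slightly more explicit about is that the diagonal entries $\Gamma^n_{\bomega,\bomega}$ are uniformly bounded: this follows inductively from the Hankel identifications and the projector relation $l^2=l$ (e.g.\ $\Gamma^n_{1,\bnu^\dagger l\bnu}\le\Gamma^n_{1,\bnu^\dagger\bnu}$ by positivity of $\Gamma^n$ applied to $(\id-l)\bnu$), which is what makes the ``entries lie in $[-1,1]$'' assertion and the subsequent compactness argument go through.
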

\noindent
where the Commutator Tripartite Quantum Distributions are defined as:
\begin{definition}[Commutator Tripartite Quantum Distributions]\label{def:CommutatorTripartiteQDist}
$\vec{Q}$ is a \emph{Commutator Tripartite Quantum Distribution} iff there exist a Hilbert space $\cH$, a projector over a pure state  $\tau$, and PVMs $\{\hA\}$, $\{\hB\}$ and $\{\hC\}$ over $\cH$ such that
\begin{enumerate}[(i)]
    \item $q(abc|xyz)=\tr{\tau}{\hA\hB\hC}$
    \item $\forall \hA, \hB, \hC,$ we have $[\hA, \hB]=[\hB, \hC]=[\hC, \hA]=0$.
\end{enumerate}
\end{definition}

Any Tensor Tripartite Quantum Distribution is trivially a Commutator Tripartite Quantum Distribution\footnote{The new operators $\hA, \hB, \hC$ are constructed as the original ones, padding with identities over the other local Hilbert spaces}. 
The converse, known as Tsirelson's problem, was first conjectured but turns out to be false (see~\cite{ji2021mip}, where it was also proven that the set of physical Tensor Tripartite Quantum Distributions is not computable).
Nevertheless, for finite-dimensional quantum systems, these two definitions are equivalent, thanks to an inductive result of Tsirelson's Theorem (see Theorem~\ref{thm:tsirelson} and Remark~\ref{rem:InductiveTsirelson} in Appendix~\ref{sec:AppendixBilocalTsirelson}).

\subsection{Noncommutative polynomial formulation}\label{sec:basicnotation}
We now present the standard \emph{canonical abstraction} procedure capable of performing this formalisation of the NPA hierarchy (based on noncommutative polynomials, see~\cite{BratelliRobinson}), introducing abstract \emph{letters} for each potential PVM elements of the parties \cite{pironio2010convergent}. 

We introduce the alphabets $\underline{a} = (a_1, \cdots, a_I)$, $\underline{b} = (b_1, \cdots, b_J)$, and $\underline{c} = (c_1, \cdots, c_K)$ respectively composed of the letters $\{\A\}$, $\{\B\}$, and $\{\C\}$, one letter for each PVM element $\{\hA\}$, $\{\hB\}$, and $\{\hC\}$ ($I,J,K$ are the respective cardinality of Alice, Bob, and Charlie PVMs operators).

The products of multiple measurement operators then correspond to combinations of letters $l_i\in \underline{a}\cup \underline{b}\cup \underline{c}$, referred to as words denoted by bold Greek letters $\bomega$, forming the set $\langle \underline{a}, \underline{b}, \underline{c} \rangle$. 
A word consisting of the letters in $\underline{a}$ is written as $\balpha$, forming the set $\monoid{a}$. Similarly, we write $\bbeta \in \monoid{b}$ associated with the alphabet $\underline{b}$ and $\bgamma \in \monoid{c}$ associated with $\underline{c}$. $1$ is the empty word with no letters.

We further impose the algebraic relations satisfied by the measurement operators on the letters. 
More precisely, as the measurement operators are projectors, measurement operators of distinct parties commute with each other and are self-adjoint, and hence we impose 
\begin{align}
    A_{a|x}A_{a'|x} = \delta_{a, a'} A_{a|x}, B_{b|y}A_{b'|y} = \delta_{b, b'} B_{b|y}, C_{c|z}C_{c'|z} = \delta_{c, c'} C_{c|z} \label{eq:projection_constr}  \\
    [a_i,b_j] = [a_i,c_k] = [b_j,c_k] = 0,  \label{eq:comm_constr} \\
    l_i^{\dagger} = l_i, (l_il_j)^{\dagger} = l_j^{\dagger}l_i^{\dagger} = l_jl_i, \label{eq:dagger_constr}
\end{align} 
where the operation $\dagger$ is an involution on the letters of a word, which stands for the conjugate transpose operation on operators.
These rules imply that the words $\balpha$, $\bbeta$, $\bgamma$ commute with each other. 
Therefore, any word $\bomega$ admits a unique minimal form $\bomega = \balpha\bbeta\bgamma$ where no letter is squared. 
Equality over words is checked in this minimal form. The length of $\bomega$, denoted by $|\bomega|$, is the number of letters in this minimal form. 

The entries in the moment matrix $\Gamma^n$ are indexed with the words $\bomega$ such that $|\bomega| \leq n$. 
Additionally, the ring of noncommutative polynomials $\mathbb{T}^{abc} = \mathbb{R} \langle \underline{a}, \underline{b}, \underline{c} \rangle$ arises; see Appendix~\ref{sec:ncPolyFormal}.

\subsection{Bilocal quantum distribution}\label{sec:BilocQDistrib}

In this subsection, we initially introduce the set of Bilocal Quantum Tensor Distributions (Definition~\ref{def:TensorBilocQDistrib}), representing the set of correlations attainable in the bilocal scenario where two independent quantum sources are distributed to three parties (see Figure~\ref{fig:ThreePartiesScenarios}b). Following this, we present the set of Bilocal Quantum Commutator Distributions (Definition~\ref{def:CommutatorBilocQDistrib}), serving as an outer approximation of the aforementioned set. These distributions will be further characterised by two hierarchies introduced in Section~\ref{sec:nonSDPNPAhierarchy} and Section~\ref{sec:ScalarExtHierarchyMainSec}.

According to quantum theory, the set of correlations that can be obtained in the bilocal scenario is
\begin{definition}[Tensor Bilocal Quantum Distributions]\label{def:TensorBilocQDistrib}
Let $\vec{P}=\{p(abc|xyz)\}$ be a three-party probability distribution. 
We say that $\vec{P}$ is a \emph{Tensor Bilocal Quantum Distribution} iff there exist a composite Hilbert space $\cH_A \otimes \cH_{B_L} \otimes \cH_{B_R} \otimes \cH_C$, two projectors over a pure state  $\rho_{AB_L}$ acting over $\cH_A \otimes \cH_{B_L}$ and $\sigma_{B_RC}$ acting over $\cH_{B_R} \otimes \cH_C$, and PVMs $\{\hA\}$ over $\cH_A$, $\{\hB\}$ over $\cH_{B_L} \otimes \cH_{B_R}$, and $\{\hC\}$ over $\cH_C$ such that 
\begin{align*}
    p(abc|xyz)=\tr{\rho_{AB_L}\otimes\sigma_{B_RC}}{\hA\otimes\hB\otimes\hC}.
\end{align*}
\end{definition}

Similar to the standard NPA hierarchy, we will only consider the abstract algebraic commutativity relations among the states and measurement operators. 
This leads to the following definition of Projector Bilocal Quantum Distributions.

\begin{definition}[Projector Bilocal Quantum Distributions]\label{def:CommutatorBilocQDistrib}
Let $\vec{Q}=\{q(abc|xyz)\}$ be a three-party probability distribution. 
We say that $\vec{Q}$ is a \emph{Projector Bilocal Quantum Distribution} iff there exist a Hilbert space $\cH$, a projector over a pure state $\tau$, two projectors (possibly not trace-class) $\rho$ and $\sigma$, and PVMs $\{\hA\}$, $\{\hB\}$, and $\{\hC\}$ over $\cH$ such that
\begin{enumerate}[(i)]
    \item $\tau = \rho\cdot\sigma =\sigma\cdot\rho$,
    \item $q(abc|xyz)=\tr{\tau}{\hA\hB\hC}$,
    \item $[\hA, \hB]=[\hB, \hC]=[ \hC,\hA]=0$,
    \item $[\hA,\sigma] = [\rho, \hC] = 0$,
\end{enumerate}
for all $\hA, \hB, \hC$. 
\end{definition}

The set of Tensor Bilocal Quantum Distributions of Definition~\ref{def:TensorBilocQDistrib} is trivially contained in the set of Projector Bilocal Quantum Distributions of Definition~\ref{def:CommutatorBilocQDistrib}. 
Due to~\cite{ji2021mip}, the converse is incorrect, and the physical set of Tensor Bilocal Quantum Distributions is not computable. 
Nonetheless, the equivalence between these two sets in the finite-dimensional setting is shown by~\cite[Corollary~9]{ligthart2023inflation} soon after the first draft of our manuscript.

\begin{remark}\label{rem:PureMixFormulationCommtutator}
In Definition~\ref{def:CommutatorBilocQDistrib}, we ask that $\tau$ is a projector over a pure state, which can be deduced from the fact that $\rho, \sigma$ are projectors and $\Tr{\tau}=1$. 
In Definition~\ref{def:TensorTripartiteQDist}, we already asked that $\rho, \sigma$, hence $\tau$, should be projectors over pure states. 
Note that with tensor products this condition is not necessary: due to purification, it is sufficient to ask that these states are density operators. Indeed, if $\vec{P}$ satisfies the conditions of Definition~\ref{def:TensorTripartiteQDist} but with $\rho, \sigma$ only positive and trace $1$, one can always enlarge the Hilbert spaces to obtain that $\vec{P}$ satisfies the conditions of Definition~\ref{def:TensorTripartiteQDist} with purifications $\rho', \sigma'$ of $\rho, \sigma$ which are now projectors over pure states.
In Appendix~\ref{sec:AppendixPureMixFormulationInequivalence}, we show that such a purification procedure is no longer possible in the commutator context.
In other words, the density operator (mixed state) formulation is \emph{not} equivalent to the pure state formulation in the commutator-based quantum model.
\end{remark}

\subsection{Factorisation bilocal NPA hierarchy}\label{sec:nonSDPNPAhierarchy}
We now introduce the factorisation bilocal NPA hierarchy (Definitions~\ref{def:FactorBMMatrix} and~\ref{def:nonSDPNPAtest}) and show its convergence to the set of Projector Bilocal Quantum Distributions (Theorem~\ref{thm:ConvnonSDPBilocNPAHierarchy}). 

We use the canonical abstraction formalism of Section~\ref{sec:basicnotation}, with alphabets $\underline{a}$, $\underline{b}$, and $\underline{c}$ in the letters $\{\A\}$, $\{\B\}$, and $\{\C\}$, which are in one-to-one correspondence with the sets of PVMs $\{\hA\}$, $\{\hB\}$, and $\{\hC\}$, respectively. Note that in the bilocal scenario, for $\halpha,\hgamma$ respectively monomials in the PVMs of Alice and Charlie, with $\tau=\rho_{AB_L}\otimes\sigma_{B_RC}$, we have nonlinear constraints of the form 
\begin{align}\label{eq:NonLinearCondition}
    \tr{\tau}{\halpha\hgamma} = \tr{\tau}{\halpha} \cdot \tr{\tau}{\hgamma}.
\end{align}
This motivates the following definition:

\begin{definition}[Factorisation Bilocal Moment Matrix]\label{def:FactorBMMatrix}
Fix $n \in \mathbb{N}$, let $\Gamma^n$ be a square matrix indexed by all words $\bomega \in \langle \underline{a}, \underline{b}, \underline{c} \rangle$ of length $|\bomega|\leq n$. We say that $\Gamma^n$ is a \emph{Factorisation Bilocal Moment Matrix of order $n$} if
\begin{enumerate}[(i)]
\item $\Gamma^n_{1,1} = 1$.
\item $\Gamma^n$ is positive.
\item it satisfies all the linear constraints 
\begin{align}\label{eq:LinearConstraints}
  \Gamma^n_{\bomega,\bnu}=\Gamma^n_{\bomega',\bnu'}
\end{align}
whenever $\bomega^{\dagger} \bnu = {\bomega'}^{\dagger} \bnu'$.
\item it satisfies the additional nonlinear factorisation constraints
\begin{align}\label{eq:nonSDPConstraints}
  \Gamma^n_{\balpha,\bgamma}=\Gamma^n_{\balpha, 1}\cdot\Gamma^n_{1,\bgamma},
\end{align}
where $\balpha \in \monoid{a}$ and $\bgamma \in \monoid{c}$.
\end{enumerate}
We further say that $\Gamma^n$ is compatible with the tripartite distribution $\vec{Q}$ iff $\Gamma^n_{1,\A\B\C}=q(abc|xyz)$.
An infinite matrix $\Gamma^\infty$ is said to be a \emph{Factorisation Bilocal Moment Matrix} iff all of its principal extracted matrices are Factorisation Bilocal Moment Matrices of some finite order. 
\end{definition}
Note that if one drops condition (iv), this definition becomes the standard NPA Moment Matrix of Section~\ref{sec:standardNPA}. 
It is straightforward to show that any Bilocal Tensor Quantum Distribution admits a Factorisation Bilocal Moment Matrix $\Gamma^n$ for all $n$. This yields the following hierarchy:

\begin{definition}[Factorisation bilocal NPA hierarchy] \label{def:nonSDPNPAtest}
Let $\vec{Q}=\{q(abc|xyz)\}$ be a tripartite probability distribution. We say that $\vec{Q}$ passes the \emph{factorisation bilocal NPA hierarchy} if for all integers $n\geq 3$, there exists a Factorisation Bilocal Moment Matrix $\Gamma^n$ of order $n$ that is compatible with $\vec{Q}$. 
\end{definition}
We are ready for our first main result.

\begin{theorem}[Convergence of the factorisation bilocal NPA hierarchy]\label{thm:ConvnonSDPBilocNPAHierarchy}
Let $\vec{Q}=\{q(abc|xyz)\}$ be a tripartite probability distribution.
Then $\vec{Q}$ passes all the factorisation bilocal NPA hierarchy tests iff $\vec{Q}$ is a Projector Bilocal Quantum Distribution.
\end{theorem}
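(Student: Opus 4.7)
The two implications are treated separately; the nontrivial direction relies on a GNS-type construction. For $(i)\Rightarrow(ii)$, starting from any realisation $(\cH, \rho, \sigma, \hA, \hB, \hC)$ witnessing that $\vec{Q}$ is a Commutator Bilocal Quantum Distribution, I would, for every $n$, define $\Gamma^n_{\bomega,\bnu} := \Tr{\tau\hbomega^\dagger\hbnu}$, where $\hbomega$ denotes the operator obtained by multiplying the PVM operators associated with the letters of $\bomega$. Normalisation, positivity and the linear constraints (i)--(iii) of Definition~\ref{def:FactorBMMatrix} then follow as in the standard NPA construction. The new ingredient is the factorisation (iv), which I would obtain from the algebraic identity
\begin{align*}
\tau\halpha^\dagger\hgamma\tau = \rho\sigma\halpha^\dagger\hgamma\rho\sigma = \rho\halpha^\dagger\sigma\rho\hgamma\sigma = \rho\halpha^\dagger\tau\hgamma\sigma,
\end{align*}
using $[\halpha,\sigma] = [\hgamma,\rho] = 0$ and $\sigma\rho = \rho\sigma = \tau$; substituting $\tau = \ket{\psi}\bra{\psi}$ and exploiting $\tau^2=\tau$ together with cyclicity of the trace then yields $\Tr{\tau\halpha^\dagger\hgamma} = \Tr{\tau\halpha^\dagger}\Tr{\tau\hgamma}$.

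For $(ii)\Rightarrow(i)$, I would first extract an infinite Factorisation Bilocal Moment Matrix $\Gamma^\infty$ compatible with $\vec{Q}$ by a Tychonoff/diagonal compactness argument: the entries of every compatible $\Gamma^n$ are uniformly bounded by $1$ (the letters represent projections), and all the defining conditions are closed under pointwise limits, including the nonlinear constraint (iv) because the product of limits is the limit of products. Next, I would apply the standard GNS-type construction: treat $\langle\bomega,\bnu\rangle := \Gamma^\infty_{\bomega,\bnu}$ as a positive semidefinite sesquilinear form on the noncommutative polynomial ring $\mathbb{T}^{abc}$, quotient by its null ideal, and complete to a Hilbert space $\cH$ with distinguished vector $\ket{\psi} := [1]$. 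Left multiplication by letters then defines $\hA, \hB, \hC$; the linear constraints and algebraic relations encoded in $\Gamma^\infty$ make them PVMs that commute pairwise across parties, and $q(abc|xyz) = \bra{\psi}\hA\hB\hC\ket{\psi}$ by construction.

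The decisive new step is the construction of $\rho$ and $\sigma$. Let $\mathcal{A}$ and $\mathcal{C}$ be the unital $*$-subalgebras generated by Alice's and Charlie's PVMs; I would set $\sigma := P_{\overline{\mathcal{A}\ket{\psi}}}$ and $\rho := P_{\overline{\mathcal{C}\ket{\psi}}}$. Because $\mathcal{A}$ is generated by self-adjoints, both $\overline{\mathcal{A}\ket{\psi}}$ and its orthogonal complement are $\mathcal{A}$-invariant, hence $\sigma$ commutes with every $\hA$; symmetrically for $\rho$ and $\hC$. The factorisation constraint (iv) is precisely what is needed to prove the key identity: for every $\halpha \in \mathcal{A}$, the orthogonal projection of $\halpha\ket{\psi}$ onto $\overline{\mathcal{C}\ket{\psi}}$ equals $\bra{\psi}\halpha\ket{\psi}\cdot\ket{\psi}$. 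Combined with the trivial observation that both $\rho\sigma$ and $\ket{\psi}\bra{\psi}$ vanish on $\overline{\mathcal{A}\ket{\psi}}^\perp$, this forces $\rho\sigma = \sigma\rho = \ket{\psi}\bra{\psi} =: \tau$, from which the remaining requirements of Definition~\ref{def:CommutatorBilocQDistrib} follow automatically ($\tau^2=\tau$, $\Tr{\tau}=1$, $\rho\ket{\psi} = \sigma\ket{\psi} = \ket{\psi}$).

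The main obstacle is exactly this last step. The classical NPA argument only produces a single vector $\ket{\psi}$ with $\tau = \ket{\psi}\bra{\psi}$, whereas here one must repackage the nonlinear factorisation of Definition~\ref{def:FactorBMMatrix}(iv) as a geometric statement about two closed subspaces of the GNS Hilbert space whose orthogonal projections commute and intersect in exactly $\mathbb{C}\ket{\psi}$. Everything else in the argument --- the compactness extraction, the GNS construction, the commutation across parties, and the easy direction --- is routine by comparison.
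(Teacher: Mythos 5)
Your proposal is correct and takes essentially the same approach as the paper's proof: the easy direction via the same commutativity/idempotency/cyclicity manipulation of the trace, and the hard direction via Banach--Alaoglu-type compactness, the GNS construction, and then defining $\sigma$ and $\rho$ as the orthogonal projectors onto the cyclic subspaces generated by Alice's and Charlie's algebras on $\ket{\psi}$, with the factorisation constraint forcing those subspaces to meet only along $\mathbb{C}\ket{\psi}$. The only cosmetic difference is that you phrase the key orthogonality argument intrinsically as a projection identity, whereas the paper writes it out with explicit orthonormal bases $\{\ket{v^{i0}}\}$, $\{\ket{v^{0j}}\}$; the underlying computation is identical.
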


\begin{proof}[Sketch proof]
The detailed version can be found in the Appendix~\ref{sec:nonSDPhardproof}. 
It is straightforward to show that if $\vec{Q}$ is a Projector Bilocal Quantum Distribution, then it admits a compatible Factorisation Bilocal 
Moment Matrix of any order (the nonlinear constraint (iv) is derived from the projectivity and commutativity of $\rho$ and $\sigma$ and the properties of the operator trace). 

For the converse direction, suppose there exists a compatible Factorisation Bilocal Moment Matrix $\Gamma^n$ for each $n \geq 3$. 
The first part of our proof is similar to the proof of convergence of the standard NPA hierarchy (see~\cite{NPA2008}). 
We first construct a compatible infinite-size factorisation bilocal moment matrix $\Gamma^\infty$ by extracting a convergent sequence from the set of compatible $\Gamma^n$. 
Then we construct the Hilbert space $\cH$ and the operators $\tau=\ketbra{\phi_1}{\phi_1}$, $\hA$, $\hB$, $\hC$ through a Gelfand--Naimark--Segal (GNS) representation from $\Gamma^\infty$.
The constructed operators are, respectively, a state and PVMs that automatically satisfy constraints (ii) and (iii) in Definition~\ref{def:CommutatorBilocQDistrib}.

In the last part of our proof, we construct the operators $\rho, \sigma$. 
Remarking that $\rho$ should commute with all polynomials in all operators $\hgamma$ associated with a word $\bgamma \in \monoid{c}$, we construct $\rho$ as the orthogonal projector on the subspace $V_{B_RC}$ generated by all these polynomials in $\hgamma$. 
We similarly define the operator $\sigma$ as the orthogonal projector on the subspace $V_{AB_L}$ generated by all polynomials in the operators $\halpha$. 
Note that both $V_{B_RC}$ and $V_{AB_L}$ contain $\ket{\phi_1}$. 
We orthogonally decompose them as $V_{B_RC}=\Span{\ket{\phi_1}}\stackrel{\bot}{\bigoplus} W_{B_RC}$ and $V_{AB_L}=\Span{\ket{\phi_1}}\stackrel{\bot}{\bigoplus} W_{AB_L}$.
Then, we show that the factorisation condition (iv) in the moment matrices imposes that the two complement subspaces are orthogonal to each other, that is, $W_{AB_L}\bot W_{B_RC}$.

As $\rho$ and $\sigma$ are, respectively, orthogonal projectors over $V_{B_RC}$ and $V_{AB_L}$, this directly implies that they commute and their product is the projector over $\Span{\ket{\phi_1}}$, that is, $\tau$, which proves constraints (i) of Definition~\ref{def:CommutatorBilocQDistrib} . The last constraint (iv) is proven similarly using this orthogonality relation, which concludes the proof.
\end{proof}

While Theorem~\ref{thm:ConvnonSDPBilocNPAHierarchy} gives a convergence to Projector Bilocal Quantum Distributions, as the name suggests, it does not support the application of semi-definite programming (SDP) due to the nonlinearity in the constraints (iv) of Definition~\ref{def:FactorBMMatrix}. 
In other words, while Theorem~\ref{thm:ConvnonSDPBilocNPAHierarchy} provides a theoretical characterisation, its practical usefulness to characterise bilocal quantum distributions is unclear.
We solve this problem in the next section, considering a hierarchy which additionally includes new commutative variables that can linearising these constraints.

\section{Linearising factorisation bilocal NPA hierarchy via scalar extension}\label{sec:ScalarExtHierarchyMainSec}

To address the nonlinearity issue of the factorisation bilocal NPA hierarchy, we turn to the scalar extension method that was initially proposed in~\cite{ScalarExtension2019} and later fully formalised in~\cite{klep2024state}. 

In the following, we first discuss (Section~\ref{Sec:MotivNewScalarExtensionHierarchy}) the intuition initially presented in~\cite{ScalarExtension2019} and clarify its limits. 
We show these limits motivate the need for new \emph{polarisation constraints} involving more scalar operators (see (iii$'$) and (iv$'$) bellow) that we introduce. 
We also explain why, although the existence of a solution to all NPO levels of our new hierarchy proves a quantum solution, this method is not constructive: one cannot directly extract the quantum representation from the hierarchy, but only a convex mixture of solutions.

After the motivation, we formally introduce in Section~\ref{sec:bilocal_hierarchy} our new hierarchy (Definition~\ref{def:BSE_hierarchy_Pol}), and demonstrate its convergence to the set of Projector Bilocal Quantum Distributions (Theorem~\ref{thm:ConvScalarBilocNPAHierarchy}), marking the second main result of this paper. 
Lastly, we provide the stopping criteria applicable to both hierarchies (Theorem~\ref{thm:StopCriteriaFinite}).

\subsection{Underlying intuition for the bilocal scalar extension hierarchy}\label{Sec:MotivNewScalarExtensionHierarchy}

This section discusses the limitations from the first proposed scalar extension method. 
It will lead us to the physical intuition underlying our Definition~\ref{def:BSE_hierarchy_Pol} of a bilocal scalar extension NPA hierarchy and its proof of convergence in Theorem~\ref{thm:ConvScalarBilocNPAHierarchy}.
The aim of this section is to motivate the definitions and guide readers through (or act as an alternative for) the rigorous mathematical proofs in Appendix~\ref{sec:AppendixBilocScaExtHierarchy}.

\subsubsection{Limitations to the first proposed approach to scalar extension method}  

In~\cite{ScalarExtension2019}, the authors propose to include new `scalar'  letters $\kappa_{\balpha}, \kappa_{\bgamma}$ to the standard NPA hierarchy which aim to represent the scalar operators $\hkappa_{\balpha}=\tr{\tau}{\halpha}\id, \hkappa_{\bgamma}=\tr{\tau}{\hgamma}\id$.
The authors consider a modified NPA hierarchy constructed out of the alphabet $\balpha, \bbeta, \bgamma, \kappa_{\balpha}, \kappa_{\bgamma}$, associated with a moment matrix $\Omega^n$, imposing all standard NPA constraints and the additional conditions:
\begin{enumerate}
    \item[(i)] $\kappa_{\balpha}, \kappa_{\bgamma}$ commute with all monomials,
    \item[(ii)] $\tr{\tau}{\halpha}=\tr{\tau}{\hkappa_{\balpha}}$ for all $\balpha$, and similarly for all $\bgamma$,
    \item[(iii)] $\tr{\tau}{\halpha\hgamma}=\tr{\tau}{\hkappa_{\balpha}\hkappa_{\bgamma}}$.
    \item[(iv)] $\tr{\tau}{\A\B\C}=q(abc|xyz)$.
\end{enumerate}

One can hope that, after reconstructing the operators from a converging sequence of moment matrices, the following statements on scalar extension operators hold:
\begin{enumerate}
    \item[(A1)] Condition (i) gives that both $\hkappa_{\balpha}, \hkappa_{\bgamma}\propto \id$. \emph{(Actually incorrect!)}
    \item[(A2)] By Statment (A1) and Condition (ii), it follows that $\hkappa_{\balpha} = \tr{\tau}{\halpha} \id$, and similarly for $\hkappa_{\bgamma}$.
    \item[(A3)] Statement (A2) and Condition (iii) implies Equation~\eqref{eq:NonLinearCondition}
    \begin{align*}
        \tr{\tau}{\halpha\hgamma}=\tr{\tau}{\hkappa_{\balpha}\hkappa_{\bgamma}}=\tr{\tau}{\tr{\tau}{\halpha} \id\tr{\tau}{\hgamma} \id}=\tr{\tau}{\halpha} \tr{\tau}{\hgamma}.
    \end{align*}
\end{enumerate} 
Had this worked, we would be done. Indeed, if there existed a modified NPA hierarchy $\{\Omega^n\}_n$ constructed out of the alphabet $\balpha, \bbeta, \bgamma, \kappa_{\balpha}, \kappa_{\bgamma}$, one could first construct some limiting $\Omega^\infty$ in which Equation~\eqref{eq:NonLinearCondition} holds. 
Then, one could extract from $\Omega^\infty$ (by only keeping letters $\balpha, \bbeta, \bgamma$) a factorisation bilocal NPA Hierarchy $\{\Gamma^n\}_n$.

However, this is not the case, as Statement (A1) is false. In fact, Condition (i) only leads to a block diagonal representation for $\hkappa_{\balpha},\hkappa_{\bgamma}$, where they act as scalar operators within each block only.

\subsubsection{The necessity of new constraints} 
Let us now give the correct deductions that follows from Conditions (i)-(iii). 
The standard GNS construction based on the modified NPA hierarchy provides a Hilbert space $\cH$ with direct sum decomposition $\cH=\oplus_i \cH_i$ such that $\tau=\oplus_i p_i\ketbra{\tau_i}{\tau_i}$ (with $\ket{\tau_i}$ a pure state, $p_i\geq 0$ and $\sum_i p_i =1$)\footnote{In the most generality, these Hilbert spaces in Section~\ref{Sec:MotivNewScalarExtensionHierarchy} admit instead direct integral decompositions, which introduces many technicalities. To focus on the main intuition, we oversimplify the argument.}, where
\begin{enumerate}
    \item[(B1)] Condition (i) implies that $\hkappa_{\balpha}=\oplus_i \lambda_i \id_i, \hkappa_{\bgamma}=\oplus_i \mu_i \id_i, \halpha=\oplus_i \halpha_i, \hgamma=\oplus_i \hgamma_i$.
    \item[(B2)] By Statement (B1) and Condition (ii), it follows that $\sum_i p_i\lambda_i = \sum_i p_i\sandwich{\tau_i}{\halpha_i}{\tau_i}$ and $\sum_i p_i\mu_i = \sum_i p_i\sandwich{\tau_i}{\hgamma_i}{\tau_i}$.
    \item[(B3)] Statement (B1) and Condition (iii) imply $\sum_i p_i\lambda_i\mu_i = \sum_i p_i\sandwich{\tau_i}{\halpha_i\hgamma_i}{\tau_i}$.
\end{enumerate} 
But Statement (B3) no longer implies the factorisation constraints of bilocal scenarios.
Note that Statements (B1)-(B3) reduce to (A1)-(A3) if there is exactly one unique block, that is if $\tau$ is pure. But this is generally not the case, and it will not help to consider a purification of $\tau$.

Hence, new constraints should be introduced to Conditions (i)-(iii). To this end, we will turn to the polarisation technique in~\cite{ligthart2023inflation}. 
The main intuition is to look for new constraints which would lead to the following quadratic versions of Statements (B2) and (B3):
\begin{align}
    0&=\sum_i p_i |\lambda_i - \sandwich{\tau_i}{\halpha_i}{\tau_i}|^2 \mathrm{~and~}
    0=\sum_i p_i |\mu_i - \sandwich{\tau_i}{\hgamma_i}{\tau_i}|^2,\label{eq:polarisation2}\\
    0&=\sum_i p_i |\lambda_i\mu_i - \sandwich{\tau_i}{\halpha_i\hgamma_i}{\tau_i}|^2.\label{eq:polarisation3}
\end{align}
Since $p_i \ge 0$, if the above equations could be imposed, we see that every block $i$ and the associated pure state $\ket{\tau_i}$ would correspond to a desired solution. 
We also observe that the method would become non-constructive: in the convergent limit, this hierarchy would enable the extraction of operators and the state as outlined in (B1), yet it would not directly provide access to the individual blocks $i$.

\subsubsection{Polarisation technique for a convergent scalar extension hierarchy} 
It is not immediately clear how Equations~\eqref{eq:polarisation2} and~\eqref{eq:polarisation3} can be enforced.
Indeed, to obtain Equation~\eqref{eq:polarisation3}, it is tempting to enforce $0=\sandwich{\tau}{(\halpha\hgamma-\hkappa_{\balpha}\hkappa_{\bgamma})^\dagger(\halpha\hgamma-\hkappa_{\balpha}\hkappa_{\bgamma})}{\tau_i}$ derived from Condition (iii). Then direct computation leads to
\begin{align*}
    0=\sum_i \sandwich{\tau_i}{(\halpha_i \hgamma_i)^\dagger(\halpha_i \hgamma_i)}{\tau_i} - 2 \lambda_i\mu_i \sandwich{\tau_i}{\halpha_i \hgamma_i }{\tau_i}+\lambda_i^2\mu_i^2,
\end{align*}
which does not necessarily imply Equation~\eqref{eq:polarisation3} unless $\sandwich{\tau_i}{(\halpha_i \hgamma_i)^\dagger(\halpha_i \hgamma_i)}{\tau_i}=|\sandwich{\tau_i}{\halpha\hgamma}{\tau_i}|^2$.
But this equation is generally not true. It would, however, hold if $\halpha_i \hgamma_i$ was replaced by a scalar extension operator $\hkappa_{\balpha \bgamma}$: our construction is based on this key observation.

For our new hierarchy, we need to include all scalar letters $\kappa_{\bomega}$ (not only the $\kappa_{\balpha}, \kappa_{\bgamma}$, in addition to $\alpha, \beta, \gamma$), with modified Conditions (i$'$)-(iv$'$) below. 
From now on, $\bomega$, $\bnu$ will refer to words that are composed both of noncommutative words and scalar extension letters (e.g., $\bomega=\A\B\kappa_{B_{b'|y'}\C}$). Moreover, we enforce $\kappa$-linearity in the moment matrix, that is $\kappa_{\kappa_{\omega} \nu} = \kappa_{\omega} \kappa_{\nu}$, to mimic the behaviour of $\tr{\tau}{\cdot}$.

Our construction proceeds in two steps.
First, we enforce the polarisation constraints at the level of the scalar extension letters $\kappa_{\balpha},\kappa_{\bgamma},\kappa_{\balpha\bgamma}$ only. 
This enables us to enforce the factorisation constraints for each individual block.
Second, we need to realise that the coefficients from one of these blocks (e.g., block $i=1$ only) provide a new factorization bilocal NPA matrix (where (iv$'$) ensures that the block is compatible with the targeted distribution). 

More precisely, we impose the modified conditions:
\begin{enumerate}
    \item[(i$'$)] $\kappa_{\bomega}$ commutes with all monomials,
    \item[(ii$'$)] $\tr{\tau}{{\bomega}} = \tr{\tau}{{\bnu}}$ whenever $\kappa_{\bomega} = \kappa_{\bnu}$,
    \item[(iii$'$)] $0=\tr{\tau}{(\kappa_{\balpha\bgamma}-\kappa_{\balpha}\kappa_{\bgamma})^\dagger(\kappa_{\balpha\bgamma}-\kappa_{\balpha}\kappa_{\bgamma})}$,
    \item[(iv$'$)] $0=\tr{\tau}{(\kappa_{\A\B\C}-q(abc|xyz))^\dagger(\kappa_{\A\B\C}-q(abc|xyz))}$.
\end{enumerate}
Note that Condition (ii$'$) corresponds to the linear constraints for standard NPA hierarchy (Equation~\eqref{eq:LinearConstraints}). 
The standard GNS construction leads to a direct sum decomposition $\cH = \oplus_i \cH_i$ and $\tau = \oplus p_i \ketbra{\tau_i}{\tau_i}$, where 
\begin{enumerate}
    \item[(C1)] Condition (i$'$) implies e.g. that $\hkappa_{\balpha}=\oplus_i \lambda_i \id_i, \hkappa_{\bgamma}=\oplus_i \mu_i \id_i, \hkappa_{\balpha\bgamma}=\oplus_i \nu_i \id $, and $\kappa_{\A\B\C}=\oplus_i \xi_i \id_i$,
    \item[(C2)] Condition (iii$'$) ensures the quadratic equation such as $\sum_i p_i|\lambda_i\mu_i -\nu_i|^2=0$,
    \item[(C3)] Condition (iv$'$) imposes that $\sum_i p_i | \xi_i - p(abc|xyz) |^2 = 0$.
\end{enumerate} 
Let us now pick one block, say the block $i=1$.
Then, using $\ket{\tau_1}$ we construct a matrix $\Gamma$ with the entries $\Gamma_{\bomega,\bnu}=\sandwich{\tau_1}{\kappa_{\bomega^\dagger\bnu}}{\tau_1}$. Along with Condition (ii$'$), it is straightforward\footnote{It is only straightforward for the direct sum simplification. In general, measure theory is required to deal with the subtlety.} to check that $\Gamma$ is a Factorisation Bilocal Moment Matrix associated with distribution $\Vec{Q}$. In particular, the factorisation condition Equation~\eqref{eq:nonSDPConstraints} corresponds to $\lambda_1\mu_1 -\nu_1=0$ and the compatibility condition follows from $ \xi_1 - p(abc|xyz) = 0$.

\subsection{The bilocal scalar extension hierarchy} \label{sec:bilocal_hierarchy}
To obtain a convergent version of the scalar extension hierarchy, we use the approach of~\cite{klep2024state}, where an NPA-like hierarchy \cite{pironio2010convergent}, i.e.~a non-commutative version of the Lasserre hierarchy \cite{lasserre2001global}, was used to solve the state polynomial optimisation problem.

To this end, we introduce the (infinite-size) alphabet $\mathcal{N} = \langle \underline{\kappa_{ABC}}, \underline{a}, \underline{b}, \underline{c} \rangle$, where $\underline{\kappa_{ABC}}$ is the alphabet composed of commuting letters $\kappa_{\bomega}$ for all $\bomega \in \langle \underline{a}, \underline{b}, \underline{c} \rangle$ and the empty word corresponds to $\id$.
The letters $\kappa_{\bomega}$ represent the scalar extensions, that is, they should mimic the behaviour of $\tr{\tau}{\bomega}$.
Indexing the rows and columns of the moment matrix $\Omega$ by $\bmu, \bnu \in \mathcal{N}$, we let the entry $(\bmu, \bnu)$ be given by a complex number
\begin{align}\label{eq:AssociatedFunctional}
    y_{\bmu^\dagger \bnu} = L_{\Omega}(\kappa_{\bmu^\dagger \bnu}) = y_{\kappa_{\bmu^\dagger \bnu}},
\end{align}
where $L_\Omega \circ \kappa$ can be thought of as a functional acting on the words in the variables $\mathcal{N}$, and $\kappa_{\bmu \kappa_{\bnu}} = \kappa_{\bmu}\kappa_{\bnu}$.
This leads to a matrix of the form
\begin{equation}
\Omega = \kbordermatrix{
    & \id & \A & \cdots & \kappa_{C_{c|z}} & \kappa_{A_{a|x}A_{a'|x'}} &\cdots\\
\id & 1 & & & &  \\
\A^\dagger & & & & & \\
A_{a'|x'}^\dagger & & y_{\hkappa_{A_{a'|x'}^\dagger \A}} & & & \\
\vdots & & & & & \\
\kappa_{A_{a|x}}^\dagger & & y_{\hkappa_{\A}^2} & & y_{\hkappa_{\A} \hkappa_{\C}} & \\
\kappa_{B_{b|y}}^\dagger & & & & & \\
\kappa_{C_{c|z}}^\dagger & & y_{\hkappa_{\A} \hkappa_{\C}} & & y_{\hkappa_{\C}^2} & \\
\kappa_{A_{a|x}A_{a'|x'}}^\dagger & & & & & \\
(\kappa_{A_{a|x}}\kappa_{A_{a'|x'}})^\dagger & & & & y_{\hkappa_{{A}_{a'|x'}} \hkappa_{\A} \hkappa_{\C}} & \\
\vdots & & & & & \\
(\kappa_{\A\C})^\dagger & y_{\hkappa_{\A\C}} & & & & \\
\vdots & & & & &                   
},
\label{eq:ScaExtMomentMatrix}
\end{equation}
where $y_{\bmu^\dagger \bnu} = y_{\bmu^{\prime\dagger}\bnu'}$ if $\bmu^\dagger \bnu = \bmu^{\prime\dagger}\bnu'$.
Assuming a valid assignment of complex numbers to $y$ exists, we call $y$ a moment representation.

Define the length of $\bmu = \left( \prod_i \kappa_{\bomega_i} \right) \bomega_0$ as $|\bmu| = |\bomega_0| + \sum_i |\bomega_i|$, and let $\mathcal{N}^n = \{\bmu \in \mathcal{N}: |\bmu| \leq n\}$.
We can construct a matrix $\Omega^n$ as the principal submatrix of $\Omega$ indexed in $\mathcal{N}^n$.

In addition, we define a \emph{localising matrix} of a polynomial $q = \sum_i q_i \bmu_i$ as the matrix $\Omega(q)$ where the entry $(\bomega, \bnu)$ is given by 
\begin{align}\label{eq:AssociatedFunctionalLocalising}
    \sum_i q_i y_{\bomega^\dagger \bmu_i \bnu} = \sum_i q_i L_\Omega (\kappa_{\bomega^\dagger \bmu_i \bnu}) .
\end{align}
Similarly, $\Omega^n(q)$ is defined as the submatrix indexed in $\mathcal{N}^{n-d_q}$, where $d_q = \lceil |q|/2 \rceil$.

Let $\mathcal{C}$ be the set of expressions that correspond to PVM properties, commutativity, and factorisation constraints Condition (iii$'$) (see Equation~\eqref{eq:FactorisationConstraintSetDef}).
Also, let $\mathcal{Q}$ be the set of compatibility requirements corresponding to Condition (iv$'$) (see Equation~\eqref{eq:fCompatibilityConstraint}).

We are now ready to define two closely related hierarchies of semidefinite programs for the bilocal compatibility problem.

\begin{definition}[Bilocal scalar extension hierarchy (Klep \emph{et al.})] \label{def:BSE_hierarchy}
For a distribution $\vec{Q}$, the $n$-th level of the hierarchy is given by the following semidefinite program. 

Find $\Omega^n$ such that
\begin{enumerate}[(i)]
\item $\Omega^n_{1,1} = 1$.
\item $\Omega^n$ is positive semidefnite.
\item it satisfies all the linear constraints 
\begin{align}
  \Omega^n_{\bomega,\bnu}=\Omega^n_{\bomega',\bnu'}
\end{align}
whenever $\kappa_{\bomega^{\dagger} \bnu} = \kappa_{{\bomega'}^{\dagger} \bnu'}$.
\item it satisfies the additional constraints
\begin{align} \label{eq:SDP_constraints}
    &\Omega^n(\pm c) \geq 0 \qquad \forall c \in \mathcal{C},
\end{align}
\item it satisfies the compatibility constraints
\begin{align} \label{eq:SDP_compatibility}
    &\Omega^n(\pm f) \geq 0 \qquad \forall f \in \mathcal{Q}.
\end{align}
\end{enumerate}
\end{definition}

\begin{definition}[Bilocal scalar extension hierarchy (polarisation)] \label{def:BSE_hierarchy_Pol}
For a distribution $\vec{Q}$, the $n$-th level of the hierarchy is given by the following semidefinite program. 

Find $\Omega^n$ such that
\begin{enumerate}[(i)]
\item $\Omega^n_{1,1} = 1$.
\item $\Omega^n$ is positive semidefnite.
\item $\Omega^n\left(N - \sum_{\bmu \in \mathcal{N}^n} \bmu^\dagger \bmu \right)$ is positive semidefinite, where $N= \left|\mathcal{N}^n\right|$.
\item it satisfies all the linear constraints 
\begin{align}
  \Omega^n_{\bomega,\bnu}=\Omega^n_{\bomega',\bnu'}
\end{align}
whenever $\kappa_{\bomega^{\dagger} \bnu} = \kappa_{{\bomega'}^{\dagger} \bnu'}$.
\item it satisfies the additional constraints
\begin{align} \label{eq:SDP_constraints_pol}
    &L_\Omega((\kappa_c)^2) = 0 \qquad \forall c \in \mathcal{C},
\end{align}
\item it satisfies the compatibility constraints
\begin{align} \label{eq:SDP_compatibility_pol}
    &L_\Omega((\kappa_f)^2) = 0 \qquad \forall f \in \mathcal{Q}.
\end{align}
\end{enumerate}
\end{definition}

Both hierarchies are quite similar, with the main difference being how some of the polynomial equality constraints are imposed.
In Definition~\ref{def:BSE_hierarchy}, such constraints are imposed through the use of localising matrices, which algebraically constrain the polynomials to be positive.
In the hierarchy of Definition~\ref{def:BSE_hierarchy_Pol}, such constraints are imposed through the quadratic formulation inspired by the \emph{polarisation technique}~\cite{ligthart2023inflation}, which only requires a linear constraint on the moment matrix.
The requirement (iii) in Definition~\ref{def:BSE_hierarchy_Pol} ensures that each of the variables $\kappa_{\bomega}$ is bounded.
In addition, the expressions $c$ and $f$ are squared in Equations~\eqref{eq:SDP_constraints_pol} and \eqref{eq:SDP_compatibility_pol} due to technicalities in the proof of convergence for the polarisation hierarchy.
We will elaborate on these details in Appendix~\ref{sec:AppendixBilocScaExtHierarchy}.

\begin{theorem}[Convergence of the scalar extension bilocal hierarchy] \label{thm:ConvScalarBilocNPAHierarchy}
Suppose that $\vec{Q}=\{q(abc|xyz)\}$ is a tripartite probability distribution. 
Then $\vec{Q}$ passes all the scalar extension bilocal hierarchy tests of Definition~\ref{def:BSE_hierarchy} and Definition~\ref{def:BSE_hierarchy_Pol} if and only if $\vec{Q}$ is a Projector Bilocal Quantum Distribution.
\end{theorem}
\begin{proof}
    See Appendix~\ref{sec:AppendixBilocScaExtHierarchy}.
\end{proof}

\subsection{Stopping criteria and finite-dimensionality}\label{sec:StopCriteria}
So far, we have shown that both the factorisation and scalar extension bilocal NPA hierarchies characterise the set of Projector Bilocal Quantum correlations (Definition~\ref{def:CommutatorBilocQDistrib}) in the asymptotic limit. 
The reconstructed Hilbert space $\cH$ is a priori infinite-dimensional, and there is no promise for the speed of convergence of these two hierarchies.
In this section, we give a stopping criterion for our hierarchies, and show that it is associated with a finite-dimensional reconstructed Hilbert space $\cH$.
Our stopping criterion is similar to the one for the NPA hierarchy given in~\cite{NPA2008}.

\begin{definition}[Rank loop]\label{def:rankloop}
For any $N \geq 3$, the Factorisation Moment Matrix $\Gamma^N$ is said to have a \emph{rank loop} if its extracted submatrix $\Gamma^{N-1}$ satisfies
\begin{align}
    \mathrm{rank}(\Gamma^{N-1}) = \mathrm{rank}(\Gamma^{N}).
\end{align}
Similarly, the Scalar Extension Bilocal Matrix $\Omega^N$ has a rank loop if its extracted submatrix $\Omega^{N-1}$ satisfies
\begin{align}
    \mathrm{rank}(\Omega^{N-1}) = \mathrm{rank}(\Omega^{N}).
\end{align}
\end{definition}

\begin{theorem}[Stopping criteria]\label{thm:StopCriteriaFinite}
Let $\vec{Q}=\{q(abc|xyz)\}$ be a tripartite probability distribution. The following are equivalent:
\begin{enumerate}[(i)]
    \item $\vec{Q}$ admits a finite-dimensional Projector Bilocal Quantum representation.
    \item $\vec{Q}$ admits a Factorisation Bilocal Moment Matrix $\Gamma^N$ which has a rank loop for some $N \in \mathbb{N}$.
    \item $\vec{Q}$ admits a Scalar Extension Bilocal Moment Matrix $\Omega^N$ which has a rank loop for some $N \in \mathbb{N}$, and the associated GNS-representation $(\cH, \pi, \ket{\phi_1})$ satisfies $\pi(T_{abc}) = \mathbb{R}\id_{\cH}$.
    \item $\vec{Q}$ admits a finite-dimensional Tensor Bilocal Quantum representation.
\end{enumerate}
\end{theorem}
\begin{proof}
This follows from essentially the same argument as in \cite[Theorem~10]{NPA2008}, \cite[Theorem~2]{pironio2010convergent}, \cite[Theorem~1.69]{burgdorf2016optimization}. We thus omit the details, and comment that the GNS-representation in (iii) is well-defined thanks to the rank loop condition.

Item (iv) was left as an open question in our first preprint~\cite{renou2022bilocalv1}; nonetheless, this is answered positively soon after by~\cite[Corollary~9]{ligthart2023inflation}.
\end{proof}

\section{Inflation-NPA hierarchy and generalisation to other Networks}\label{sec:OtherNoLoopNetworksInflationNPA}

In this section, we first introduce a third hierarchy, which can be used to approximate quantum distributions in the bilocal scenario: the inflation-NPA hierarchy of~\cite{QuantumInflation2021}. 
This hierarchy was obtained by combining the inflation technique~\cite{Inflation,NavascuesCVStandardInflation,VictorCVStandardInflation} to the NPA method~\cite{NPA2008}.
It is worth noting that an alternative Schmidt-rank-dependent version was proposed and proven to be convergent by~\cite{ligthart2023convergent}, but the Schmidt-rank-dependence makes their construction an inner approximation of outer approximation.
We prove in Theorem~\ref{thm:InflationNPAIsTighter} that any distribution compatible with this inflation-NPA method is a Projector Bilocal Quantum Distribution (i.e. it is compatible with the two hierarchies introduced in Section~\ref{sec:nonSDPNPAhierarchy} and~\ref{sec:bilocal_hierarchy}).
The converse question on the equality of sets of correlations was left open in the first draft of our manuscript, and is then proven by~\cite[Theorem~16]{ligthart2023inflation}.

Then, we introduce more general network scenarios (see Figure~\ref{fig:GeneralisedScenarios}) and discuss the generalisations of factorisation, scalar extension, and inflation-NPA hierarchies to these scenarios. 
We remark that for the triangle scenario, the factorisation and scalar extension hierarchies are trivial (they correspond to the standard NPA hierarchy), while the NPA inflation hierarchy is nontrivial, showing a clear gap in the two sets which can be identified by these hierarchies in the context of generic network scenarios.

\subsection{Inflation-NPA hierarchy}\label{sec:InflationNPAHierarchyBilocalScenario}

Consider a Tensor Bilocal Quantum Distribution $\vec{P}$ (Definition~\ref{def:TensorBilocQDistrib}). 
Following~\cite{QuantumInflation2021}, we now show that $\vec{P}$ can be associated with a new two-parameter sequence of increasing size moment matrices $\Xi^{n,m}$ satisfying some additional symmetry constraints.
This can be used to construct a new hierarchy capable of showing that some given distribution cannot be a Tensor Bilocal Quantum Distribution, called the inflation-NPA hierarchy (also called quantum inflation hierarchy).
We then prove that this method is more (or equally) precise than the two factorisation and scalar extension hierarchies. 

\subsubsection{Inflation-NPA Hierarchy for the Bilocal scenario}
Let $\vec{P}$ be a Tensor Bilocal Quantum Distribution:
there exist a composite Hilbert space $\cH_A \otimes \cH_{B_L} \otimes \cH_{B_R} \otimes \cH_C$ and operators such that 
\begin{align*}
    p(abc|xyz)=\tr{\rho_{AB_L}\otimes\sigma_{B_RC}}{\hA\otimes\hB\otimes\hC}.
\end{align*}

Consider two integers $n\geq 3, m\geq 1$.  
Introducing $m$ copies of the Hilbert spaces and copies of the operators in these Hilbert spaces, we now construct a new inflation-NPA moment matrix $\Xi^{n,m}$ that satisfies some SDP constraints, notably the linear symmetry conditions corresponding to the fact that the copies should be interchangeable. 

We introduce $m$ independent copies of each Hilbert spaces, called $\cH_A^i$, $\cH_{B_L}^j$, $\cH_{B_R}^k$, and $\cH_C^l$.
We consider copies of all PVMs measurement operators in every possible (combination of) Hilbert space. 
More precisely, we consider $m$ operators $\{\hAinf{i}\}$ (resp. $\{\hCinf{l}\}$), copies of $\hA$ (resp. $\hC$) acting over the Hilbert spaces $\cH_A^i$ (resp. $\cH_C^l$), and $m^2$ operators $\{\hBinf{j}{k}\}$ copies of $\hB$ acting on Hilbert spaces $\cH_{B_L}^j\otimes\cH_{B_R}^k$.
We also consider $m$ independent copies of $\rho_{AB_L}$ and $\sigma_{B_RC}$, denoted by $\rho_{AB_L}^i$ acting on $\{\cH_A^i\otimes\cH_{B_L}^i\}$ and $\sigma_{B_RC}^l$ acting on $\{\cH_{B_R}^l\otimes\cH_C^l\}$.
We introduce $\tau_m$ the global state tensor product of all states, $\cH^i=\cH_A^i \otimes \cH_{B_L}^i \otimes \cH_{B_R}^i \otimes \cH_C^i$ the $i$th diagonal spaces, $\hBinf{i}{i}$ the $i$th diagonal PVM and $\cH$ the global Hilbert space tensor product of all Hilbert spaces.

In the following, to simplify the notation, we leave the inputs and the outputs of the operators implicit (e.g. $A^{i}$ should be thought of as $\Ainf{i}$).
As in the standard NPA case, we also identify an operator over a local Hilbert space with the operator acting over the full Hilbert space padding with identity operators, which allows to replace all tensor products with simple products\footnote{e.g., $A^{i}$ acting over $\cH_A^i$ is identified with $A^{i}\otimes \id$ acting over $\cH$ where the identity acts over the tensor product of all Hilbert spaces except $\cH_A^i$. 
Then, for instance, the operator $A^{i}\otimes A^{i'}\otimes C^l$ is identified with $A^{i}\cdot A^{i'}\cdot C^l$}.

\begin{figure}
    \centering
    \includegraphics[width=\textwidth]{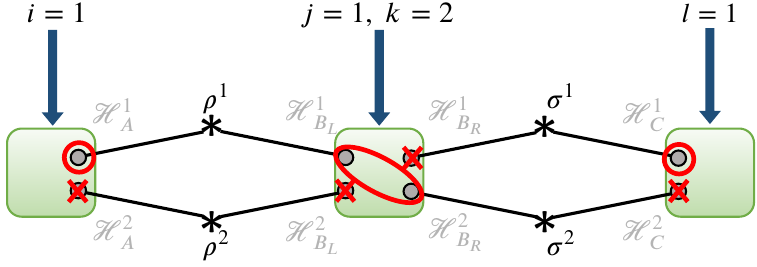}
    \caption{Physical scenario corresponding to the inflation technique (to simplify, inputs $x,y,z$ and outputs $a,b,c$ are omitted). 
    In case a distribution $\vec{P}$ is obtain from the bilocal scenario, one can consider the thought experiment in which the sources are duplicated, and each party is given \emph{additional} inputs $i$; $j,k$; and $l$. 
    The parties apply the measurement specified in the original scenario to the copies of the states given by these extra inputs. 
    Here, we represented the case where $A$ performs here measurement $A_{a|x}$ over her share of state $\rho^{1}$, corresponding to the new operator $A^1_{a|x}$.
    Similarly, here Bob measures operator $B^{12}_{b|y}$ over his shares of states $\rho^{1},\sigma^2$ and Charlie measures $C^{1}_{c|z}$ over his share of state $\sigma^2$. Note that in that case Charlie measures a state not considered by the other parties: his behavior factorises from the rest, which is imposed by the condition (iv) of the main text.
    Moreover, as the states and PVMs are copies of each other, the behavior should be invariant under $S_2\times S_2$ permutation group, as specified by condition (iii) of the main text.
    }
    \label{fig:InflationBilocalScenario}
\end{figure}

Physically, $\Xi^{n,m}$ corresponds to the ``standard'' moment matrix of a new hypothetical scenario in which the parties now have access to several copies of the sources and decide which one to measure according to the additional input $i$ for $A$, $(j,k)$ for $B$ and $l$ for $C$  (see Figure~\ref{fig:InflationBilocalScenario}), to which we additionally impose symmetry conditions deduced from the interchangeability of the copies of the states.
More precisely, $\Xi^{n,m}$ is indexed by words $\bomega,\bnu$ (of size $\leq n$) composed of letters $\{A^{i}\}, \{B^{j,k}\}, \{C^{l}\}$, and its coefficients are $\Xi^{n,m}_{\bomega,\bnu}=\tr{\tau_m}{\homega^\dagger\hnu}$. It satisfies four types of conditions:
\begin{enumerate}[(i)]
\item $\Xi^{n,m}_{1,1}=1$ and $\Xi^{n,m}$ is symmetric positive.
\item commutation of operators in different Hilbert spaces implies that $\Xi^{n,m}$ satisfies several linear constraints (e.g. $A^{1}$ commutes with $A^{2}$ and $B^{1,1}$ commutes with $B^{2,3}$ but not with $B^{1,2}$).
\item the invariance of $\tau_m$ under permutations $\theta,\theta'\in S_m\times S_m$ of the copies of $\rho_{AB_L}$ and $\sigma_{B_RC}$ implies new linear constraints on $\Xi^{n,m}$. 
Acting with $\theta$ (resp. $\theta'$) over $\tau_m$ is equivalent to permuting the index $i,j$ (resp. $k,l$) in $A^{i}, B^{j,k}, C^{l}$, hence, for instance, 
$\Xi^{n,m}_{1,A^{0}A^{1}B^{0,1}C^{1}}=\Xi^{n,m}_{1,A^{1}A^{0}B^{1,2}C^{2}}$ (here $\theta = (0~1)$ and $\theta'= (1~2)$).
\item Moreover, note that when only diagonal operators $A^{i}, B^{i,i}, C^{i}$ are involved, the trace over $\tau_m$ factorises as a product of $m$ traces over each diagonal space $\cH^i$. 
Hence $\Xi^{n,m}$ satisfies $\Xi^{n,m}_{1,\prod^m_{i=1} A^{i} B^{i,i} C^{i}}=\prod^m_{i=1} p(a^i b^{i,i} c^i \mid x^i y^{i,i} c^i)$ with $\vec{P}$\footnote{Physically, this terms corresponds to the parties $m$ parallel implementations of the initial experiment, one for each ($\rho^i, \sigma^i$).}.
We say that $\Xi^{n,m}$ is compatible with $\vec{P}$.
\end{enumerate}

As for the NPA and our previous hierarchies, the existence of a hierarchy of inflation-NPA moment matrices $\Xi^{n,m}$ compatible with any tensor bilocal distribution $\vec{P}$ can be used as a way to test whether some given distribution $\vec{Q}$ has such a model. If for some $n,m$, there does not exist any $\Xi^{n,m}$ compatible with $\vec{Q}$ (this is an SDP problem), then $\vec{Q}$ is not a Tensor Bilocal Quantum Distribution.

\begin{remark}\label{rem:inflationFactorisationDeFinetti}
Note that a moment matrix constructed from a Tensor Bilocal Quantum Distribution $\Xi^{n,m}$ satisfies additional nonlinear factorisation constraints. 
More precisely, for some words $\balpha^i, \bbeta^{ii}, \bgamma^{i}$ in the letters $\{A^{i}\}, \{ B^{i,i}\}$ and $\{C^{i}\}$,  we have the factorisation constraint $\Xi^{n,m}_{1,\prod_i \balpha^i, \bbeta^{ii}, \bgamma^{i}}=\prod_i\Xi^{n,m}_{1, \balpha^i, \bbeta^{ii}, \bgamma^{i}}$.
This condition can be imposed in the tests, but it is not practical, as it results in non-SDP problems. 
Nonetheless, the de Finetti Theorem~\cite{ligthart2023convergent} ensures the conditions are imposed anyway as $m, n$ go to infinity, and one can also use scalar extension hierarchy to encode this constraint easily.
\end{remark}

\subsubsection{Projector quantum distribution and inflation-NPA hierarchy}\label{sec:MainCommutatorInflationCompare}
We now prove that if $\vec{Q}$ is detected as infeasible by the scalar extension hierarchy (that is, if $\vec{Q}$ is not a Projector Bilocal Quantum Distribution), then the inflation-NPA hierarchy will also detect it. 

\begin{theorem}\label{thm:InflationNPAIsTighter}
Consider a distribution $\vec{Q}=\{q(abc|xyz)\}$.
The existence of $\Xi^{n,m}$ compatible with $\vec{Q}$, where $m$ is sufficiently large, implies the existence of a scalar extension moment matrix $\Omega^{n}$.
Hence, if $\vec{Q}$ passes the bilocal inflation-NPA hierarchy, then it also passes the factorisation and scalar extension bilocal NPA hierarchies, that is, it is a commutator bilocal distribution.
\end{theorem}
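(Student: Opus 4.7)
The approach is to construct the scalar extension moment matrix $\Omega^n$ directly from $\Xi^{n,m}$ by realising the auxiliary scalar variables of the scalar extension as operators living on spare copies of the inflation. Intuitively, in the physical picture underlying the inflation, an operator on copy $i\neq 1$ acts on a Hilbert space disjoint from that of copy $1$, so its expectation with respect to $\tau_m$ ought to behave as a scalar with respect to copy-1 variables.

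Concretely, I would first dedicate copy $1$ of the inflation to the ``main'' words $\bomega$ indexing the standard part of $\Omega^n$, and allocate distinct auxiliary copies $i_\balpha$ (resp.\ $j_\bgamma$) to each scalar variable $s_\balpha$ with $\balpha\in\monoid{a}$ (resp.\ $s_\bgamma$ with $\bgamma\in\monoid{c}$) that the scalar extension requires, taking $m$ large enough that enough disjoint copies are available. Then the entries of $\Omega^n$ would be defined by
\[ \Omega^n_{\bomega\,\prod_\balpha s_\balpha\,\prod_\bgamma s_\bgamma,\;\bomega'} \;:=\; \Xi^{n,m}_{\bomega^{(1)}\prod_\balpha \balpha^{(i_\balpha)}\prod_\bgamma \bgamma^{(j_\bgamma)},\;\bomega'^{(1)}}, \]
where the superscripts indicate on which copy each letter is placed. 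The $S_m\times S_m$ symmetry of $\Xi^{n,m}$ ensures that the value does not depend on the specific auxiliary-copy assignment, so the definition is consistent, and $\Omega^n$ inherits positive semidefiniteness, the NPA-type linear identifications, the commutation of the scalars with the other variables, and the compatibility with $\vec{Q}$ directly from the corresponding properties of $\Xi^{n,m}$ (it is essentially a relabelled principal submatrix).

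The main obstacle will be verifying the ``scalar'' factorisation property intrinsic to the scalar extension, i.e.\ that the auxiliary variables behave as multiples of the identity. In our construction this amounts to factorisations such as $\Xi^{n,m}_{1,\balpha^{(2)}\bgamma^{(3)}}=\Xi^{n,m}_{1,\balpha^{(2)}}\cdot\Xi^{n,m}_{1,\bgamma^{(3)}}$, which as highlighted in Remark~\ref{rem:inflationFactorisationDeFinetti} are \emph{not} built into the inflation-NPA definition. To recover them, I would combine the diagonal factorisation condition (iv) of $\Xi^{n,m}$ with the PVM completeness relations $\sum_a A^{(i)}_{a|x}=\sum_b B^{(i,i)}_{b|y}=\sum_c C^{(i)}_{c|z}=\id$: inserting identities on the parties missing from each auxiliary copy (and on all truly unused copies) re-expresses the left-hand side as a sum of full diagonal moments, which factorise via (iv) into products of probabilities of $\vec{Q}$; these are then resummed using no-signalling (itself a consequence of the inflation constraints) to yield the required product of marginals. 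For single-letter scalars this sum-to-identity trick cleanly yields $\Xi^{n,m}_{1,A^{(2)}_{a|x}C^{(3)}_{c|z}}=q^A(a|x)q^C(c|z)$; the technical heart of the argument will be adapting it to longer words $\balpha,\bgamma$, which is precisely where having $m$ sufficiently large allows the needed auxiliary copies to fit.
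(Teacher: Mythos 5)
Your construction is essentially the paper's: dedicate copy $1$ to the original alphabets, assign each scalar-extension letter $\kappa_{\balpha}$ to a distinct auxiliary copy $i\geq 2$, take $m$ large enough, and invoke the $S_m\times S_m$ symmetry. That part is correct.

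However, you have invented an obstacle that is not there, and then sketched a fix that would not work even if it were needed. The Scalar Extension Bilocal Moment Matrix (Definition~\ref{def:ScalarExtBMMatrix}) does \emph{not} demand that the auxiliary variables ``behave as multiples of the identity'', nor does it impose any nonlinear factorisation such as $\Omega^n_{1,\kappa_{\balpha}\bgamma}=\Omega^n_{1,\kappa_{\balpha}}\cdot\Omega^n_{1,\bgamma}$. The only scalar-extension-specific constraint is condition (iv), namely $\Omega^n_{1,\balpha\bgamma}=\Omega^n_{1,\kappa_{\balpha}\bgamma}$, and under your identification this is exactly $\Xi^{n,m}_{1,\balpha^1\bgamma^1}=\Xi^{n,m}_{1,\balpha^k\bgamma^1}$, a single application of the $S_m\times S_m$ symmetry. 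You noted this equality yourself, but then misread it as only ``half the work''. The additional scalar behaviour $\pi(T_a)=\mathbb{R}\,\id_{\cH}$ belongs to Theorem~\ref{thm:ConvScalarBilocNPAHierarchy}, where it appears as an explicit extra hypothesis on the reconstructed GNS representation, not as a property a moment matrix must satisfy; Remark~\ref{rem:ScaExtHierarchyImpractical} makes clear it typically fails. So once your bookkeeping on copies is done, the proof is finished, and the whole third paragraph of your proposal is unnecessary.

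It is also worth noting that the fix you sketch for this phantom obstacle would not succeed. The compatibility condition of the inflation moment matrix only pins down the entries $\Xi^{n,m}_{1,\prod_i A^i B^{i,i} C^i}$ for single-letter diagonal monomials; it does not provide a factorisation rule for words of length greater than one in a single copy, and Remark~\ref{rem:inflationFactorisationDeFinetti} states explicitly that the full nonlinear factorisation is \emph{not} imposed on $\Xi^{n,m}$. Resolving identities on the missing parties will not reduce $\Xi^{n,m}_{1,\balpha^{(2)}\bgamma^{(3)}}$ with $|\balpha|\geq 2$ to the form covered by that compatibility condition. Fortunately, none of this is needed: drop the third paragraph, keep the construction and the symmetry argument, verify conditions (i)--(iii) of Definition~\ref{def:ScalarExtBMMatrix} (they are inherited as you say, since $\Omega^n$ is a relabelled principal submatrix), verify (iv) by the symmetry step, and verify compatibility with $\vec{Q}$ from $\Xi^{n,m}_{1,A^1B^{11}C^1}=q(abc|xyz)$.
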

\begin{proof}
We first identify the PVMs of the first diagonal space with $\{\hA\}$, $\{\hB\}$, $\{\hC\}$ and suppose $m$ to be of order $nd^n$, where $d$ is the number of operators in $\{\hA, \hB, \hC\}$. For each distinct word $\bomega$, we assign a distinct index $i \geq 2$ and identify the associated scalar extension with $\bomega^i$ from the $i$-th diagonal space. Additional index assignment is needed when considering the powers of scalar extension words. Then, it is straightforward to construct a Scalar Extension Bilocal Moment Matrix by utilising the diagonal factorisation and symmetry of the inflation-NPA hierarchy. See Appendix~\ref{sec:formalInflationNPA} for more details.
\end{proof}
The converse of Theorem~\ref{thm:InflationNPAIsTighter} was left open in the first version of our manuscript, but was later shown to be true thanks to~\cite[Theorem~16]{ligthart2023inflation}.

It should be noted that, from the proof, the construction of a Scalar Extension Bilocal Moment Matrix of order $n$ requires an inflation level $m$ of order $\mathscr{O}(nd^n)$, representing a growth rate that exceeds exponential. But this serves only as a theoretical argument without direct implication to the computational efficiency.

\subsection{Generalisation to other network scenarios}\label{sec:OtherNetworkGeneralisation}

\begin{figure}
    \centering
    \includegraphics[width=\textwidth]{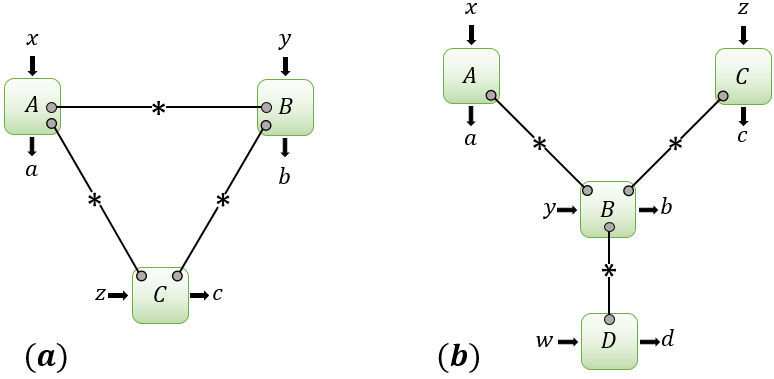}
    \caption{Some generalised network scenario: the triangle scenario (a) and the four party star network (b). }
    \label{fig:GeneralisedScenarios}
\end{figure}

Up to now, we have considered quantum correlations which can be obtained in the three-party Bell and Bilocal scenarios as in Figure~\ref{fig:ThreePartiesScenarios}.
Quantum theory can also be used to define what correlations can be obtained in more complex scenarios, such as the four party star network or the triangle scenario of Figure~\ref{fig:GeneralisedScenarios}, and all the three hierarchies we introduced can be generalised to these scenarios. 

Let us now discuss the generalisation of our results to these generalised scenarios.
We will first remark that the generalisation of our factorisation and scalar extension hierarchies is sometimes strictly less efficient to the inflation-NPA hierarchy, providing a clear gap in the case of the triangle scenario. 
Then, we evoke the potential generalisation of our main result to the star network scenario.

Let us first concentrate on the triangle scenario of Figure~\ref{fig:GeneralisedScenarios}c, where the correlations predicted by quantum theory are: 

\begin{definition}[Tensor Triangle Quantum Distributions]\label{def:TensorTriangleQDistrib}
Let $\vec{P}=\{p(abc|xyz)\}$ be a three-party probability distribution. 
We say that $\vec{P}$ is a \emph{Tensor Triangle Quantum Distribution} iff there exist a composite Hilbert space $\cH_{A_L} \otimes \cH_{A_R} \otimes \cH_{B_L} \otimes \cH_{B_R} \otimes \cH_{C_L} \otimes \cH_{C_R}$, projector over pure states $\rho_{A_RB_L}$ acting over $\cH_{A_R} \otimes \cH_{B_L}$, $\sigma_{B_RC_L}$ acting over $\cH_{B_R} \otimes \cH_{C_L}$, $\pi_{C_RA_L}$ acting over $\cH_{C_R} \otimes \cH_{A_L}$, and some PVMs $\{\hA\}$ over $\cH_{A_L} \otimes \cH_{A_R}$, $\{\hB\}$ over $\cH_{B_L} \otimes \cH_{B_R}$, and $\{\hC\}$ over $\cH_{C_L} \otimes \cH_{C_R}$ such that 
\begin{align*}
    p(abc|xyz)=\tr{\rho_{A_RB_L}\otimes\sigma_{B_RC_L}\otimes\pi_{C_RA_L}}{\hA\otimes\hB\otimes\hC}.
\end{align*}
\end{definition}

The generalisation of our factorisation or scalar extension hierarchies to this new set of correlation is trivial. 
Indeed, no factorisation condition can be imposed anymore, as any pair of parties is sharing a quantum source.
Hence, they consist of testing the existence of a hierarchy of moment matrices subject to the same conditions as in the standard NPA hierarchy.
Then, they are unable to distinguish triangle quantum correlations from the correlation that can be obtained in the standard tripartite Bell scenario of Figure~\ref{fig:ThreePartiesScenarios}a. 

In particular, the shared random bit distribution (in which the parties coordinately output 0 or 1 with probability 1/2, there are no inputs) is clearly feasible in the standard tripartite Bell scenario and hence cannot be detected as infeasible by the generalised factorisation or scalar extension hierarchies.
However, we prove in Appendix~\ref{sec:AppendixSharedRandBitInflationNPA} that the inflation-NPA shows that it is infeasible in the triangle scenario.
This proves that in the context of generic networks, the two generalised factorisation and scalar extension hierarchies can detect strictly fewer distributions as infeasible compared to the inflation-NPA hierarchy.

We also discuss in Appendix~\ref{AppendixStarNetwork} a potential generalisation of our main results to the case of the star network scenario of Figure~\ref{fig:GeneralisedScenarios}a.

\begin{ack}
We thank Antonio Acín, David Gross, Igor Klep, Victor Magron and Miguel Navascués for discussions.
\end{ack}

\begin{funding}
M.-O.R. is supported by the grant PCI2021-122022-2B financed by\\ MCIN/AEI/10.13039/501100011033 and by the European Union NextGenerationEU/PRTR.
This work was partially funded by the Government of Spain (FIS2020-TRANQI and Severo Ochoa CEX2019-000910-S), Fundacio Cellex, Fundacio Mir-Puig, Generalitat de Catalunya (CERCA, AGAUR SGR 1381 and QuantumCAT), the ERC AdG CERQUTE and the AXA Chair in Quantum Information Science.
X.X. and M.-O.R. acknowledge funding from the INRIA and the CIEDS in the Action Exploratoire project DEPARTURE.
X.X. and M.-O.R. acknowledge funding from the ANR through the JCJC grant LINKS (ANR-23-CE47-0003).
X.X. and M.-O.R. acknowledge funding from the European Union's Horizon 2020 Research and Innovation Programme under QuantERA Grant Agreement no. 731473 and 101017733.
\end{funding}

\paragraph{Data availability}
Data sharing is not applicable to this article as no new data were created or analysed in this study.

\clearpage
\appendix

\section{A formal construction of noncommutative polynomials}

In this section, we provide a more mathematically rigorous construction of noncommutative polynomials such as $\mathbb{T}^{abc}$ and $\mathbb{T}^{abc}_{\mathrm{SE}}$. We also discuss the relation between the moment matrix and the linear functional. 
Lastly, we discuss the difference between our work and the existing results.

\subsection{Noncommutative polynomials generated by finite alphabets}\label{sec:ncPolyFormal}
Fixed some $I \in \mathbb{N}$, we consider an alphabet $\underline{a} = (a_1, \cdots, a_I)$ consisting of letters $a_i$ for $i=1, \cdots, I$. The reduced monoid generated by the alphabet $\underline{a}$ is denoted by $\monoid{a}$ with \emph{reducibility condition}
\begin{align*}
    a_i^2 = a_i
\end{align*}
for each letter $a_i$. One can introduce the additional orthogonality condition for selected pairs of letters analogous to Equation~\eqref{eq:dagger_constr}. All elements of $\monoid{a}$ are called \emph{words} and are assumed to be in minimal forms; that is, there is no word with two equal consecutive letters. The \emph{empty word} is denoted by $1$. The \emph{length} of the word $\bomega$, denoted by $|\bomega|$, is the number of letters in the word. The ring of \emph{noncommutative polynomials} associated with $\underline{a}$ is then written as $\mathbb{T}^a = \mathbb{R}\monoid{a}$.  

We now endow a $^*$-structure on $\mathbb{T}^a$ by introducing an involution\footnote{In mathematical literature, this involution is usually denoted by $*$. Instead, we use $\dagger$ to keep the notation consistent with physics literature.} such that $\forall i, j = 1, \cdots, I$,
\begin{align*}
    \mathbb{R}^{\dagger} = \mathbb{R}, a_i^{\dagger} = a_i, \\
    (a_ia_j)^{\dagger} = a_j^{\dagger}a_i^{\dagger} = a_ja_i.
\end{align*}
The subset of polynomials that are invariant under involution is called \emph{symmetric} and is denoted by $\mathrm{Sym}\mathbb{T}^a = \{\bsym{p}\in\mathbb{T}^a \mid \bsym{p}^{\dagger}=\bsym{p}\}$. It follows that $\mathbb{T}^a = \mathbb{R}\monoid{a}$ is a $^*$-algebra.

Fix a Hilbert space $\cH$ with a set of variables $\{\underline{A}=(A_1,\dots,A_I) \mid A_i \in \mathcal{B}(\cH)\}$. Define \emph{evaluation} for any $\bsym{p} \in \mathbb{T}^{a}$ to be the matrix polynomial $\bsym{p}(\underline{A})$, by replacing each letter $a_i$ with the operator $A_i$, where $\halpha$ is the operator corresponding to the word $\balpha$.

\begin{example}
Consider $\underline{a} = (a_1, a_2, a_3)$, and $\bsym{p} = a_1a_2a_3 + a_2^2 \in \mathbb{T}^{a}$. For evaluation, we may choose the Hilbert space of a qubit with the variables being Pauli matrices $\Vec{\sigma} = (\sigma_x, \sigma_y, \sigma_z)$. Whence,
\begin{align*}
    \bsym{p}(\Vec{\sigma}) &= \sigma_x\sigma_y\sigma_z + \sigma_y^2 \\
\end{align*}
\end{example}

We can easily generalise the above definition to the case of having multiple alphabets. For our purposes, it suffices to consider only two more alphabets $\underline{b} = (b_1, \cdots, b_J)$ and $\underline{c} = (c_1, \cdots, c_K)$ for some $J,K \in \mathbb{N}$. In this case, the elements of the reduced monoid $\langle \underline{a}, \underline{b}, \underline{c} \rangle$ are again called \emph{words}, assuming the minimal forms due to the additional reducibility conditions
\begin{align*}
    b_j^2 = b_j, c_k^2 = c_k
\end{align*}
for each letter $b_j$ and $c_i$. Moreover, we require all letters from different alphabets to commute with each other, that is,
\begin{align*}
    [a_i,b_j] = [a_i,c_k] = [b_j,c_k] = 0,
\end{align*}
for any $i=1, \cdots, I$, $j=1, \cdots, J$, and $k=1, \cdots, K$. The ring of \emph{bilocal noncommutative polynomials} is defined as $\mathbb{T}^{abc} = \mathbb{R} \langle \underline{a}, \underline{b}, \underline{c} \rangle$ with the same $^*$-structure, where the superscripts of $\mathbb{T}$ are used to refer to the alphabets used. Given any Hilbert space $\cH$, the evaluation can be generalised by introducing more variables $\{\underline{B}=(B_1,\dots,B_J) \mid B_j \in \mathcal{B}(\cH)\}$ and $\{\underline{C}=(C_1,\dots,C_L) \mid C_k \in \mathcal{B}(\cH)\}$.
We remark that both $\mathbb{T}^{a}$ and $\mathbb{T}^{abc}$ can be made into topological vector spaces.

As a convention, we use bold Greek letters to refer to words, while bold letters such as $\bsym{p}$ refer to noncommutative polynomials. In particular, $\balpha$ refers to the words in $\monoid{a}$, $\bbeta$ for $\monoid{b}$, and $\bgamma$ for $\monoid{c}$. Incidentally, any word $\bomega \in \langle \underline{a}, \underline{b}, \underline{c} \rangle$ can be uniquely decomposed $\bomega = \balpha\bbeta\bgamma$. This coincides with our discussion in Section~\ref{sec:basicnotation}, and, by construction, the canonical abstraction of any Commutator Tripartite Quantum Distribution gives rise to the ring $\mathbb{T}^{abc}$.

\subsection{Scalar extension polynomials}\label{sec:ScaExtPoly}
Consider the alphabet $\underline{a}$. For any word $\balpha \in \monoid{a}$, we define \emph{scalar extension} of $\balpha$ as $\kappa_{\balpha}$, imposing the commutation condition:
\begin{align*}
    [\kappa_{\balpha},\balpha'] = [\kappa_{\balpha},\kappa_{\balpha'}] = 0 \quad \forall \balpha, \balpha' \in \monoid{a}.
\end{align*}
We define the scalar extension of the empty word $1$ to be itself, i.e. $\kappa_{1} = 1$.

Despite being fully commuting, note that $\kappa_{\balpha\balpha'} \neq \kappa_{\balpha'\balpha}$ in general. Denote the \emph{infinite} alphabet generated by the scalar extension of $\underline{a}$ by $\underline{\kappa_{A}}$, and let $\underline{\kappa_{A}^n}$ be the finite alphabet consisting of the letters $\kappa_{\balpha}$ with $|\balpha| \leq n$. The elements of the free monoid $\monoid{\kappa_{A}}$ are said to be \emph{pure scalar words} and are generally represented by $\bkappa$. The \emph{length} of the word $\bkappa$ is defined as the number of letters in the word, denoted by $|\bkappa|$. 

Observe also that, different from the reduced monoid $\monoid{a}$, in the free monoid $\monoid{\kappa_{A}}$ we have $\kappa_{\balpha}^2 \neq \kappa_{\balpha}$ in general.
Following the commutation relations, define the commutative ring of \emph{pure scalar extension polynomials} associated with $\underline{a}$ as $T_a = \mathbb{R}[\kappa_{\balpha}, \balpha \in \monoid{a}]$. We define the ring of \emph{scalar extension polynomials} as $\mathbb{T}^a_{\mathrm{SE}} = T_a\monoid{a}$.

Similarly to the previous section, we can generalise $\mathbb{T}^a_{\mathrm{SE}}$ to the case of having multiple alphabets. In the case of having $\underline{a}$, $\underline{b}$, and $\underline{c}$, we define the scalar extension $\kappa_{\bomega}$ of the word $\bomega \in \langle \underline{a}, \underline{b}, \underline{c} \rangle$, which generate an infinite alphabet $\underline{\kappa_{ABC}}$ and a finite alphabet $\underline{\kappa_{ABC}^n}$, analogously. We again use $\bkappa$ for pure scalar words.

Denote the commutative ring of \emph{bilocal pure scalar extension polynomials} by $T_{abc} = \mathbb{R}[\kappa_{\bomega}, \bomega \in \langle \underline{a}, \underline{b}, \underline{c} \rangle]$ and the ring of \emph{bilocal scalar extension polynomials} by $\mathbb{T}^{abc}_{\mathrm{SE}} = T_{abc} \langle \underline{a}, \underline{b}, \underline{c} \rangle$, while imposing commutativity
\begin{align*}
    [\kappa_{\bomega},\balpha] = [\kappa_{\bomega},\bbeta] = [\kappa_{\bomega},\bgamma] = 0.
\end{align*}
It is clear that $\mathbb{T}^a_{\mathrm{SE}}$ contains $\mathbb{T}^a$ and $\mathbb{T}^{abc}_{\mathrm{SE}}$ contains $\mathbb{T}^{abc}$. The $^*$-structure can then be extended to $\mathbb{T}^a_{\mathrm{SE}}$ and $\mathbb{T}^{abc}_{\mathrm{SE}}$ by requiring $T_{\mathrm{SE}}^{\dagger} = T_{\mathrm{SE}}$.

Note that any \emph{word} in $\langle \underline{a}, \underline{\kappa_{A}} \rangle$ is of the form $\bkappa\balpha$. Analogously, any word in $\langle \underline{a}, \underline{b}, \underline{c}, \underline{\kappa_{ABC}} \rangle$ is of the form $\bkappa\balpha\bbeta\bgamma$. In case of ambiguity in notation, we ask for $T_{abc}$-linearity of the $\kappa$ symbol, that is
\begin{align*}
    \kappa_{\bomega + \kappa_{\bomega}\bnu} = \kappa_{\bomega} + \kappa_{\bomega}\kappa_{\bnu},
\end{align*}
in the case that we might have written a ``scalar extension of some scalar extension polynomials''.

We can evaluate the polynomials in $\mathbb{T}^{a}_{\mathrm{SE}}$. Fix a Hilbert space $\cH$ with a set of variables $\{\underline{A}=(A_1,\dots,A_I) \mid A_i \in \mathcal{B}(\cH)\}$, and choose a normalised positive operator $\tau \in \mathcal{B}(\cH)$. Define \emph{$\tau$-evaluation} for any $\bsym{p} \in \mathbb{T}^{a}_{\mathrm{SE}}$ to be the matrix trace polynomial $\bsym{p}_{\tau}(\underline{A})$, by replacing each letter $a_i$ with operator $A_i$ and each $\kappa_{\balpha}$ with the number $\Tr{\tau \halpha} = \tr{\tau}{\halpha}$, where $\halpha$ is the operator corresponding to the word $\balpha$. The generalisation to $\mathbb{T}^{abc}_{\mathrm{SE}}$ is straightforward.

Here is an example of a $\tau$-evaluation.
\begin{example}
Consider $\underline{a} = (a_1, a_2, a_3)$. We have $\bsym{p} = \kappa_{a_1}\kappa_{a_1a_3} + \kappa_{a_2}^2 \in T_a$ and $\bsym{q} = a_1 \kappa_{a_1}\kappa_{a_2}a_3 \in \mathbb{T}^{a}_{\mathrm{SE}}$. For evaluation, we may choose the Hilbert space of a qubit with the variables being Pauli matrices $\Vec{\sigma} = (\sigma_x, \sigma_y, \sigma_z)$. Take $\tau$ as an arbitrary rank-$1$ projector. 
Whence,
\begin{align*}
    \bsym{p}_{\tau}(\Vec{\sigma}) &= \Tr{\tau\sigma_x}\Tr{\tau\sigma_x\sigma_z} + \Tr{\tau\sigma_y}^2 \\
    \bsym{q}_{\tau}(\Vec{\sigma}) &= \sigma_x \Tr{\tau\sigma_x} \Tr{\tau\sigma_y} \sigma_z.
\end{align*}
\end{example}

\subsection{Moment matrix and positive unital symmetric linear functional}\label{sec:AppendixMMatrixFunctionalEquivalence}
We discuss the equivalence between moment matrices, such as $\Gamma^n$ in Section~\ref{sec:standardNPA}, and linear functionals of $\mathbb{T}^{abc}$. The discussion here has an obvious generalisation to $\mathbb{T}^{abc}_{\mathrm{SE}}$.

Without loss of generality, we work with $\mathbb{T}^a = \mathbb{R}\monoid{a}$ for convenience. Fix $d \in \mathbb{N}$, denote $\mathbb{T}^a_{\leq 2d}$ as all polynomials with words of length $\leq 2d$.

A matrix $H$ is indexed by words of length $\leq d$ is said to satisfy \emph{noncommutative Hankel condition} if
\begin{align*}
    H_{\bomega,\bnu} = H_{\bomega',\bnu'} \text{, whenever } \bomega^{\dagger}\bnu = \bomega'^{\dagger}\bnu'.
\end{align*}
The matrix $H$ indexed by words of length $\leq d$ is a \emph{moment matrix of order $d$}, also known as \emph{Hankel matrix}, if it satisfies the noncommutative Hankel condition, is positive semi-definite, and $H_{1,1} = 1$. We note that this coincides with the properties of the moment matrix $\Gamma^n$ in the standard NPA hierarchy, which justifies the terminology ``moment matrix''.

To study the properties of $\mathbb{T}^{abc}$, a natural object to consider is its dual, which is the space of its linear functionals. In particular, we focus on \emph{unital symmetric positive linear functional} $L:\mathbb{T}^a_{\leq 2d} \to \mathbb{R}$. A linear functional $L$ is said to be \emph{unital} if $L(1) = 1$ and \emph{symmetric} if $L(\bsym{p}) = L(\bsym{p}^{\dagger})$ for all $\bsym{p}$ in the domain of $L$. If $L(\bsym{p}^{\dagger}\bsym{p}) \geq 0$ for all $\bsym{p} \in \mathbb{T}^a_{\leq 2d}$, we say that $L$ is positive.

These two objects are, in fact, equivalent. Given a unital symmetric positive linear functional $L:\mathbb{T}^a_{\leq 2d} \to \mathbb{R}$, we can define a moment matrix $H$ of order $d$ via formula $H_{\bomega,\bnu} = L(\bomega^{\dagger}\bnu)$, and vice versa. To conclude the equivalence, the only nontrivial thing that needs to be checked is the positivity.

Indeed, for any two polynomials $\bsym{p} = \sum_{\bomega}p_{\bomega}\bomega, \bsym{q} = \sum_{\bnu}q_{\bnu}\bnu \in \mathbb{T}^a_{\leq d}$, we obtain two vectors $\vec{p}$ and $\vec{q}$ given by the real coefficients $p_{\bomega}$ and $q_{\bnu}$. We then compute
\begin{align*}
    L(\bsym{p}^{\dagger}\bsym{q}) = \sum_{\bomega,\bnu} p_{\bomega}^{\dagger}q_{\bnu}L(\bomega^{\dagger}\bnu) = \sum_{\bomega,\bnu} p_{\bomega}q_{\bnu}H_{\bomega,\bnu} = \vec{p}^T H \vec{q},
\end{align*}
and the equivalence follows. 

We note that a general unital symmetric positive linear functional $L$ corresponds only to a moment matrix without extra constraints (other than the Hankel conditions). To recover constraints such as in Definition~\ref{def:FactorBMMatrix}, we require the positivity of $L$ on some suitable subsets of the symmetric polynomials.

\section{Proof for factorisation bilocal NPA hierarchy}\label{sec:nonSDPhardproof}
\begin{proof}[Proof of Theorem~\ref{thm:ConvnonSDPBilocNPAHierarchy}]
We start with the easier direction, given any $n \geq 3$ and suppose that we have a Commutative Bilocal Quantum Distribution $\vec{Q}$ as in Definition~\ref{def:CommutatorBilocQDistrib}. By purity we assume $\tau = \ketbra{\phi_1}{\phi_1}$ for some normalised vector $\ket{\phi_1} \in \cH$. We now wish to show the factorisation bilocal NPA hierarchy of order $n$ is satisfied for all $n \geq 3$. 

To this end, we use the canonical abstraction and define the $n$th order moment matrix $\Gamma^n$ through $\Gamma^n_{\bomega,\bnu} = \tr{\tau}{\homega\hnu}$, where $\homega$ and $\hnu$ are the operators corresponding to the words $\bomega, \bnu \in \langle \underline{a}, \underline{b}, \underline{c} \rangle$.

By construction, $\Gamma^n$ clearly satisfies the linear constraints and recovers $\vec{Q}$ with the appropriate choice of words. Positivity is ensured since $\Gamma^n$ is a Gram matrix. We only need to check the nonlinear factorisation constraint. Let $\balpha$ and $\bgamma$ be words in $\monoid{a}$ and $\monoid{c}$, respectively. We compute
\begin{align*}
    \tr{\tau}{\halpha^{\dagger}\tau\hgamma} &= \Tr{\tau\halpha^{\dagger}\tau\hgamma} \\
    &= \Tr{\rho\sigma\halpha^{\dagger}\rho\sigma\hgamma} \\
    &= \Tr{\rho\halpha^{\dagger}\sigma\rho\sigma\hgamma} \\
    &= \Tr{\halpha^{\dagger}\sigma\rho\sigma\hgamma\rho} \\
    &= \Tr{\halpha^{\dagger}\sigma\rho\sigma\rho\hgamma} \\
    &= \Tr{\halpha^{\dagger}\tau^2\hgamma} = \tr{\tau}{\halpha^{\dagger}\hgamma},
\end{align*}
where the commutativity and cyclicity of the trace are applied several times. Then
\begin{align*}
    \Gamma^n_{\balpha,\bgamma} &= \tr{\tau}{\halpha^{\dagger}\hgamma} \\
                    &= \tr{\tau}{\halpha^{\dagger}\tau\hgamma} \\
                    &= \sandwich{\phi_1}{\halpha^{\dagger}}{\phi_1}\sandwich{\phi_1}{\hgamma}{\phi_1} \\
                    &= \tr{\tau}{\halpha^{\dagger}}\tr{\tau}{\hgamma} = \Gamma^n_{\balpha,1}\Gamma^n_{1,\bgamma},
\end{align*}
as desired.

Conversely, we construct a GNS representation of $\mathbb{T}^{abc}$ given the existence of Factorisation Bilocal Moment Matrix $\Gamma^n$ of order $n$ for any natural number $n \geq 3$, exactly as in~\cite{NPA2008}. We then construct the operators $\rho$ and $\sigma$ with the help of nonlinear factorisation constraints.

We only provide a quick sketch for the GNS representation here since it was explicitly done in~\cite{NPA2008}. 
There exists a subsequence $\{\Gamma^{n_i}\} $ of $\Gamma^n$ that converges to a limit $\Gamma^{\infty}$ (in the weak-$^*$ topology, when each $\Gamma^{n}$ is seen as a vector in $\ell_\infty$). 
In particular, the convergence is pointwise, i.e. $\lim_{i \to \infty}\Gamma^{n_i}_{\bomega,\bnu} = \Gamma^{\infty}_{\bomega,\bnu}$ for all $\bomega$, $\bnu$, and hence the limit $\Gamma^{\infty}$ satisfies all linear constraints and nonlinear factorisation constraints. Let $\Gamma^{\infty}_N$ denote the extracted submatrix of $\Gamma^{\infty}$ indexed by words of length $\leq N$, it follows that $\Gamma^{\infty}_N$ is positive for all $N$. 

Now we apply a sequential Cholesky decomposition of the matrices $\Gamma^{\infty}_N$, which realises $\Gamma^{\infty}$ as a Gram matrix of an \emph{infinite} family of vectors $\{\ket{\phi_{\bomega}}\}$ indexed by all words $\bomega$ and
\begin{align*}
    \Gamma^{\infty}_{\bomega,\bnu} = \braket{\phi_{\bomega}}{\phi_{\bnu}}.
\end{align*}
The desired Hilbert space $\cH$ is then defined as the closure of the linear span of $\{\ket{\phi_{\bomega}}\}$, with respect to the inner product induced by the above formula. Let $\ket{\phi_1} = \ket{v^{00}}$ denote the vector corresponding to the null word $1$, we define the global state $\tau = \ketbra{\phi_1}{\phi_1} = \ketbra{v^{00}}{v^{00}}$ which is pure. 

Now, for each word $\bomega = \balpha\bbeta\bgamma \in \langle \underline{a}, \underline{b}, \underline{c} \rangle$, we can construct the corresponding operators with the same decomposition $\homega = \halpha\hbeta\hgamma$, where the commutation relation
\begin{align*}
    [\halpha, \hbeta] = [\halpha, \hgamma] = [\hbeta, \hgamma]
\end{align*}
is satisfied. For each letter $a_i \in \underline{a}$, the operator $\hat{a}_i$ is a Hermitian projector acting on $\cH$ via the formula
\begin{align*}
    \hat{a}_i\ket{\phi_{\bnu}} = \ket{\phi_{\hat{a}_i\bnu}},
\end{align*}
and meeting all standard NPA hierarchy constraints. Similarly to the letters $b_j \in \underline{b}$ and $c_k \in \underline{c}$.

To construct operators $\rho$ and $\sigma$, define $V_{AB_L} = \Span{P(\{\halpha\})}_P$ as the subspace generated by all polynomials that only evaluate operators $\halpha$. Clearly, the space $V_{AB_L}$ is not empty since it contains $\ket{v^{00}}$. Starting from $\ket{v^{00}}$, we use the Gram--Schmidt process to construct an orthogonal basis $\{P_i(\{\halpha\})\ket{v^{00}} = \ket{v^{i0}}, i \in I\}$ for some polynomials $P_i$ and some countable index set $I$. We similarly introduce $V_{B_RC} = \Span{Q(\{\hgamma\})}_Q$ with the orthogonal basis $\{Q_j(\{\hgamma\})\ket{v^{00}} = \ket{v^{0j}}, j \in J\}$ for some countable index set $J$. We may assume $0 \in I \cap J$. Now, we define
\begin{align}
    \rho &= \sum_j \ketbra{v^{0j}}{v^{0j}} \\
    \sigma &= \sum_i \ketbra{v^{i0}}{v^{i0}}.
\end{align}

It remains to check that both $\rho$ and $\sigma$ satisfy the desired properties in Definition~\ref{def:CommutatorBilocQDistrib}. Clearly we have $\rho^2 = \rho$, $\sigma^2 = \sigma$.

We claim that $\braket{v^{i0}}{v^{0j}} = 0$ for all $i \in I \setminus \{0\}$ and $j \in J \setminus \{0\}$. Indeed, if $i,j \neq 0$, we may assume $P_i(\{\halpha\}) = \sum_{\halpha}p^i_{\halpha}\halpha$ and $Q_j(\{\hgamma\}) = \sum_{\hgamma}q^j_{\hgamma}\hgamma$, for some real coefficients $p^i_{\halpha}$ and $q^j_{\hgamma}$. Then
\begingroup \allowdisplaybreaks
\begin{align*}
    \braket{v^{i0}}{v^{0j}} &= \sandwich{v^{00}}{P_i(\{\halpha\})^{\dagger} \cdot Q_j(\{\hgamma\})}{v^{00}} \\
    &= \sum_{\halpha, \hgamma}(p^i_{\halpha})^{\dagger}q^j_{\hgamma}\cdot\sandwich{v^{00}}{\halpha^{\dagger}\hgamma}{v^{00}} \\
    &= \sum_{\halpha, \hgamma}(p^i_{\halpha})^{\dagger}q^j_{\hgamma}\cdot\Gamma^{\infty}_{\balpha, \bgamma} \\
    &= \sum_{\halpha, \hgamma}(p^i_{\halpha})^{\dagger}q^j_{\hgamma}\cdot\Gamma^{\infty}_{\balpha, 1} \cdot \Gamma^{\infty}_{1, \bgamma} \\
    &= \sum_{\halpha, \hgamma}(p^i_{\halpha})^{\dagger}q^j_{\hgamma}\cdot\sandwich{v^{00}}{\halpha^{\dagger}}{v^{00}}\sandwich{v^{00}}{\hgamma}{v^{00}} \\
    &= \sandwich{v^{00}}{P_i(\{\halpha\})^{\dagger}}{v^{00}}\sandwich{v^{00}}{Q_j(\{\hgamma\})}{v^{00}} \\
    &= \braket{v^{i0}}{v^{00}} \braket{v^{00}}{v^{0j}} = 0,
\end{align*}
\endgroup
where both nonlinear factorisation constraint and orthogonality are applied. It follows that
\begin{align*}
    \rho \cdot \sigma = \sum_{ij} \ketbra{v^{0j}}{v^{0j}} \cdot \ketbra{v^{i0}}{v^{i0}} = \ketbra{v^{00}}{v^{00}} = \tau = \sigma \cdot \rho.
\end{align*}

Finally, we check $[\rho,\hgamma] = [\halpha,\sigma] = 0$. Observe that the support of $[\rho,\hgamma]$ is within $V_{B_RC}$ and
\begin{align*}
    \sandwich{v^{0j}}{[\hgamma,\rho]}{v^{0j'}}&=\sum_k\sandwich{v^{0j}}{\hgamma}{v^{0k}} \braket{v^{0k}}{v^{0j'}} -\sum_k\braket{v^{0j}}{v^{0k}}\sandwich{v^{0k}}{\hgamma}{v^{0j'}}\\
    &=\sandwich{v^{0j}}{\hgamma}{v^{0j'}}-\sandwich{v^{0j}}{\hgamma}{v^{0j'}}=0,
\end{align*}
for all $j, j'\in J$. It follows that $[\rho,\hgamma] = 0$. The analogous argument with the basis on $V_{AB_L}$ shows that $[\halpha,\sigma] = 0$. It is straightforward that Definition~\ref{def:CommutatorBilocQDistrib}.(iv) is true, which finishes the proof.
\end{proof}

\section{Scalar extension bilocal hierarchy in details}\label{sec:AppendixBilocScaExtHierarchy}
In this appendix, we aim to formalise the scalar extension bilocal hierarchy in Definition~\ref{def:BSE_hierarchy} and provide a convergence proof for Theorem~\ref{thm:ConvScalarBilocNPAHierarchy}. 

Initially, we consider the hierarchy detailed in Appendix~\ref{sec:ScaExtbyKlep}, which imposes constraints with localising matrices, a specific case of state polynomial optimisation as first introduced by~\cite{klep2024state}. 
Subsequently, we introduce the scalar extension bilocal hierarchy in Appendix~\ref{sec:DefinitionBilocScaExtHierarchy}, which allows for the expression of equality constraints using the polarisation technique of Refs.~\cite{ligthart2023inflation} as an alternative to localising matrices.
In Appendix~\ref{sec:ScaExthardproof}, building up from the proof of Theorem~\ref{thm:ConvScalarBilocNPAHierarchy} due to~\cite{klep2024state}, we demonstrate the convergence of this alternative hierarchy to Projector Bilocal Quantum Distributions.

\subsection{State polynomial framework in bilocal scenario}\label{sec:ScaExtbyKlep}
We show that the discussion in Section~\ref{Sec:MotivNewScalarExtensionHierarchy} leads to a special case of the more general state polynomial optimisation problem in~\cite{klep2024state}, which serves as a foundation to bilocal scalar extension hierarchy later introduced in Appendix~\ref{sec:DefinitionBilocScaExtHierarchy}.

We may denote the PVM and mutual commuting conditions of Alice, Bob, and Charlie by the set
\begin{align*}
    \mathcal{C} = \{ \A^2 = \A, \B^2 = \B, \C^2 = \C, \sum_a \A = \sum_b \B = \sum_c \C = \id, \\ 
    [\balpha, \bbeta] = [\balpha, \bgamma] = [\bbeta, \bgamma] = 0\}.
\end{align*}
Recall that $\mathcal{N} = \langle \underline{\kappa_{ABC}}, \underline{a}, \underline{b}, \underline{c} \rangle$ is the set of words of the form $\balpha\bbeta\bgamma\bkappa$, and $\mathcal{N}^n \subset \mathcal{N}$ is the set of words that are $\le n$ in length. Given a probability distribution $\Vec{Q}$, the set of constraints is defined to be
\begin{equation}\label{eq:ConstraintSetKlep}
\begin{aligned}
    \mathcal{Q} = \mathcal{R} \cup \{ \pm (\kappa_{\balpha}\kappa_{\bgamma} - \kappa_{\balpha\bgamma}) \} \cup \{ \pm (q(abc|xyz) - \kappa_{\A\B\C}) \}.
\end{aligned}
\end{equation}
For each $n \geq 3$, we have the truncated constraint set $\mathcal{Q}^{n} = \mathcal{Q} \cap \mathcal{N}^n$

Now, we construct the moment matrix $\Omega^n$ whose entries are indexed by all words in $\mathcal{N}^n$. 
In this construction, we only consider words that are unique up to commutation relations.  
Recall Equation~\eqref{eq:AssociatedFunctional}, there is an \emph{associated linear functional $L_{\Omega^n}: \mathbb{R}[\mathcal{N}^n] \to \mathbb{R} $} with which we can define localising matrices via Equation~\eqref{eq:AssociatedFunctionalLocalising}.

\begin{theorem}\label{thm:ScalExtHierachyKlep}
    Suppose that $\vec{Q}=\{q(abc|xyz)\}$ is a tripartite probability distribution. If for each $n \geq 3$, there exists a moment matrix $\Omega^n$ indexed by all words of $\mathcal{N}^n$ such that 
    \begin{enumerate}[(i)]
        \item $\Omega^n_{1,1} = 1$,
        \item $\Omega^n$ is positive,
        \item it satisfies all the linear constraints 
        \begin{align*}
            \Omega^n_{\bomega,\bnu}=\Omega^n_{\bomega',\bnu'}
        \end{align*}
        whenever $\kappa_{\bomega^{\dagger} \bnu} = \kappa_{{\bomega'}^{\dagger} \bnu'}$,
        \item for each $\bsym{p} \in \mathcal{Q}^n$, the localising matrix $\Omega^n(\bsym{p})_{\bomega, \bnu}$ is positive semi-definite,
    \end{enumerate}
    then $\Vec{Q}$ is a Projector Bilocal Quantum Distribution.
    The converse also holds.
\end{theorem}
\begin{proof}
    This follows from~\cite[Lemma 6.5]{klep2024state} as a special case of state polynomial optimisation. We note that an important ingredient of the proof is the Kadison--Dubois Theorem, which implies that $L_{\Omega^{\infty}} = \int_{\mathrm{ex}(\mathcal{S})} d\mu(K) K$ for the linear functional associated with the limit $\Omega^{\infty}$. 
    Here $\mathcal{S}$ is the set of symmetric unital linear functionals $\mathbb{R}[\mathcal{N}^n] \to \mathbb{R}$ that are nonnegative over the set $\mathcal{Q}$, and $\mathrm{ex}(\mathcal{S})$ denotes the extremal points. That is, every $K \in \mathrm{ex}(\mathcal{S})$ is a $*$-homomorphism in the sense that $K(\kappa_{\bomega} \kappa_{\bnu}) = K(\kappa_{\bomega})K(\kappa_{\bnu})$, i.e. a pure state on the abelian $C^*$-algebra generated by the scalar extension letters.
    Finally, it can be shown that there exists some extremal point $\tilde{K}$, and check that $\Gamma^{\infty}_{\bmu, \bnu} = \tilde{K}(\kappa_{\bmu^{\dagger} \bnu})$ defines a Factorisation Bilocal Moment Matrix. The claim then follows from Theorem~\ref{thm:ConvnonSDPBilocNPAHierarchy}.
\end{proof}

\subsection{Definition of bilocal scalar extension hierarchy}\label{sec:DefinitionBilocScaExtHierarchy}
We note that the hierarchy introduced in Theorem~\ref{thm:ScalExtHierachyKlep} imposes all constraints through localising matrices. 
In this section, we define the bilocal scalar extension hierarchy, which is an adaptation to the aforementioned hierarchy such that we impose all equality constraints using the polarisation technique of~\cite{ligthart2023inflation} in a systematic way.

Define the \emph{polarisation constraints}
\begin{equation}\label{eq:FactorisationConstraintSetDef}
    \begin{aligned}
        \mathcal{C} = \{ \kappa_{\balpha\bgamma} - \kappa_{\balpha} \kappa_{\bgamma} \mid \forall \balpha, \bgamma \},
    \end{aligned}
\end{equation}
which models the factorisation independence in bilocal scenario, and let $\mathcal{C}^n$ be the \emph{truncated polarisation constraints} given by the restriction to words in $\mathcal{N}^n$.
Given any tripartite probability distribution $\vec{Q}$, define
\begin{align}\label{eq:fCompatibilityConstraint}
    \mathcal{Q} = \{\kappa_{\A\B\C} - q(abc|xyz) \mid \forall a,b,c,x,y,z \},
\end{align}
which is associated with the compatibility to $\Vec{Q}$.
Note that each constraint in $\mathcal{Q}$ from Equation~\eqref{eq:ConstraintSetKlep} correspond to a polynomial in $\mathcal{C} \cup \mathcal{Q}$. 

Now, we restate Definition~\ref{def:BSE_hierarchy}, which constructs the Scalar Extension Bilocal Moment Matrix $\Omega^n$ whose entries are only indexed by scalar extension words in $\mathcal{N}^n$\footnote{This is different from the original scalar extension~\cite{ScalarExtension2019} which considers moment matrix indexed by $\langle \underline{a}, \underline{b}, \underline{c}, \underline{\kappa_{A}^n} \rangle$. However, the inclusion is not necessary for a convergence proof.}.  

\begin{definition}[Scalar Extension Bilocal Moment Matrix] %\label{def:ScalarExtBMMatrix}
Fix $n \in \mathbb{N}$, let $\Omega^n$ be a square matrix indexed by all words $\bomega \in \mathcal{N}^n$ of length $|\bomega|\leq n$. We say that $\Omega^n$ is a \emph{Scalar Extension Moment Matrix of order $n$} if 
\begin{enumerate}[(i)]
\item $\Omega^n_{1,1} = 1$.
\item $\Omega^n$ is positive semi-definite.
\item the localising matrix $\Omega^n(N - \sum_{\bomega \in \mathcal{N}^n} \kappa_{\bomega}^{\dagger}\kappa_{\bomega})$ is positive semi-definite, where $N= | \mathcal{N}^n |$.
\item it satisfies all the linear constraints 
\begin{align*}
  \Omega^n_{\bomega,\bnu}=\Omega^n_{\bomega',\bnu'}
\end{align*}
whenever $\kappa_{\bomega^{\dagger} \bnu} = \kappa_{{\bomega'}^{\dagger} \bnu'}$. 
\item For any polynomial $c \in \mathcal{F}$, we have
\begin{align}\label{eq:ScalExtFactConstraints}
    L_{\Omega^n}((\kappa_c)^2) = 0.
\end{align}
\end{enumerate}
We further say that $\Omega^n$ is compatible with the tripartite distribution $\Vec{Q}$ iff 
\begin{align}\label{eq:ScalExtCompatabilityConstraints}
    L_{\Omega^n}((\kappa_f)^2) = 0
\end{align}
for every $f \in \mathcal{Q}$.
An infinite matrix $\Omega^\infty$ is said to be a \emph{Scalar Extension Bilocal Moment Matrix} iff all of its principal extracted matrices are Scalar Extension Bilocal Moment Matrix of some finite order.
\end{definition}

\begin{definition}[Scalar extension bilocal hierarchy]
Let $\vec{Q}=\{q(abc|xyz)\}$ be a tripartite probability distribution.
We say that $\vec{Q}$ passes the \emph{scalar extension bilocal hierarchy} if for any integer $n\geq 3$, there exists a Scalar Extension Bilocal Moment Matrix $\Omega^n$ of size $n$ which is compatible with $\vec{Q}$.
\end{definition}

\subsection{Convergence of bilocal scalar extension hierarchy}\label{sec:ScaExthardproof}
In this section, we restate and prove Theorem~\ref{thm:ConvScalarBilocNPAHierarchy}.
\begin{theorem}[Convergence of the scalar extension bilocal hierarchy] 
Suppose that $\vec{Q}=\{q(abc|xyz)\}$ is a tripartite probability distribution. Then $\vec{Q}$ passes all the scalar extension bilocal hierarchy tests iff $\vec{Q}$ is a Projector Bilocal Quantum Distribution.
\end{theorem}
\begin{proof}
    We adapt the argument from~\cite{ligthart2023inflation} to Theorem~\ref{thm:ScalExtHierachyKlep}, building upon its proof. 
    We note that, a key assumption for this extension is the boundedness of words in $\mathcal{N}^n$, recognised as the archimedean property for quadratic modules. In Theorem~\ref{thm:ScalExtHierachyKlep}, boundedness is indirectly enforced through PVM properties, corresponding to the set $\{ \kappa_{\bnu^\dagger \bomega \bnu} \mid \bomega \in \mathcal{R}, \bnu \in \langle \underline{a}, \underline{b}, \underline{c} \rangle \}$ in Equation~\ref{eq:ConstraintSetKlep}, leveraging a standard argument for Positivestallens\"atze (e.g. \cite[Lemma~5.4]{klep2024state}). 
    However, in Definition~\ref{def:BSE_hierarchy}, PVM properties are instead enforced by polarisation technique, and Definition~\ref{def:BSE_hierarchy}.(iii) is needed to assure the adaptation.

    Now recall the proof for Theorem~\ref{thm:ScalExtHierachyKlep}, the Kadison--Dubois Theorem implies that $L_{\Omega^{\infty}} = \int_{\mathrm{ex}(\mathcal{S})} d\mu(K) K$ for the linear functional associated with the limit $\Omega^{\infty}$.
    Since every $K$ is $*$-homomorphic, we must have $K(\kappa_{c}^2) = K(\kappa_c)^2 \geq 0$ for every $c \in \mathcal{C}$, and $K(\kappa_f^2)=K(\kappa_f)^2 \geq 0$ for every $f \in \mathcal{Q}$. Combining with Equation~\eqref{eq:SDP_constraints} and Equation~\eqref{eq:SDP_compatibility}, it follows that for all $c \in \mathcal{C}$, $f \in \mathcal{Q}$, we have $K(\kappa_c)=K(\kappa_f) = 0$ $\mu$-almost everywhere.
    
    The final step mirrors that of Theorem~\ref{thm:ScalExtHierachyKlep}. By fixing some extremal $\tilde{K}$, we can easily verify that $\Gamma^{\infty}_{\bmu, \bnu} = \tilde{K}(\kappa_{\bmu^{\dagger} \bnu})$ defines a Factorisation Bilocal Moment Matrix, thereby concluding the proof via Theorem~\ref{thm:ConvnonSDPBilocNPAHierarchy}.
\end{proof}

\begin{remark}\label{rmk:ConvergenceProofScaExt}
We offer two observations on the aforementioned proof.
Firstly, the necessity of squaring constraints $c \in \mathcal{C}$ and $f \in \mathcal{Q}$ is evident in the argument. Indeed, this ensures that $K(\kappa_{c}^2)$ and $K(\kappa_f^2)$ are positive, so that we can use the constraint $L(K(\kappa_{c}^2)) = L(K(\kappa_f^2)) = 0$ to conclude that $K(\kappa_{c}^2)=K(\kappa_f^2) = 0$ $\mu$-almost everywhere.

Secondly, Theorem~\ref{thm:ConvScalarBilocNPAHierarchy} serves as an existence proof, suggesting that convergence does not generally allow the extraction of a projector bilocal quantum representation for $\Vec{Q}$, unlike the factorisation bilocal NPA hierarchy. The optimiser within the scalar extension bilocal hierarchy is typically extractable only when the limit point happens to be extremal.
\end{remark}

\section{The inflation-NPA hierarchy}\label{sec:AppendixInflationNPA}
\subsection{Formalism of inflation-NPA hierarchy in bilocal scenario}\label{sec:formalInflationNPA}
We use the noncommutative polynomial formalism to encode the inflational data of inflation order $m$. That is, we associate PVMs $\{\hAinf{i}\}$, $\{\hBinf{j}{k}\}$, and $\{\hCinf{l}\}$ with alphabets $\underline{a}^i$, $\underline{b}^{jk}$, and $\underline{c}^l$, for $i,j,k,l=1,\cdots,m$. We impose the same algebraic rules on all the letters inherited from the underlying operators and inflational structure. 

Therefore, any word $\bomega \in \langle \underline{a}^i, \underline{b}^{jk}, \underline{c}^l \mid i,j,k,l=1,\cdots,m \rangle$ admits a decomposition\footnote{Unlike the decomposition of words in non-inflated scenario, the expression here is not unique. Nonetheless, it is always possible to choose a canonical representation by fixing a rule on the ordering of all $\bbeta^{jk}$ that respect the partial commutativity. We omit proposing such a rule since it is cumbersome and irrelevant to the following discussion.}
\begin{align*}
    \bomega = \balpha^1 \cdots \balpha^m \bbeta \bgamma^1 \cdots \bgamma^m,
\end{align*}
where $\bbeta \in \langle \underline{b}^{jk} \mid j,k=1,\cdots,m \rangle$. Note that we cannot further write $\bbeta = \bbeta^{11} \cdots \bbeta^{jk} \cdots \bbeta^{mm}$ since $\bbeta^{jk}$ does not commute with $\bbeta^{j'k'}$ in general.

The resulting ring of noncommutative polynomials is 
\begin{align*}
    \mathbb{T}^{a^1 \cdots a^m b^{11} \cdots b^{mm} c^1 \cdots c^m} = \mathbb{R}\langle \underline{a}^i, \underline{b}^{jk}, \underline{c}^l \mid i,j,k,l=1,\cdots,m \rangle.
\end{align*}

The product symmetric group $S_m \times S_m$ naturally acts on the inflation words by permuting the superscripts. For example, let $\theta, \theta' \in S_m$ and consider the word $\bomega = \balpha^1 \cdots \balpha^m \bbeta^{11} \cdots \bbeta^{jk} \cdots \bbeta^{mm} \bgamma^1 \cdots \bgamma^m$. Then
\begin{align*}
    (\theta\times\theta')(\bomega) = \balpha^{\theta(1)} \cdots \balpha^{\theta(m)} \bbeta^{{\theta(1)}{\theta'(1)}} \cdots \bbeta^{\theta(j)\theta'(k)} \cdots \bbeta^{{\theta(m)}{\theta'(m)}} \bgamma^{\theta'(1)} \cdots \bgamma^{\theta'(m)}.
\end{align*}

Now we can define the Inflation Bilocal Moment Matrix and inflation-NPA bilocal hierarchy.

\begin{definition}[Inflation Bilocal Moment Matrix]\label{def:InflationMMatrix}
Fix $n, m \in \mathbb{N}$, let $\Xi^{n,m}$ be a square matrix indexed by all words $\bomega \in \langle \underline{a}^i, \underline{b}^{jk}, \underline{c}^l \mid i,j,k,l=1,\cdots,m \rangle$ of length $|\bomega|\leq n$. We say that $\Xi^{n,m}$ is a \emph{$m$th-Inflation Moment Matrix of order $n$} if
\begin{enumerate}[(i)]
\item ${\Xi^{n,m}}_{1,1} = 1$.
\item $\Xi^{n,m}$ is positive.
\item it satisfies all the linear constraints 
\begin{align*}
  {\Xi^{n,m}}_{\bomega,\bnu}={\Xi^{n,m}}_{\bomega',\bnu'}
\end{align*}
whenever $\bomega^\dagger\bnu = \bomega'^\dagger\bnu'$.
\item it satisfies the $S_m \times S_m$ symmetry
\begin{align}
    \Xi^{n,m}_{1, \bomega} = \Xi^{n,m}_{1, (\theta\times\theta')(\bomega)},
\end{align}
for all permutations $\theta, \theta \in S_m$.
\end{enumerate}
We further say that $\Xi^{n,m}$ is compatible with the tripartite distribution $\vec{Q}$ iff 
\begin{align*}
    {\Xi^{n,m}}_{1,\prod^m_{i=1}\Ainf{i}\Binf{i}{i}\Cinf{i}}=\prod^m_{i=1} q(a^ib^{i,i}c^i|x^iy^{i,i}z^i).
\end{align*}
An infinite matrix $\Xi^{\infty,m}$ is said to be a \emph{$m$th-Inflation Moment Matrix} iff all of its principal extracted matrices are $m$th-Inflation Moment Matrix of some finite order.
\end{definition}

\begin{remark}
As noted in the main text, in the limit of $m$, $n$ to infinity, the extended nonlinear constraints
\begin{align}
    \Xi^{\infty,\infty}_{1, \prod^s_{i=t}\balpha^i\bbeta^{ii}\bgamma^i} = \prod^s_{i=t}\Xi^{\infty,\infty}_{1, \balpha^i\bbeta^{ii}\bgamma^i}
\end{align}
are satisfied due to de Finetti Theorem, where $1 \leq t \leq s \leq m$ and arbitrary diagonal words $\bomega^i = \balpha^i\bbeta^{ii}\bgamma^i$ for all $i=1, \cdots n$ with length $\leq n$.
\end{remark}

\begin{definition}[Bilocal inflation-NPA hierarchy] \label{def:InflationNPATest}
Let $\vec{Q}=\{q(abc|xyz)\}$ be a tripartite probability distribution. We say that $\vec{Q}$ passes \emph{bilocal inflation-NPA hierarchy} if for all integers $m \geq 1$ and $n\geq 3$, there exists a $m$th-Inflation Bilocal Moment Matrix $\Xi^{n,m}$ of order $n$ that is compatible with $\vec{Q}$. 
\end{definition}

As the archetype for quantum inflation for bilocal scenario, the Tensor Bilocal Quantum Distribution passes the bilocal inflation-NPA hierarchy, as expected.

\begin{lemma}
All Tensor Bilocal Quantum Distributions $\vec{P}$ passes the bilocal inflation-NPA hierarchy.
\end{lemma}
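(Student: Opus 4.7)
The plan is to directly construct the required $m$th-Inflation Bilocal Moment Matrix $\Xi^{n,m}$ from a tensor realisation of $\vec{P}$, and then check that Definition~\ref{def:InflationMMatrix}(i)--(iv) and compatibility all hold by inspection. Starting from the data provided by Definition~\ref{def:TensorBilocQDistrib}, i.e.\ Hilbert spaces $\cH_A \otimes \cH_{B_L} \otimes \cH_{B_R} \otimes \cH_C$, states $\rho_{AB_L}$, $\sigma_{B_RC}$, and PVMs $\{\hA\},\{\hB\},\{\hC\}$, I would form $m$ independent copies of each local Hilbert space, labelled $\cH_A^i, \cH_{B_L}^j, \cH_{B_R}^k, \cH_C^l$ for $i,j,k,l=1,\ldots,m$. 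On these I place the copies $\hAinf{i}$ on $\cH_A^i$, $\hCinf{l}$ on $\cH_C^l$, $\hBinf{j}{k}$ on $\cH_{B_L}^j\otimes\cH_{B_R}^k$, and likewise the state copies $\rho^i$ on $\cH_A^i\otimes\cH_{B_L}^i$ and $\sigma^l$ on $\cH_{B_R}^l\otimes\cH_C^l$. The global state is the tensor product $\tau_m=\bigotimes_{i=1}^m \rho^i \otimes \bigotimes_{l=1}^m \sigma^l$ on the full Hilbert space $\cH$, and all operators are silently padded with identities on the remaining factors.

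With these objects I would define $\Xi^{n,m}_{\bomega,\bnu} := \tr{\tau_m}{\homega^\dagger\hnu}$ for all words $\bomega,\bnu\in\langle\underline{a}^i,\underline{b}^{jk},\underline{c}^l\rangle$ of length at most $n$, where each letter is replaced by the corresponding inflated operator. Conditions (i) and (ii) follow immediately: $\Xi^{n,m}_{1,1}=\Tr{\tau_m}=1$ since each $\rho^i,\sigma^l$ is a trace-one pure-state projector, and positivity is automatic since $\Xi^{n,m}$ is a Gram matrix with respect to the vectors $\{\hnu\,\sqrt{\tau_m}\}$ in the Hilbert--Schmidt inner product (equivalently, one passes to a purification of $\tau_m$).

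For the linear Hankel-type constraints (iii), the key observation is that the algebraic rules built into the inflated alphabet -- PVM projector relations, self-adjointness, and commutativity of letters labelled by disjoint Hilbert spaces (so that, e.g.\ $\hAinf{i}$ commutes with $\hAinf{i'}$ for $i\neq i'$, and $\hBinf{j}{k}$ commutes with $\hBinf{j'}{k'}$ whenever $\{j,k\}\cap\{j',k'\}=\emptyset$) -- all hold at the operator level by construction of the padded tensor factors; hence any algebraic identity $\bomega^\dagger\bnu=\bomega'^\dagger\bnu'$ lifts to $\homega^\dagger\hnu=\homega'^\dagger\hnu'$, and the trace values coincide. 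Condition (iv) is the only slightly less mechanical step: since $\tau_m$ is a symmetric tensor power of identical copies of $\rho_{AB_L}$ and (independently) of $\sigma_{B_RC}$, it is invariant under any joint permutation of the $\rho$-labels and any independent permutation of the $\sigma$-labels; such a permutation acts on the inflated operators exactly by relabelling the superscripts, so $\tr{\tau_m}{\homega}=\tr{\tau_m}{\widehat{(\theta\times\theta')(\bomega)}}$ for all $\theta,\theta'\in S_m$.

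Finally, compatibility with $\vec{P}$ follows from the tensor structure: when the monomial involves only the diagonal operators $\hAinf{i}\hBinf{i}{i}\hCinf{i}$, the global trace over $\tau_m$ factorises as a product of independent traces over the $i$-th diagonal subsystem $\cH_A^i\otimes\cH_{B_L}^i\otimes\cH_{B_R}^i\otimes\cH_C^i$, each equal to the Born-rule value $p(a^i b^{ii} c^i\mid x^i y^{ii} z^i)$. There is no real obstacle in this argument; the entire proof is a bookkeeping exercise leveraging the independence encoded in the tensor-product state, and the only conceptually non-trivial input is the permutation invariance of $\tau_m$, which is manifest from the i.i.d.\ construction.
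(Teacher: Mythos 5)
Your proof is correct and follows the same approach as the paper's one-line argument: define $\Xi^{n,m}_{\bomega,\bnu}=\tr{\tau_m}{\homega^\dagger\hnu}$ from the inflated tensor model and verify the conditions by construction. One small imprecision in a parenthetical: $\hat{B}^{j,k}$ and $\hat{B}^{j',k'}$ actually commute whenever $j\neq j'$ and $k\neq k'$ (e.g.\ $\hat{B}^{1,2}$ commutes with $\hat{B}^{2,3}$), which is a weaker condition than your stated $\{j,k\}\cap\{j',k'\}=\emptyset$, though this does not affect the validity of your argument since the underlying claim that the padded operators realise all the alphabet's algebraic relations is correct.
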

\begin{proof}
With inflation canonical abstraction we may define 
\begin{align*}
    \Xi^{n,m}_{\bomega,\bnu} = \tr{\tau_n}{\homega^{\dagger} \hnu}
\end{align*}
for suitable words $\bomega, \bnu$. It is clear that $\Xi^{n,m}$ satisfies Definition~\ref{def:InflationMMatrix}.
\end{proof}

Next, we restate and prove Theorem~\ref{thm:InflationNPAIsTighter}, which is the observation that inflation-NPA hierarchy provides a tighter approximation to Tensor Bilocal Quantum correlations. 

\begin{theorem}
If the probability distribution $\{\vec{P}=\{p(abc|xyz)\}$ passes the bilocal inflation-NPA hierarchy, then it also passes the factorisation and scalar extension bilocal NPA hierarchies.
\end{theorem}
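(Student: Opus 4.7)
My plan is to construct $\Omega^n$ and $\Gamma^n$ as pullbacks of the inflation moment matrix $\Xi^{n,m}$ for sufficiently large $m$. Fix $n \geq 3$, let $d$ be the number of distinct words in $\monoid{a}$ of length at most $n$, take $m = d+1$, and fix an injection $\balpha \mapsto i(\balpha)$ from those words into $\{2,\ldots,m\}$. Send every underlying letter $a_r, b_s, c_t$ to its first-copy counterpart $A^1_r, B^{1,1}_s, C^1_t$, and send each scalar letter $\kappa_{\balpha}$ to the word $\Ainf{i(\balpha)}$ obtained by relabelling $\balpha$ into the $i(\balpha)$th Alice copy. Extending multiplicatively gives maps $\phi$ (on the scalar-extension alphabet) and $\psi$ (on the purely operator alphabet), and I then set $\Omega^n_{\bomega,\bnu} := \Xi^{n,m}_{\phi(\bomega),\phi(\bnu)}$ and $\Gamma^n_{\bomega,\bnu} := \Xi^{n,m}_{\psi(\bomega),\psi(\bnu)}$.

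Normalisation, positivity and the Hankel linear constraints will descend from $\Xi^{n,m}$ to both matrices: $\phi$ and $\psi$ preserve the involution, and every commutation/projection relation in the scalar-extension algebra is matched by a relation among the inflation operators, notably the commutation of operators in distinct copies. Compatibility with $\vec{P}$ is obtained from the diagonal compatibility condition of $\Xi^{n,m}$ after marginalising copies $i \geq 2$ via the completeness identities $\sum_{a^i} A^i_{a^i|x^i} = \id$. For $\Omega^n$, the defining scalar-extension identity $\Omega^n_{1,\balpha\bgamma} = \Omega^n_{1,\kappa_{\balpha}\bgamma}$ pulls back to $\Xi^{n,m}_{1,\Ainf{1}\Cinf{1}} = \Xi^{n,m}_{1,\Ainf{i(\balpha)}\Cinf{1}}$, which is immediate from the $S_m \times S_m$ symmetry applied with the transposition $(1,i(\balpha))$ on Alice's side and identity on Charlie's.

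The substantive obstacle is the nonlinear factorisation identity $\Gamma^n_{\balpha,\bgamma} = \Gamma^n_{\balpha,1}\,\Gamma^n_{1,\bgamma}$, which pulls back to $\Xi^{n,m}_{1,\Ainf{1}\Cinf{1}} = \Xi^{n,m}_{1,\Ainf{1}}\cdot\Xi^{n,m}_{1,\Cinf{1}}$, i.e.\ a factorisation of the inflation moment matrix across distinct diagonal copies. This is exactly the auxiliary nonlinear condition of Remark~\ref{rem:inflationFactorisationDeFinetti}, which is \emph{not} among the SDP constraints of Definition~\ref{def:InflationMMatrix}. I plan to address this either by reading the inflation-NPA hierarchy as including those nonlinear diagonal factorisation constraints (in which case the identity holds by hypothesis), or, staying inside the SDP-only reading, by extracting a weak-$\star$ limit as $m \to \infty$ and applying a de Finetti--type argument to the $S_m \times S_m$-exchangeable sequence $\{\Xi^{n,m}\}_m$ to recover asymptotic factorisation across disjoint copies, mirroring the Banach--Alaoglu step used in the proof of Theorem~\ref{thm:ConvnonSDPBilocNPAHierarchy}. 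Making the de Finetti recovery rigorous is essentially the content of Conjecture~\ref{conj:InflationNPAimpliesFact}, and is where I expect the real difficulty to lie.
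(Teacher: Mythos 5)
Your construction and verification for the scalar extension matrix $\Omega^n$ is essentially the paper's proof: you both pick one inflated Alice copy per word $\balpha$ of length $\leq n$, set $\kappa_{\balpha} \mapsto \balpha^{i(\balpha)}$, pull back $\Xi^{n,m}$, and observe that the defining constraint $\Omega^n_{1,\balpha\bgamma}=\Omega^n_{1,\kappa_{\balpha}\bgamma}$ follows from the $S_m\times S_m$ symmetry. (Both you and the paper are a little cavalier about the fact that $\phi$ increases word length --- an image word can have length up to $n^2$ in the inflation alphabet --- but this is harmless since the hypothesis gives $\Xi^{n',m}$ for all $n'$; one simply has to invoke a large enough inflation level.)

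Your suspicion about the factorisation part is well-founded, and it is worth spelling out exactly where the paper's proof of that part breaks. The paper dispatches factorisation in one line: ``By Theorem~\ref{thm:ConvnonSDPBilocNPAHierarchy} and Theorem~\ref{thm:ConvScalarBilocNPAHierarchy} it suffices to consider the scalar extension bilocal NPA hierarchy.'' The intended chain is: passes scalar extension $\Rightarrow$ Commutator Bilocal (Theorem~\ref{thm:ConvScalarBilocNPAHierarchy}) $\Rightarrow$ passes factorisation (Theorem~\ref{thm:ConvnonSDPBilocNPAHierarchy}). But after the errata, Theorem~\ref{thm:ConvScalarBilocNPAHierarchy} only gives the implication ``passes scalar extension $\Rightarrow$ Commutator Bilocal'' under the additional hypothesis $\pi(T_a)=\mathbb{R}\id_{\cH}$, which is not guaranteed by the GNS reconstruction (Remark~\ref{rem:ScaExtHierarchyImpractical} even exhibits the shared random bit as a counterexample to the unconditional version). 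The implications actually available run the other way: passes factorisation $\Leftrightarrow$ Commutator Bilocal $\Rightarrow$ passes scalar extension, so one would need to show ``inflation $\Rightarrow$ factorisation'' directly to conclude both --- which is precisely the obstacle you flag, and precisely Conjecture~\ref{conj:InflationNPAimpliesFact}. Consistently with this, the main-text statement Theorem~\ref{thm:InflationNPAIsTighter} claims only the scalar extension conclusion; the phrase ``and factorisation'' in the appendix restatement, together with the one-line reduction, appears to be a pre-errata leftover. In short: you have correctly identified that the factorisation half of this theorem, as written, is not actually proved by the paper's argument, and your two candidate repairs (reading nonlinear diagonal factorisation into the inflation constraints, or a de Finetti limit) are both outside the SDP-only hierarchy and constitute the open conjecture rather than a proof.
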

\begin{proof}
By Theorem~\ref{thm:ConvnonSDPBilocNPAHierarchy} and Theorem~\ref{thm:ConvScalarBilocNPAHierarchy} it suffices to consider the scalar extension bilocal hierarchy. Consider any $n \geq 3$ and suppose that the size of alphabet $\underline{a} \cup \underline{b} \cup \underline{c}$ is $d$. We show that, for $m > n\sum_{i=0}^n d^i$, we can obtain the Scalar Extension Moment Matrix $\Omega^n$ from $m$th-Inflation Moment Matrix $\Xi^{n,m}$.

First, we identify the alphabets $\underline{a}^1$, $\underline{b}^{11}$, and $\underline{c}^1$ of $\Xi^{n,m}$ with $\underline{a}$, $\underline{b}$, and $\underline{c}$ for $\Omega^n$. 
Furthermore, for each distinct word $\bomega = \balpha\bbeta\bgamma$ with $|\bomega| \leq n$ in $\langle \underline{a}, \underline{b}, \underline{c}, \underline{\kappa_{ABC}} \rangle$, we assign a distinct index $i \geq 2$ and then identify the scalar extension $\kappa_{\bomega}$ with $\bomega^i = \balpha^i\bbeta^{ii}\bgamma^i$. By construction, $\balpha^i$ with $i \geq 2$ commutes with $\balpha^1$, $\bbeta^{1,1}$, and $\bgamma^1$, and since $i$ are distinct for distinct words, all scalar extensions commute with each other. 

Next, suppose that the above assignment used $s$ many inflation levels, we now need to make sense of $\kappa_{\bomega}^l$. For $\kappa_{\bomega}$ identified with $\bomega^i$, the square $\kappa_{\bomega}^2$ is identified with $\bomega^i \bomega^{i+s}$, and inductively for any power lower than $n$. The choice of $m$ is large enough for index assignment.

Now, define $\Omega^n$ by equations such as
\begin{align*}
    \Omega^n_{1, \balpha\bbeta\bgamma\kappa_{\bomega}} = \Xi^{n,m}_{1, \balpha^1\bbeta^{1,1}\bgamma^1\bomega^i}, \\
    \Omega^n_{1, \kappa_{\bomega}^2} = \Xi^{n,m}_{1, \bomega^i \bomega^{i+s}},
\end{align*}
where $i$ is the unique index assigned to $\bomega$. All the other entries can be defined analogously.

There are only two nontrivial facts to be checked while the rest of the conditions are evident by construction.
For the associated linear functional $L_{\Omega^n}$, suppose the unique index of the word $\kappa_{\A \B \C}$ is $i$, using diagonal factoristion we calculate
\begin{align*}
    L_{\Omega^n}(\kappa_{\A \B \C}) = \Xi^{n,m}_{1, \A^i \B^{i,i} \C^i} = q(abc|xyz), \\
    L_{\Omega^n}(\kappa_{\A \B \C}^2) = \Xi^{n,m}_{1, \A^i \B^{i,i} \C^i \A^{i+s} \B^{i+s, i+s} \C^{i+s}} = q(abc|xyz)^2,
\end{align*}
which implies $L_{\Omega^n}(f(\Vec{Q})) = 0$.

Next, without loss of generality we may assume that $\kappa_{\balpha \bgamma}$ is associated with $\balpha^2 \bgamma^2$, $\kappa_{\balpha}$ with $\balpha^3$, and $\kappa_{\bgamma}$ with $\bgamma^4$. Then, we compute
\begin{align*}
    L_{\Omega^n}(h(\balpha, \bgamma)) &= L_{\Omega^n}(\kappa_{\balpha \bgamma}^2) - 2 L_{\Omega^n}(\kappa_{\balpha \bgamma} \kappa_{\balpha} \kappa_{\bgamma}) + L_{\Omega^n}(\kappa_{\balpha}^2 \kappa_{\bgamma}^2) \\
    &= \Xi^{n,m}_{1, \balpha^2 \bgamma^2 \balpha^{2+s} \bgamma^{2+s}} - 2\Xi^{n,m}_{1, \balpha^2 \bgamma^2 \balpha^{3} \bgamma^{4}} + \Xi^{n,m}_{1, \balpha^2 \balpha^{2+s} \bgamma^{4} \bgamma^{4+s}} \\
    &= (\Xi^{n,m}_{1, \balpha^2 \bgamma^2 \balpha^{2+s} \bgamma^{2+s}} - \Xi^{n,m}_{1, \balpha^2 \bgamma^2 \balpha^{3} \bgamma^{4}}) + (\Xi^{n,m}_{1, \balpha^2 \balpha^{2+s} \bgamma^{4} \bgamma^{4+s}} - \Xi^{n,m}_{1, \balpha^2 \bgamma^2 \balpha^{3} \bgamma^{4}}) = 0.
\end{align*}
In the above, we apply the $S_m \times S_m$ symmetry on $\Xi^{n,m}_{1, \balpha^2 \bgamma^2 \balpha^{3} \bgamma^{4}}$. Specifically, in the first bracket, we consider the permutation $(3 \ (2+s))$ for Alice and Bob, and $(4 \ (2+s))$ for Bob and Charlie; in the second bracket, we consider the permutation $(3 \ (2+s))$ for Alice and Bob, and $(2 \ 4 \ (4+s))$ for Bob and Charlie.
\end{proof}

\subsection{Inflation-NPA hierarchy excludes the shared random bit distribution in the triangle scenario}\label{sec:AppendixSharedRandBitInflationNPA}

In this section, we briefly explain why the inflation-NPA hierarchy can be used to prove the impossibility of obtaining a shared random bit distribution in the triangle scenario. 
This proof is directly inspired by the proof of~\cite{QuantumInflation2021}.

Consider the uniform shared random bit distribution $\vec{P}_{ABC}=1/2([000]+[111])$, where $[000]$ (resp. $[111]$) is the deterministic distribution of all parties outputting 0 (resp. 1).
Assume, for contradiction, that it admits an inflation-NPA hierarchy of compatible moment matrices $\Xi^{n,m}$. 
We now show that the existence of $\Xi^{2,2}$ already provides a contradiction.

First, note that the distribution $\vec{Q}_{A^{11}B^{11}C^{12}}$ with the coefficients 
\begin{equation}
q(a_{11}b_{11}c_{12})= \Xi^{2,2}_{A^{11}B^{11},C^{12}} 
\end{equation}
 is a probability distribution (for simplicity in the notation, we omitted the subscripts; e.g., $A^{11}$ should be interpreted as $A^{11}_{a_{11}}$), as these numbers are positive (they can be seen as concrete measurement probabilities in a quantum experiment) and sum to one according to the constraints satisfied by $\Xi^{2,2}$.

According to the diagonalisation constraint, we have the marginal $\vec{Q}_{A^{11}B^{11}}$ over the first two outputs satisfying $\vec{Q}_{A^{11}B^{11}}=\vec{P}_{AB}=1/2([00]+[11])$, which means that we have $a^{11}=b^{11}$ is a shared uniform random bit.
Moreover, we have $\vec{Q}_{B^{11}C^{12}}=\vec{Q}_{B^{11}C^{11}}=1/2([00]+[11])$, where we first used the symmetry condition and then the diagonalisation constraint, meaning that we have $b^{11}=c^{12}$ is a shared uniform random bit.
These two conditions imply that we must have $a^{11}=c^{12}$ is a shared random bit, that is, $\vec{Q}_{A^{11}C^{12}}=1/2([00]+[11])$.

However, the symmetry condition also imposes $\vec{Q}_{A^{11}C^{12}}=\Xi^{2,2}_{A^{11},C^{12}}=\Xi^{2,2}_{A^{11},C^{22}}$, and the diagonalisation constraint implies $\Xi^{2,2}_{A^{11},C^{22}}=\vec{P}_{A}\cdot \vec{P}_{C} = 1/4([00]+[01]+[10]+[11])$, which is contradictory.

\section{Commutator-based definition of quantum correlations in networks}\label{sec:AppendixCommutatorBasedDefinition}
In this section, we propose a general commutator-based definition of accessible quantum correlations $\vec{Q}$ feasible in an arbitrary network, generalising the Definition~\ref{def:CommutatorBilocQDistrib} of Projector Bilocal Quantum Distributions $\vec{Q}$. Then, we discuss potential issues with this definition. Lastly, we show that purification of mixed states cannot be done with this definition. 

\subsection{Definition of Projector Network Quantum Distributions}

We propose the following natural general projector-based definition of the quantum correlations $\vec{Q}$ feasible in an arbitrary network:
\begin{enumerate}[(i)]
    \item The global state, represented by a projector over a pure state $\tau$, acts over a global Hilbert space $\cH$.
    \item Each party $A, B, C, ...$ PVMs operators $\hA, \hB, \hC, ... $ acts over $\cH$, two operators associated with different parties commute.
    \item $q(abc\cdots|xyz\cdots)=\tr{\tau}{\hA\hB\hC\cdots}$.
    \item Each source is associated with a projector $\rho, \sigma, \pi, ...$ acting over $\cH$, and two projectors associated with different sources commute. Moreoever, the product of these operators is $\tau=\rho\cdot\sigma\cdot\pi\cdots$.
    \item For any party (say, $A$) not connected to a source (say, corresponding to source $\rho$), the associated operators commute with the source projector (here, $[\rho,\hA]=0$).
\end{enumerate}

Note that if we restrict to only (i), (ii), and (iii), we recover commutator multipartite quantum correlations as in the standard Bell scenarios.
 
As we show in Theorem~\ref{thm:InflationNPAIsTighter}, inflation-NPA method converges to a subset of the set of Projector Bilocal Quantum Correlations.
The equality of these sets is later answered positively by~\cite[Theorem~16]{ligthart2023inflation}.

However, in networks beyond the bilocal scenario, it is still an ongoing research endeavour on the equivalence of the sets of correlation compatible with the inflation-NPA method and the set of Projector Quantum Correlations.
Strict inclusion would imply that either the definition of Projector Quantum Correlations cannot be taken as the quantum theory postulated correlations in the general network scenario, or that the inflation-NPA method is not the optimal method to characterise quantum correlations in the these general scenarios.
For example, in the triangle scenario in Figure~\ref{fig:GeneralisedScenarios}a, it reads:
\begin{definition}[Commutator Triangle Quantum Distributions]\label{def:CommutatorTriangleQDistrib}
Let $\vec{Q}=\{q(abc|xyz)\}$ be a three-party probability distribution. 
We say that $\vec{Q}$ is a \emph{Commutator Triangle Quantum Distribution} iff there exist a Hilbert space $\cH$, a projector over a pure state $\tau$, three mutually commuting projectors (of possibly infinite trace) $\rho$, $\sigma$, and $\pi$, and some PVMs $\{\hA\}$, $\{\hB\}$, and $\{\hC\}$ over $\cH$ such that
\begin{enumerate}[(i)]
    \item $\tau$ acts on $\cH$.
    \item $[\hA, \hB]=[\hB, \hC]=[ \hC,\hA]=0$ .
    \item $q(abc|xyz)=\tr{\tau}{\hA\hB\hC}$.
    \item  $[\rho, \sigma]=[\sigma, \pi]=[ \pi,\rho]=0$ and $\tau = \rho\cdot\sigma\cdot\pi$.
    \item $[\hA,\rho] = [\hB,\pi] = [\hC, \sigma] = 0$.
\end{enumerate}
\end{definition}

In that case, there exists a weaker method than inflation-NPA to find constraints on the obtainable correlations in the triangle network, called the non-fanout inflation method~\cite{Henson2014,Inflation}. 
Contrary to the factorisation bilocal NPA, scalar extension, and inflation-NPA hierarchies, it is not based on the mathematical Hilbert space formulation of quantum theory. It is solely based on the assumptions of causality and device replication (any device should be replicable in independent copies)~\cite{GisinNSI,BeigiCovariance,CoiteuxRoyPRL,CoiteuxRoyPRA,BancalNLCouplers}, two very weak assumptions.
For network scenarios without loops, the non-fanout inflation does not provide nontrivial constraints beyond the fact that marginal correlations between groups of parties not sharing a source should factorise (as there does not exist any nontrivial non-fanout inflation of no loop networks): this method cannot be used in the bilocal scenario to restrict correlations.

However, in other networks, it implies that a feasible $\vec{Q}$ should be associated with compatible distributions in all the non-fanout inflation of this network. For instance, if $\vec{Q}$ is feasible in the triangle network, it should be associated with an other distribution $\vec{R}$ defined in the hexagon inflation of the triangle, compatible with $\vec{Q}$. As shown in~\cite{Henson2014,Inflation}, this allows one to reject the feasibility of some distributions such as the shared random bit in the triangle scenario.
It is uncertain whether any $\vec{Q}$ satisfying Definition~\ref{def:CommutatorTriangleQDistrib} above is always associated with such distributions $\vec{R}$ in the hexagon (or larger) inflation of the triangle.
In case it is not (which would be the case if this definition allows for the shared random bit distribution), this would imply that such a version of quantum theory based on a commutator-based postulate does not allow one to consider independent copies of systems, which we would see as a strong argument against this version of quantum theory.

\subsection{Pure-mixed state inequivalence in commutator model}\label{sec:AppendixPureMixFormulationInequivalence}
In this section, we show that the condition of $\tau$ being pure in Definition~\ref{def:CommutatorBilocQDistrib} is nontrivial, as stated in Remark~\ref{rem:PureMixFormulationCommtutator}. That is, we cannot reformulate Definition~\ref{def:CommutatorBilocQDistrib} in terms of demanding $\tau$ to be only a mixed state.

Indeed, consider six qubit spaces $\cH_{A_i}$, $\cH_{B_i}$, and $\cH_{C_i}$, for $i=0, 1$.
Let $\cH_i = \cH_{A_i} \otimes \cH_{B_i} \otimes \cH_{C_i}$, and define the global Hilbert space by direct sum decomposition $\cH=\cH_0 \oplus \cH_1$.
Let $\cH_i = \cH_{A_i} \otimes \cH_{B_i} \otimes \cH_{C_i}$, and define the global Hilbert space by direct sum decomposition $\cH=\cH_0 \oplus \cH_1$.
The global state $\tau$ is defined by
\begin{align}
    \tau = \frac{1}{2}\left( \ketbra{0}{0}_{A_0} \otimes \ketbra{0}{0}_{B_0} \otimes \ketbra{0}{0}_{C_0} \oplus \ketbra{1}{1}_{A_1} \otimes \ketbra{1}{1}_{B_1} \otimes \ketbra{1}{1}_{C_1}\right).
\end{align}
In particular, we have a product decomposition $\tau=\rho\sigma$, where
\begin{equation}
    \begin{aligned}
        \sqrt{2} \rho &= \ketbra{0}{0}_{A_0} \otimes \id_{B_0C_0} \oplus \ketbra{1}{1}_{A_1} \otimes \id_{B_1C_1}, \\
        \sqrt{2} \sigma &= \ketbra{0}{0}_{B_0} \otimes \ketbra{0}{0}_{C_0} \otimes \id_{A_0} \oplus \ketbra{1}{1}_{B_1} \otimes \ketbra{1}{1}_{C_1} \otimes \id_{A_1}.
        \end{aligned}
\end{equation}
For measurements, we take the PVMs
\begin{equation}\label{eq:CounterExampleSharedRandomPVM}
    \begin{aligned}
    A_{a=0} &= \ketbra{0}{0}_{A_0} \otimes \id_{B_0C_0} \oplus \ketbra{0}{0}_{A_1} \otimes \id_{B_1C_1}, \\
    A_{a=1} &= \ketbra{1}{1}_{A_0} \otimes \id_{B_0C_0} \oplus \ketbra{1}{1}_{A_1} \otimes \id_{B_1C_1}, \\
    B_{b=0} &= \ketbra{0}{0}_{B_0} \otimes \id_{A_0C_0} \oplus \ketbra{0}{0}_{B_1} \otimes \id_{A_1C_1}, \\
    B_{b=1} &= \ketbra{1}{1}_{B_0} \otimes \id_{A_0C_0} \oplus \ketbra{1}{1}_{B_1} \otimes \id_{A_1C_1}, \\
    C_{c=0} &= \ketbra{0}{0}_{C_0} \otimes \id_{A_0B_0} \oplus \ketbra{0}{0}_{C_1} \otimes \id_{A_1B_1}, \\
    C_{c=1} &= \ketbra{1}{1}_{C_0} \otimes \id_{A_0B_0} \oplus \ketbra{1}{1}_{C_1} \otimes \id_{A_1B_1}.
    \end{aligned}
\end{equation}
It can be easily checked that Definition~\ref{def:CommutatorTriangleQDistrib} is satisfied and that
\begin{align*}
    \tr{\tau}{A_a B_b C_c} = \begin{cases} \frac{1}{2} & a=b=c 
    \\ 0 & \mathrm{otherwise}
    \end{cases}
\end{align*}
is precisely the shared random bit distribution $\vec{P}_{ABC}=1/2([000]+[111])$.

Note that introducing an extra qubit space $\cH_S$ of basis $\ket{0}_S$, $\ket{1}_S$ corresponding to cases $i=0,1$, respectively, one can state this example differently. In the global Hilbert space $\cH=\cH_{A} \otimes \cH_{B} \otimes \cH_{C}\otimes\cH_S$:
\begin{equation}
    \begin{aligned}
    \tau &= \frac{1}{2} \sum_{i=0,1} \ketbra{i}{i}_{A} \otimes \ketbra{i}{i}_{B} \otimes \ketbra{i}{i}_{C}\otimes \ketbra{i}{i}_{S},\\
    \sqrt{2} \rho &= \sum_{i=0,1} \ketbra{i}{i}_A \otimes \ketbra{i}{i}_S \otimes\id_{BC},\\
    \sqrt{2} \sigma &= \sum_{i=0,1} \ketbra{i}{i}_B \otimes \ketbra{i}{i}_C \otimes \ketbra{i}{i}_S \otimes\id_{A},\\
    A_{a=0} &= \ketbra{0}{0}_{A} \otimes \id_{BCS},\\
    A_{a=1} &= \ketbra{1}{1}_{A} \otimes \id_{BCS}, 
    \end{aligned}
\end{equation}
and symmetrically for $B_0, B_1, C_0, C_1$.

The quantum strategies provided above obviously satisfy Definition~\ref{def:CommutatorBilocQDistrib} except for the purity condition. However, it is also clear that $\vec{P}_{ABC}$ cannot be a bilocal quantum distribution because it cannot factorise between Alice and Charlie. 

The deep reason of pure-mixed state inequivalence is that the usual purification does not preserve commutator structure\footnote{Note that in tensor-based model, the purification of a mixed state $\tau = \rho \otimes \sigma$ always assumes the form $\ket{\phi} = \ket{\psi_1} \otimes \ket{\psi_2}$, where \ket{\psi_1} and \ket{\psi_2} are the purification of $\rho$ and $\sigma$, respectively.}, we provide an alternative definition in commutator model.
To this end, consider Definition~\ref{def:CommutatorBilocQDistrib} with distribution $\vec{Q}$, but we instead require that $\rho$, $\sigma$ are only (possibly infinite trace) positive operators for $\tau=\rho\sigma$. We say that $\tau$ is \emph{c-purifiable} if there exists some (pure) density operator $\tau'=\rho'\sigma'$ satisfying all the constraints in Definition~\ref{def:CommutatorBilocQDistrib} with $\vec{Q}$, and $\tau'$ is said to be the \emph{c-purification} of $\tau$. The generalisation of c-purification to an arbitrary network in Appendix~\ref{sec:AppendixCommutatorBasedDefinition} is straightforward. We have the following criterion for c-purifibility in bilocal scenario.

\begin{corollary}\label{cor:CPurifyFactorise}
Consider the notation above. Then $\tau$ is c-purifiable iff all the factorsiation constraints hold, that is,  $\tr{\tau}{\halpha\hgamma}=\tr{\tau}{\halpha}\tr{\tau}{\hgamma}$, for any $\balpha \in \monoid{a}$ and $\bgamma \in \monoid{c}$.
\end{corollary}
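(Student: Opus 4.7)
The plan is to invoke Theorem~\ref{thm:ConvnonSDPBilocNPAHierarchy} in both directions, since the corollary essentially recasts that theorem for the moment data associated with a single pair $(\tau, \hA, \hB, \hC)$ rather than for a full NPA hierarchy.

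For the forward direction, if $\tau$ is c-purifiable with pure $\tau' = \rho'\sigma'$ obeying Definition~\ref{def:CommutatorBilocQDistrib}, I would repeat the short cyclic-trace and projectivity calculation that already appears in the easy direction of the proof of Theorem~\ref{thm:ConvnonSDPBilocNPAHierarchy} (rewriting $\tr{\tau'}{\halpha^\dagger \hgamma}$ as $\tr{\tau'}{\halpha^\dagger \, \tau' \, \hgamma}$ and then using the rank-one form of $\tau'$) to conclude $\tr{\tau'}{\halpha^\dagger \hgamma} = \tr{\tau'}{\halpha^\dagger}\tr{\tau'}{\hgamma}$ for every $\balpha \in \monoid{a}$ and $\bgamma \in \monoid{c}$. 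Since a c-purification is a Hilbert-space extension of the original data (as illustrated in the tensor-case footnote of Remark~\ref{rem:PureMixFormulationCommtutator}), the moments on $\halpha$, $\hgamma$ and $\halpha\hgamma$ are preserved under the extension, and the factorisation descends back to $\tau$.

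For the converse direction, starting from the factorisation constraints on $\tau$, I would build the candidate moment matrix $\Gamma^n_{\bomega, \bnu} = \tr{\tau}{\homega^\dagger \hnu}$ for every $n \geq 3$. Unitality, positivity and the linear constraints of Definition~\ref{def:FactorBMMatrix} follow from $\Tr{\tau} = 1$, the positivity of $\tau$, and the PVM algebraic relations exactly as in the standard NPA construction; condition (iv) is simply the factorisation hypothesis itself; and compatibility with $\vec{Q}$ is immediate since $\Gamma^n_{1, \A\B\C} = \tr{\tau}{\hA\hB\hC} = q(abc|xyz)$. Thus $\vec{Q}$ passes every test of the factorisation bilocal NPA hierarchy, and Theorem~\ref{thm:ConvnonSDPBilocNPAHierarchy} returns a pure-state Commutator Bilocal Quantum representation, which by definition is a c-purification of $\tau$.

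The main obstacle I expect is pinning down the exact sense in which a c-purification extends the original mixed-state representation. In the tensor setting, the standard purification acts trivially on the original operators so moment preservation is automatic; in the commutator setting, as stressed in Appendix~\ref{sec:AppendixPureMixFormulationInequivalence}, no canonical purification procedure is available. The cleanest route, which I would adopt, is to treat c-purifiability purely at the moment level: the GNS construction underlying the proof of Theorem~\ref{thm:ConvnonSDPBilocNPAHierarchy} already reproduces all moments of the original $\tau$, so the pure-state representation it outputs qualifies as a c-purification in the moment sense, making the equivalence go through in both directions without requiring a separate construction of an operator-level extension.
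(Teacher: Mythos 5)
Your proposal is correct and takes the same route as the paper, which disposes of the corollary in a single line citing Theorem~\ref{thm:ConvnonSDPBilocNPAHierarchy}: the factorisation constraints on $\tau$ make the induced moment matrices into Factorisation Bilocal Moment Matrices of every order, the GNS construction in the hard direction of that proof then returns a pure-state representation reproducing all moments of $\tau$ (a c-purification), and the reverse implication is the short cyclic-trace computation from the easy direction. Your closing caveat about reading c-purifiability at the moment level is the right way to make the implication from c-purifiability to factorisation airtight, since the paper's definition only pins the pure representation to $\vec{Q}$ rather than to the full moment sequence of the original $\tau$.
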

\begin{proof}
This is a direct consequence of the proof for Theorem~\ref{thm:ConvnonSDPBilocNPAHierarchy}
\end{proof}

Note that the quantum strategies provided above are examples of Corollary~\ref{cor:CPurifyFactorise}. Also note that, with the notion of c-purifibility, it is possible to formulate commutator-based quantum correlations in general networks in terms of mixed-state. Nevertheless, such a reformulation is cumbersome and redundant; hence we omitted it. Finally, we observe that the inequivalence between the pure state formulation and the mixed state formulation, and the need for an alternative notion of purification, suggest extra subtlety in the commutator-based model compared to the tensor-based model.

\section{The four party star network} \label{AppendixStarNetwork}

The set of distributions that can be obtained in the four-party star network of Figure~\ref{fig:GeneralisedScenarios}a (according to the tensor postulate) is a direct generalisation of Definition~\ref{def:TensorBilocQDistrib}, with an extra state $\pi_{DB_D}$ acting over $\cH_{B_D}\otimes \cH_{D}$ and additional PVMs for $D$.
We now discuss the potential generalisation of our proof in Section~\ref{sec:TripartiteQDistribStandardNPA} to this new scenario. 
We show that when a distribution $\vec{Q}$ admits a \emph{Factorisation Star-shaped Trilocal Moment Matrix} (see Definition~\ref{def:FactorTrilocalMomentM}), then
it is an \emph{Anti-state Projector Star-shaped Quantum Distribution} (which is weaker than what we could expect following the definition of Appendix~\ref{sec:AppendixCommutatorBasedDefinition}). 
As we do not use all the assumptions at our disposal, we conjecture that a more restricted model, maybe the one of Section~\ref{sec:AppendixCommutatorBasedDefinition}, can be proven from the existence of such a hierarchy of Factorisation Star-shaped Trilocal Moment Matrix.

Let us first generalise the factorisation hierarchy by introducing pairwise factorisation constraints:

\begin{definition}[Factorisation Star-shaped Trilocal Moment Matrix]\label{def:FactorTrilocalMomentM}
Fix $n \in \mathbb{N}$, let $\Gamma^n$ be a square matrix indexed by all words $\bomega \in \langle \underline{a}, \underline{b}, \underline{c}, \underline{d} \rangle$ of length $|\bomega|\leq n$. We say that $\Gamma^n$ is a \emph{Factorisation Star-shaped Trilocal Moment Matrix of order $n$} if
\begin{enumerate}[(i)]
\item $\Gamma^n_{1,1} = 1$.
\item $\Gamma^n$ is positive.
\item it satisfies all the linear constraints 
\begin{align*}
  \Gamma^n_{\bomega,\bnu}=\Gamma^n_{\bomega',\bnu'}
\end{align*}
whenever $\bomega^{\dagger} \bnu = {\bomega'}^{\dagger} \bnu'$.
\item it satisfies the pairwise nonlinear factorisation constraints
\begin{align*}
  \Gamma^n_{\balpha, \bgamma}&=\Gamma^n_{\balpha,1}\cdot \Gamma^n_{1,\bgamma}, \\
  \Gamma^n_{\balpha, \bdelta}&=\Gamma^n_{\balpha,1}\cdot \Gamma^n_{1,\bdelta}, \\
  \Gamma^n_{\bgamma, \bdelta}&=\Gamma^n_{\bgamma,1}\cdot \Gamma^n_{1,\bdelta},
\end{align*}
where $\balpha \in \monoid{a}$, $\bgamma \in \monoid{c}$, and $\bdelta \in \monoid{d}$.
\item it satisfies the nonlinear constraints 
\begin{align*}
    \Gamma^n_{\balpha\bgamma, \bdelta}&=\Gamma^n_{\balpha\bgamma,1}\cdot \Gamma^n_{1,\bdelta},
\end{align*}
equivalent to the three-factorisation $\Gamma^n_{\balpha\bgamma, \bdelta} = \Gamma^n_{\balpha,1}\cdot\Gamma^n_{\bgamma,1}\cdot\Gamma^n_{1,\bdelta}$ due to (iv).
\end{enumerate}
An infinite matrix $\Gamma^\infty$ is said to be a \emph{Factorisation Star-shaped Trilocal Moment Matrix} if all of its extracted matrices are Factorisation Trilocal Moment Matrices of some finite order.
We further say that $\Gamma^n$ is compatible with the four-party distribution $\vec{Q}$ iff
\begin{align*}
  \Gamma^n_{1,\A\B\C\D}=q(abcd|xyzw).
\end{align*}
\end{definition}

Observe that condition (iv) in Definition~\ref{def:FactorTrilocalMomentM} does not imply condition (v). In other words, condition (iv) alone does not fully contain the consequences in terms of factorisation of the topology of the star-shaped network.
This definition is associated with the following hierarchy of tests that some distribution $\vec{Q}$ is feasible in the four-party star network:

\begin{definition}[Factorisation star-shaped trilocal NPA hierarchy]
Let $\vec{Q}=\{q(abcd|xyzw)\}$ be a probability distribution for Alice, Bob, Charlie, and Dave. 
We say that $\vec{Q}$ passes the \emph{factorisation star-shaped trilocal NPA hierarchy} if for all integers $n\geq 4$, there exists a Factorisation Star-shaped Trilocal Moment Matrix $\Gamma^n$ of order $n$ that is compatible with $\vec{Q}$. 
\end{definition}

We now show that if $\vec{Q}$ admits such a hierarchy, then it is an Anti-state Projector Star-shaped Quantum Distribution (defined below). 
However, contrary to the bilocal scenario, we do not prove the converse and conjecture that it does not hold, as we suspect that our proof does not use condition (v) in Definition~\ref{def:FactorTrilocalMomentM}.

\begin{definition}[Anti-state Projector Star-shaped Trilocal Quantum Distributions]\label{def:CommutatorQuadStarQDistrib}
Let $\vec{Q}=\{q(abcd|xyzw)\}$ be a probability distribution for Alice, Bob, Charlie, and Dave. 
We say that $\vec{Q}$ is a \emph{Anti-state Projector Star-shaped Trilocal Quantum Distribution} iff there exist a Hilbert space $\cH$, a projector over a pure state $\tau$, three positive operators (possibly infinite trace) $\bar\rho$, $\bar\sigma$, $\bar\pi$ commuting with each other, and PVMs $\{\hA\}$, $\{\hB\}$, $\{\hC\}$, $\{\hD\}$ over $\cH$ such that
\begin{enumerate}[(i)]
    \item $q(abc|xyz)=\Tr{\tau\cdot \hA\cdot\hB\cdot\hC\cdot\hD}$
    \item $[\hA,\hB]=[\hA,\hC]=[\hA,\hD]=[\hB, \hC] = [\hB, \hD] = [\hC, \hD] =0$ 
    \item $\tau = \bar\rho\cdot\bar\sigma=\bar\sigma\cdot\bar\pi=\bar\pi\cdot\bar\rho$
    \item $[\bar\rho,\bar\sigma] = [\bar\rho,\bar\pi] = [\bar\sigma,\bar\pi] = 0$
    \item $[\hA,\bar\rho] = [\hC,\bar\sigma] = [\hD, \bar\pi]=0$
\end{enumerate}
for all $\hA$, $\hB$, $\hC$, and $\hD$.
\end{definition}

Note that this definition is not the natural extension of Definition~\ref{def:CommutatorBilocQDistrib}. 
Indeed, here the operators $\bar\rho,\bar\sigma, \bar\pi$ should not be understood as the original states $\rho, \sigma, \pi$, but as the corresponding ``anti-states'' $\bar\rho= \sigma\cdot\pi, \bar\sigma= \pi\cdot\rho$ and $\bar\pi=\rho\cdot\sigma$.
The bilocal scenario can be seen as a degenerate case: As there are only two states, we could interpret in the proof of Theorem~\ref{thm:ConvnonSDPBilocNPAHierarchy} our constructed $\sigma$ as $\bar\rho$ and $\rho$ as $\bar\sigma$.

\begin{theorem}[Convergence of the factorisation star-shaped trilocal NPA hierarchy]\label{thm:ConvSDPStarShapedNPAHierarchy}
Suppose that $\vec{Q}=\{q(abcd|xyzw)\}$ is a probability distribution for Alice, Bob, Charlie, and Dave. 
If $\vec{Q}$ passes all the factorisation star-shaped trilocal NPA hierarchy tests, then $\vec{Q}$ is a Anti-state Commutator Star-shaped Trilocal Quantum Distribution.

Moreover, if we forget the condition (v) of Definition~\ref{def:CommutatorQuadStarQDistrib} (three-factorisation), then the above two statements are equivalent.
\end{theorem}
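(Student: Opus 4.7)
The plan is to transcribe the proof of Theorem~\ref{thm:ConvnonSDPBilocNPAHierarchy}, now carrying three anti-state projectors instead of two. Given a compatible hierarchy $\{\Gamma^n\}_{n \geq 4}$, I would first extract a weak-$\star$ convergent subsequence via Banach--Alaoglu to obtain an infinite Factorisation Star-shaped Trilocal Moment Matrix $\Gamma^\infty$ that inherits every linear and nonlinear constraint of Definition~\ref{def:FactorTrilocalMomentM} in the limit. A sequential Cholesky/GNS construction, identical to the bilocal case, realises $\Gamma^\infty$ as the Gram matrix of a family of vectors $\{\ket{\phi_\bomega}\}$ indexed by all words; the Hilbert space $\cH$ is the closure of their span, $\tau=\ketbra{\phi_1}{\phi_1}$ corresponds to the empty word, and self-adjoint projectors $\hA,\hB,\hC,\hD$ are defined by $l_i\ket{\phi_\bnu}=\ket{\phi_{l_i\bnu}}$. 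Standard NPA arguments deliver conditions (i) and (ii) of Definition~\ref{def:CommutatorQuadStarQDistrib}.

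The central new step is the definition of the three anti-states. Guided by the tensor intuition that $\bar\rho=\sigma\pi$ should have range equal to $\mathrm{Span}\{\halpha\ket{\phi_1}\}$ (easily checked when $\rho,\sigma,\pi$ are pure and tensor-factorised), I would set
\begin{equation*}
V_A=\Span{P(\{\halpha\})\ket{\phi_1}}_P,\qquad V_C=\Span{Q(\{\hgamma\})\ket{\phi_1}}_Q,\qquad V_D=\Span{R(\{\hdelta\})\ket{\phi_1}}_R,
\end{equation*}
and define $\bar\rho=\Pi_{V_A}$, $\bar\sigma=\Pi_{V_C}$, $\bar\pi=\Pi_{V_D}$. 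Condition (v) of Definition~\ref{def:CommutatorQuadStarQDistrib} is then automatic: $V_A$ is invariant under $\hA$ because the latter commutes with every $\halpha$, so $[\hA,\bar\rho]=0$, and symmetrically for $\bar\sigma,\bar\pi$. For (iii) and (iv), orthogonally decompose $V_A=\Span{\ket{\phi_1}}\oplus^\perp W_A$ (and analogously for $V_C,V_D$) and show the three orthogonalities $W_A\perp W_C$, $W_A\perp W_D$, $W_C\perp W_D$ by the bilocal-style factorisation argument: for $v_1=\sum_k p_k\balpha_k\ket{\phi_1}\in W_A$ and $v_2=\sum_l q_l\bgamma_l\ket{\phi_1}\in W_C$ both orthogonal to $\ket{\phi_1}$,
\begin{equation*}
\braket{v_1}{v_2}=\sum_{k,l}p_k^{\ast} q_l\,\Gamma^\infty_{\balpha_k,\bgamma_l}=\sum_{k,l}p_k^{\ast} q_l\,\Gamma^\infty_{\balpha_k,1}\Gamma^\infty_{1,\bgamma_l}=\braket{v_1}{\phi_1}\braket{\phi_1}{v_2}=0,
\end{equation*}
invoking only the pairwise factorisation (iv) of Definition~\ref{def:FactorTrilocalMomentM}; the other two orthogonalities follow from the $\halpha\leftrightarrow\hdelta$ and $\hgamma\leftrightarrow\hdelta$ pairwise factorisations by the same computation. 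With pairwise orthogonality in hand, a direct projector calculation gives $\bar\rho\bar\sigma=\bar\sigma\bar\rho=\Pi_{\ket{\phi_1}}=\tau$ and the analogous identities for the other two pairs, simultaneously yielding (iii) and (iv).

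For the converse statement, given an anti-state commutator model I would set $\Gamma^n_{\bomega,\bnu}=\tr{\tau}{\homega^\dagger\hnu}$; conditions (i)--(iii) of the moment matrix are immediate, while pairwise factorisation (iv) follows by transcribing the bilocal calculation of Theorem~\ref{thm:ConvnonSDPBilocNPAHierarchy} with a pair-dependent decomposition of $\tau$ ($\tau=\bar\rho\bar\sigma$ for $\{\halpha,\hgamma\}$, $\tau=\bar\rho\bar\pi$ for $\{\halpha,\hdelta\}$, $\tau=\bar\sigma\bar\pi$ for $\{\hgamma,\hdelta\}$), exploiting in each case the two relevant state-operator commutations together with the absorption identities $\bar\rho\tau=\bar\sigma\tau=\bar\pi\tau=\tau$. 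The main obstacle---and the reason the equivalence is only stated after dropping the three-factorisation---is that condition (v) of Definition~\ref{def:FactorTrilocalMomentM}, $\Gamma^n_{\balpha\bgamma,\bdelta}=\Gamma^n_{\balpha,1}\Gamma^n_{\bgamma,1}\Gamma^n_{1,\bdelta}$, cannot be derived from the anti-state data: no single decomposition of $\tau$ commutes past all three letter types simultaneously, so three-factorisation is strictly stronger than the pairwise information encoded by the anti-states. Conveniently, the forward direction never uses (v) either---only pairwise factorisation appears in the orthogonality argument above---so the asymmetry really only restricts the converse.
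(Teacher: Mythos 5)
Your proof follows essentially the same route as the paper's: Banach--Alaoglu extraction and GNS construction from $\Gamma^\infty$, anti-state projectors $\bar\rho=\Pi_{V_A}$, $\bar\sigma=\Pi_{V_C}$, $\bar\pi=\Pi_{V_D}$ onto the spans of $\halpha\ket{\phi_1}$, $\hgamma\ket{\phi_1}$, $\hdelta\ket{\phi_1}$, pairwise orthogonality of the complements $W_A,W_C,W_D$ from the pairwise factorisation constraints (iv), and the cyclic-trace manipulation for the converse using the pair-appropriate decomposition of $\tau$; your remark that the forward direction never invokes the three-factorisation (v) is also what the paper notes. One small slip in the justification of $[\hA,\bar\rho]=0$: $V_A$ is $\hA$-invariant not ``because $\hA$ commutes with every $\halpha$'' (distinct PVM elements of Alice need not commute), but because $\hA$ is itself one of the generating letters, so $\hA\cdot P(\{\halpha\})\ket{\phi_1}$ is again a polynomial in $\{\halpha\}$ applied to $\ket{\phi_1}$; combined with self-adjointness of $\hA$ this yields the commutation, as in the bilocal proof.
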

\begin{proof}
The proof is completely analogous to the proof of Theorem~\ref{thm:ConvnonSDPBilocNPAHierarchy} in Appendix~\ref{sec:nonSDPhardproof}. Suppose that we have an Anti-state Commutator Star-shaped Trilocal Quantum Distribution $\vec{Q}$ and we wish to show that Definition~\ref{def:CommutatorQuadStarQDistrib} can be satisfied modulo (v). The only nontrivial conditions that need to be shown are (iv).

Condition (iv) is done similarly to Appendix~\ref{sec:nonSDPhardproof}. Indeed,
\begin{align*}
    \tr{\tau}{\halpha^{\dagger}\tau\hgamma} = \Tr{\tau\halpha^{\dagger}\tau\hgamma}
    &= \Tr{\bar{\sigma}\bar{\rho}\halpha^{\dagger}\bar{\sigma}\bar{\rho}\hgamma} \\
    &= \Tr{\bar{\rho}\halpha^{\dagger}\bar{\sigma}\bar{\rho}\hgamma\bar{\sigma}} \\
    &= \Tr{\halpha^{\dagger}\bar{\rho}\bar{\sigma}\bar{\rho}\bar{\sigma}\hgamma} \\
    &= \Tr{\halpha^{\dagger}\tau^2\hgamma} = \tr{\tau}{\halpha^{\dagger}\hgamma},
\end{align*}
where we used the commutativity and cyclicity of the trace. This implies $\Gamma^n_{\balpha,\bgamma}=\Gamma^n_{\balpha,1}\Gamma^n_{1,\bgamma}$ with the same argument as in Appendix~\ref{sec:nonSDPhardproof}. The other two constraints in (iv) follow analogously.

For the converse, we only comment on the construction of $\bar\rho$, $\bar\sigma$, $\bar\pi$ since the rest are the same. Let the vector corresponding to the null word $1$ be $\ket{\phi_1} = \ket{v^{000}}$. Define the subspaces $V_{AB_1} = \Span{P(\{\halpha\})}_P$, $V_{B_2C} = \Span{Q(\{\hgamma\})}_Q$, and $V_{B_3D} = \Span{R(\{\hdelta\})}_R$ each with orthogonal basis $\left\{P_i(\{\halpha\})\ket{v^{000}} = \ket{v^{i00}}, i \in I\right\}$, $\left\{Q_j(\{\hgamma\})\ket{v^{000}} = \ket{v^{0j0}}, j \in J\right\}$, and $\left\{R_k(\{\hdelta\})\ket{v^{000}} = \ket{v^{00k}}, k \in K\right\}$ for some countable index set $I, J, K$, respectively. Using nonlinear factorisation constraints, the analogous calculation as in Appendix~\ref{sec:nonSDPhardproof} shows
\begin{align}
    \bar\rho &= \sum_{i} \ketbra{v^{i00}}{v^{i00}} \\
    \bar\sigma &= \sum_{j} \ketbra{v^{0j0}}{v^{0j0}} \\
    \bar\pi &= \sum_{k} \ketbra{v^{00k}}{v^{00k}}
\end{align}
satisfying all the desired properties.
\end{proof}

\begin{remark}
We remark again that the converse of Theorem \ref{thm:ConvSDPStarShapedNPAHierarchy} is not true in general, i.e. an Anti-state Commutator Star-shaped Trilocal Quantum Distribution may not admit a hierarchy of Factorisation Star-shaped Trilocal Moment Matrices satisfying the three-factorisation constraints. 
Indeed, to this end, we would need to show that the equation
\begin{align*}
    \tr{\tau}{\halpha^{\dagger}\hgamma^{\dagger}\hdelta} \stackrel{?}{=} \Tr{\tau\halpha^{\dagger}\hgamma^{\dagger}\tau\hdelta}
\end{align*}
holds. 
But this is not true since we cannot ``merge'' two $\tau$'s on the right-hand side of the equation with the commutation relations and cyclicity of trace given by the definition of Anti-state Commutator Star-shaped Trilocal Quantum Distributions.

This observation suggests that the factorisation star-shaped trilocal NPA hierarchy converges to a proper subset of all Anti-state Commutator Star-shaped Trilocal Quantum Distributions. It may converge to the model described in Appendix~\ref{sec:AppendixCommutatorBasedDefinition}.
\end{remark}

\section{Bilocal Tsirelson's problem}\label{sec:AppendixBilocalTsirelson}
Note that thanks to~\cite{ligthart2023inflation}, the open problem of obtaining a Tsirelson's Theorem for bilocal networks, which was initially posed by us in the first draft, is fully resolved.
We leave the outdated discussion from the first draft~\cite{renou2022bilocalv1} in Appendix~\ref{sec:AppendixOutdatedTsirelson} for ``historical context'' and discuss the results of~\cite{ligthart2023inflation} in more details in Appendix~\ref{sec:AppendixBilocalTsirelsonSolved}.

\subsection{An outdated discussion on bilocal Tsirelson's theorem as an open question}\label{sec:AppendixOutdatedTsirelson}
We have proposed two alternatives postulates exist to introduce the distributions predicted by quantum theory: the tensor-based postulate of Definition~\ref{def:TensorTripartiteQDist} and the commutator-based postulate of Definition~\ref{def:CommutatorTripartiteQDist}.
While it is trivial that a tensor-based model implies the existence of a commutator-based model, the question of the converse implication was a long-standing problem (equivalent to Kirchberg's Conjecture and Connes' Embedding Conjecture~\cite{Connes_1976, fritz2012tsirelson, Ozawa_2013, Goldbring_2022}) that was only recently disproven by~\cite{ji2021mip}.
However, Tsirelson proved that, restricting to finite-dimensional Hilbert spaces, the two definitions are equivalent~\cite{tsirelsonproblem,tsirelsonNote}. He proved the following theorem:
\begin{theorem}[Tsirelson]\label{thm:tsirelson}
Given a Hilbert space $\cH$, let $\{\hA\}$, $\{\hB\}$ be two finite commuting sets of positive operators in $\mathcal{B}(\cH)$, each generating a finite-dimensional von Neumann algebra $\mathcal{A}$ and $\mathcal{B}$. Let $\vec{Q}=\{q(ab|xy)\}$ be a probability distribution associated with $\{\hA\}$, $\{\hB\}$ and a global state $\tau$.

Then there exists a Hilbert space $\bar{\cH}$ with a decomposition $\bar{\cH} = \cH_A \otimes \cH_B$, such that $\{\hA\}$ can be mapped into $\mathcal{B}(\cH_A)$ and $\{\hB\}$ can be mapped into $\mathcal{B}(\cH_B)$ via injective $^*$-homomorphisms.  
That is, 
\begin{align*}
    \{\hA\} &\mapsto \{\hA \otimes \id_B\} \\
    \{\hB\} &\mapsto \{\id_A \otimes \hB\}.
\end{align*}
Moreover, there exists a state $\bar{\tau}$ on $\bar{\cH}$ recovering the probability distribution $\vec{Q}$ with the operators $\{\hA \otimes \id_B\}$ and $\{\id_A \otimes \hB\}$, namely,
\begin{align*}
    q(ab|xy) = \tr{\tau}{\hA \hB} = \tr{\bar{\tau}}{\hA \otimes \hB}
\end{align*}
\end{theorem}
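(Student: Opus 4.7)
The plan is to use the Artin--Wedderburn structure theorem for finite-dimensional $C^*$-algebras together with the commutation hypothesis $[\mathcal{A},\mathcal{B}]=0$. By Artin--Wedderburn, one has $\mathcal{A} \cong \bigoplus_{i=1}^p M_{n_i}(\mathbb{C})$ and $\mathcal{B} \cong \bigoplus_{j=1}^q M_{m_j}(\mathbb{C})$. First I would set $\cH_A := \bigoplus_{i=1}^p \mathbb{C}^{n_i}$ and $\cH_B := \bigoplus_{j=1}^q \mathbb{C}^{m_j}$ as the standard representation spaces, and take $\iota_A \colon \mathcal{A} \hookrightarrow \mathcal{B}(\cH_A)$ and $\iota_B \colon \mathcal{B} \hookrightarrow \mathcal{B}(\cH_B)$ to be the direct sums of the defining irreducible representations. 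These are $\star$-isomorphisms onto their images, hence injective $\star$-homomorphisms as required by the statement.

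Next I would transport the state $\tau$ to the tensor-product Hilbert space $\bar{\cH} := \cH_A \otimes \cH_B$. The key observation is that the commutation of $\mathcal{A}$ and $\mathcal{B}$ makes the linear extension of $A \otimes B \mapsto A \cdot B$ a $\star$-homomorphism $\mu \colon \mathcal{A} \otimes \mathcal{B} \to \mathcal{B}(\cH)$; here $\mathcal{A} \otimes \mathcal{B}$ is equipped with its unique $C^*$-norm, using that finite-dimensional $C^*$-algebras are nuclear. Composing with the functional associated with $\tau$ yields a state $\phi$ on $\mathcal{A} \otimes \mathcal{B}$ defined by $\phi(A \otimes B) = \tr{\tau}{A\cdot B}$. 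Another application of Artin--Wedderburn shows $\mathcal{A} \otimes \mathcal{B} \cong \bigoplus_{i,j} M_{n_i m_j}(\mathbb{C})$, and the representation $\iota_A \otimes \iota_B$ on $\bar{\cH}$ coincides with the standard representation of this direct sum, hence is faithful. Since every state on a finite direct sum of matrix algebras is implemented by a density operator on its standard representation, there exists $\bar\tau \in \mathcal{B}(\bar\cH)$ with $\bar\tau \geq 0$ and $\Tr{\bar\tau}=1$ such that $\phi(X) = \tr{\bar\tau}{(\iota_A \otimes \iota_B)(X)}$ for all $X$. Specialising to $X = \hA \otimes \hB$ would yield $q(ab|xy) = \tr{\tau}{\hA\hB} = \tr{\bar\tau}{\hA \otimes \hB}$, as desired.

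The main obstacle is verifying that $\mu$ is a well-defined $\star$-homomorphism: for $A,A' \in \mathcal{A}$ and $B,B' \in \mathcal{B}$ one needs $\mu((A\otimes B)(A'\otimes B')) = AA'BB' = ABA'B' = \mu(A\otimes B)\mu(A'\otimes B')$, which holds precisely because $[A',B]=0$. Injectivity of $\mu$ is \emph{not} required --- and in fact can fail when $\mathcal{A}$ and $\mathcal{B}$ share a nontrivial subalgebra --- because $\mu$ is only used to pull back the state. The remaining bookkeeping, namely unpacking the Artin--Wedderburn decomposition of $\mathcal{A} \otimes \mathcal{B}$ into an explicit block-diagonal density operator $\bar\tau$, is routine. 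Finite-dimensionality enters crucially through Artin--Wedderburn and through the nuclearity of both factors; the infinite-dimensional analogue of the statement fails, as established in~\cite{ji2021mip}.
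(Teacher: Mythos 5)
Your proof is correct, but it takes a genuinely different route from the one in the paper. The paper follows Tsirelson's original ``doubling the centre'' argument, working entirely on the spatial side: it starts from $\mathcal{A}$ and its \emph{commutant} $\mathcal{A}'$ (which contains $\mathcal{B}$ but may be strictly larger), decomposes $\cH$ over the minimal central projections of $\mathcal{A}$ via Gelfand--Naimark, applies the spatial isomorphism theorem for Type-I factors block by block, and then embeds $\cH$ into $\bigl(\bigoplus_i \cH_i^1\bigr)\otimes\bigl(\bigoplus_i \cH_i^2\bigr)$. Your argument instead works on the abstract algebra side: $\cH_A$ and $\cH_B$ come directly from the Artin--Wedderburn decompositions of $\mathcal{A}$ and $\mathcal{B}$ themselves, the state is pulled back to $\mathcal{A}\otimes\mathcal{B}$ via the multiplication map $\mu$ (whose existence as a $\star$-homomorphism on the full $C^*$-tensor product is exactly the universal property for commuting representations, automatic here by nuclearity of finite-dimensional algebras), and the density operator $\bar\tau$ is produced by representing a state on a finite direct sum of matrix algebras in its standard representation. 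Both arguments hinge on finite-dimensionality, but in different places: the paper through the finiteness of the central decomposition and the spatial structure theorem, you through Artin--Wedderburn and the self-duality of matrix algebras under the trace pairing. What the paper's spatial construction buys is an explicit embedding of the original $\cH$ inside the tensor-product space, which the authors use in the subsequent remark to pinpoint exactly where the argument breaks in the bilocal setting (the operators $\rho,\sigma$ do not fit into the central decomposition of a single algebra); your abstract approach is cleaner and more modular but does not make that obstruction visible. One small addition worth making explicit: since $\mathcal{A}$ and $\mathcal{B}$ are von Neumann algebras, both contain $\id_\cH$, so $\mu$ is unital and $\phi$ is indeed a state; and the $\bar\tau$ you produce is a density operator rather than a rank-one projector, which matches the literal wording of the theorem, while the paper's sketch additionally notes that one may purify if a pure $\bar\tau$ is wanted.
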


\begin{remark}\label{rem:InductiveTsirelson}
Observe that for some given Projector Bilocal Quantum Distribution in a finite-dimensional Hilbert space, an inductive application of Theorem~\ref{thm:tsirelson} can easily yield a Tensor Tripartite Quantum model for that distribution.
Thus, Definition~\ref{def:TensorTripartiteQDist} is equivalent to Definition~\ref{def:CommutatorTripartiteQDist} under the assumption of finite-dimensionality.
By the same line of reasoning, we can argue the finite-dimensional equivalence of any multipartite quantum correlations.
\end{remark}

Does the bilocal version of Remark~\ref{rem:InductiveTsirelson} hold? That is, if restricting to finite-dimensional Hilbert spaces, are Definition~\ref{def:TensorBilocQDistrib} and~\ref{def:CommutatorBilocQDistrib} equivalent?

We leave this question open and now discuss why the proof is not a straightforward application of Tsirelson's Theorem~\ref{thm:tsirelson}. 
One of the impasses is due to the fact that the new global state $\bar{\tau}$ constructed does not necessarily admit a bilocal decomposition. Clearly, the equality $\bar{\tau}=\rho'_{AB_L} \otimes \sigma'_{B_RC}$, where $\rho'_{AB_L}$ and $\sigma'_{B_RC}$ are some operators, does not hold. In addition, there is a more subtle problem when embedding the PVMs into the split Hilbert space in bilocal scenarios.

Let us elaborate by discussing the proof of Theorem~\ref{thm:tsirelson}. To this end, we introduce the following known facts in the theory of operator algebra; for more details see~\cite{blackadar2006operator} and~\cite{kadison1997fundamentals}.

\begin{definition}
Given a Hilbert space $\cH$, a von Neumann algebra $\mathcal{A} \subset \mathcal{B}(\cH)$ is called a \emph{factor}\footnote{For our purpose, all factors are automatically of Type-I. We shall omit the terminology for simplicity.} if it has a trivial centre, that is
\begin{align*}
    Z(\mathcal{A}) = \mathcal{A} \cap \mathcal{A}' = \mathbb{C}\id,
\end{align*}
where $\mathcal{A}'$ denotes the commutatant of $\mathcal{A}$.
\end{definition}

The significance of factors is that any von Neumann algebra admits a direct integral decomposition. By assuming finite-dimensionality, we have the following structural result.
\begin{theorem}\label{thm:vonNeumannFiniteDecomposition}
Every finite-dimensional von Neumann algebra $\mathcal{A}$ is a direct sum of $m$ factors, where $m = \dim{Z(\mathcal{A})}$.
\end{theorem}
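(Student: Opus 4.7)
The plan is to exploit the structure of the center $Z(\mathcal{A})$ to produce a complete orthogonal family of minimal central projections, and then show that the corresponding corner algebras are factors.

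First I would observe that $Z(\mathcal{A})$ is a finite-dimensional commutative $\star$-subalgebra of $\mathcal{A}$ containing the identity, hence a finite-dimensional commutative von Neumann algebra. Applying the Gelfand representation (which in finite dimensions reduces to simultaneous diagonalisation of a commuting family of self-adjoint operators), one obtains an isomorphism $Z(\mathcal{A}) \cong \mathbb{C}^m$ with $m = \dim Z(\mathcal{A})$. Transporting the canonical basis of $\mathbb{C}^m$ back yields $m$ mutually orthogonal minimal projections $p_1, \ldots, p_m \in Z(\mathcal{A})$ satisfying $p_i p_j = \delta_{ij} p_i$ and $\sum_{i=1}^m p_i = \id$. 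Minimality means each $p_i$ cannot be written as a non-trivial orthogonal sum of projections in $Z(\mathcal{A})$.

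Next, I would define the corner algebras $\mathcal{A}_i := p_i \mathcal{A}$. Since each $p_i$ is central, one has $p_i \mathcal{A} = \mathcal{A} p_i = p_i \mathcal{A} p_i$, so $\mathcal{A}_i$ is a $\star$-subalgebra of $\mathcal{B}(p_i \cH)$ with unit $p_i$. Using $\sum_i p_i = \id$ and the pairwise orthogonality of the $p_i$, every $a \in \mathcal{A}$ decomposes uniquely as $a = \sum_i p_i a$, giving the direct sum
\begin{equation*}
    \mathcal{A} = \bigoplus_{i=1}^m \mathcal{A}_i
\end{equation*}
both as a vector space and as a $\star$-algebra (the $\mathcal{A}_i$ annihilate each other through the $p_i p_j = 0$ relations).

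The key step is to verify that each $\mathcal{A}_i$ is a factor, i.e. that $Z(\mathcal{A}_i) = \mathbb{C} p_i$. Given $z \in Z(\mathcal{A}_i)$, I would extend $z$ to an element $\tilde z := z \oplus 0 \in \mathcal{A}$, where the zero components sit in the other summands $\mathcal{A}_j$, $j \neq i$. Because every $a \in \mathcal{A}$ splits as $a = \sum_j p_j a$, one checks directly that $\tilde z$ commutes with every $a \in \mathcal{A}$: the $i$-th component uses $z \in Z(\mathcal{A}_i)$, while the other components vanish on both sides. Thus $\tilde z \in Z(\mathcal{A})$, and since $\tilde z = p_i \tilde z$, the decomposition $Z(\mathcal{A}) = \bigoplus_j \mathbb{C} p_j$ forces $\tilde z \in \mathbb{C} p_i$. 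Hence $z \in \mathbb{C} p_i$, proving $Z(\mathcal{A}_i) = \mathbb{C} p_i$ is trivial, so $\mathcal{A}_i$ is a factor.

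The main obstacle is really the bookkeeping in the last paragraph: one must verify carefully that an element central within a single corner $\mathcal{A}_i$ lifts to a central element of the full algebra $\mathcal{A}$, which relies crucially on the fact that the $p_i$ are central (so commutation across summands is automatic) and not merely orthogonal projections in $\mathcal{A}$. In infinite dimensions the same strategy works but requires a direct integral rather than a finite direct sum; the finite-dimensional hypothesis is used to guarantee that $Z(\mathcal{A})$ is spanned by finitely many minimal projections, so that the decomposition terminates and yields exactly $m$ summands.
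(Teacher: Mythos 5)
The paper does not actually prove this theorem: it is stated as background from the theory of operator algebras, with the reader referred to the cited textbooks. So there is no ``paper proof'' to compare against. That said, the sketch proof of Tsirelson's Theorem~\ref{thm:tsirelson} later in the same appendix leans on exactly the ingredients you assemble (minimal orthogonal central projections obtained via Gelfand--Naimark, the central direct-sum decomposition $\mathcal{A}\cong\bigoplus_i\mathcal{A}_i$, and the fact that each corner is a factor), so your construction is precisely the decomposition the paper has in mind when it invokes this result.

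Your proof is correct, and it is the standard one. Two small points are worth making explicit. First, when you claim $Z(\mathcal{A}_i)=\mathbb{C}p_i$ establishes that $\mathcal{A}_i$ is a factor, you should note that ``centre'' here means $\mathcal{A}_i\cap\mathcal{A}_i'$ with the commutant taken in $\mathcal{B}(p_i\cH)$, and that this coincides with the set of elements of $\mathcal{A}_i$ commuting with all of $\mathcal{A}_i$ --- which is exactly what your lifting argument controls; the identification is automatic for von Neumann algebras but the reader should see that you are using it. Second, to get the count exactly $m$, observe that every corner $\mathcal{A}_i$ is nonzero since it contains $p_i\neq 0$, so no summand degenerates. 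Both of these are minor bookkeeping, and the essential step --- that a central element of a single corner lifts (by padding with zeros) to a central element of $\mathcal{A}$ \emph{because the $p_i$ are central, not merely orthogonal} --- is correctly identified and carried out.
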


Let $p$ be a projection on some subspace $\mathcal{K}$ of $\cH$ and $u \in \mathcal{B}(\cH)$, the compression of $u$ to $\mathcal{K}$, denoted by $u_p$ is an element of $\mathcal{B}(\mathcal{K})$ such that $u_p: v \mapsto puv$. Then we have the following lemma.

\begin{lemma}\label{lem:CompressionCommutesCommutant}
Suppose that $\mathcal{A} \subset \mathcal{B}(\cH)$ is a $^*$-algebra and $p \in \mathcal{A}'$ is a projection. Then $p\mathcal{A}p$ and $\mathcal{A}_p = \{u_p \mid u \in \mathcal{A}\}$ are both $^*$-algebra and the map
\begin{align*}
    p\mathcal{A}p \to \mathcal{A}_p, pup \mapsto u_p
\end{align*}
is a $^*$-isomorphism. Moreover, if we also have $p \in \mathcal{A}''$, then
\begin{align*}
    (\mathcal{A}')_p = (\mathcal{A}_p)'.
\end{align*}
\end{lemma}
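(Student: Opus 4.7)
The plan is to separate the two statements: the $\star$-isomorphism claim, which uses only $p \in \mathcal{A}'$, and the commutant identity, which additionally needs $p \in \mathcal{A}''$.

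For the $\star$-algebra structure, I would use the hypothesis $p \in \mathcal{A}'$ (i.e., $pu=up$ for all $u \in \mathcal{A}$) as the essential computational tool. Multiplication in $p\mathcal{A}p$ closes via $(pup)(pvp) = p\, u\, p^2\, v\, p = p(uv)p$, using $p^2=p$ and $pv=vp$; closure under involution and addition is immediate. For $\mathcal{A}_p$, observe that $p \in \mathcal{A}'$ forces $pup$ to leave $p\cH$ invariant and vanish on its orthogonal complement, so $u_p := (pup)|_{p\cH}$ is well-defined in $\mathcal{B}(p\cH)$ and inherits the same algebraic identities. The map $\Phi : p\mathcal{A}p \to \mathcal{A}_p,\ pup \mapsto u_p$ is then a surjective $\star$-homomorphism by construction. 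Injectivity follows because $u_p = 0$ forces $pup$ to vanish on $p\cH$, while the invariance argument shows $pup$ already vanishes on $(1-p)\cH$, so $pup = 0$ in $\mathcal{B}(\cH)$.

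The harder part, and the main obstacle, is the commutant identity $(\mathcal{A}')_p = (\mathcal{A}_p)'$ under the extra assumption $p \in \mathcal{A}''$. The forward inclusion is a direct calculation: for $v \in \mathcal{A}'$ and $u \in \mathcal{A}$, both $p$ and $v$ commute with $u$, so $v_p u_p = (pvp)(pup)|_{p\cH} = (pvup)|_{p\cH} = (puvp)|_{p\cH} = u_p v_p$. The reverse inclusion is where I would invoke the classical reduction theorem for von Neumann algebras: if $\mathcal{M}$ is a von Neumann algebra and $p \in \mathcal{M}$ is a projection, then $(\mathcal{M}_p)' = (\mathcal{M}')_p$ in $\mathcal{B}(p\cH)$.

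Applying this with $\mathcal{M} := \mathcal{A}''$ (which contains $p$ by hypothesis) gives $(\mathcal{M}_p)' = (\mathcal{M}')_p = (\mathcal{A}')_p$ by the double commutant theorem $\mathcal{M}' = \mathcal{A}'$. It then remains to identify $(\mathcal{A}_p)'$ with $(\mathcal{M}_p)'$, which amounts to verifying that $\mathcal{A}_p$ and $\mathcal{M}_p$ generate the same von Neumann algebra in $\mathcal{B}(p\cH)$. This follows because the compression map $u \mapsto u_p$ is weak-operator continuous on bounded sets, so the strong-operator density of $\mathcal{A}$ in $\mathcal{M}=\mathcal{A}''$ transfers to density of $\mathcal{A}_p$ in $\mathcal{M}_p$, and hence the two share a common commutant. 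The careful transfer of this density through the compression map is the subtle point where I expect the argument to require the most attention.
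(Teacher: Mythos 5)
The paper states this lemma without proof, pointing to~\cite{blackadar2006operator,kadison1997fundamentals}, so there is no house proof to compare against; I am checking your sketch on its own terms. Your $\star$-algebra and $\star$-isomorphism part is correct and elementary, and your forward inclusion $(\mathcal{A}')_p \subset (\mathcal{A}_p)'$ is a valid direct computation.

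For the reverse inclusion your route works, but it is heavier than necessary, and the density step is the one place where you should tighten the statement. You write that the compression is ``weak-operator continuous on bounded sets'' and then invoke strong-operator density of $\mathcal{A}$ in $\mathcal{A}''$; the clean version is that $u \mapsto u_p$ is WOT-continuous on all of $\mathcal{B}(\cH)$ (because $\langle u_p\xi,\eta\rangle = \langle u\xi,\eta\rangle$ for $\xi,\eta \in p\cH$), $\mathcal{A}$ is WOT-dense in $\mathcal{A}''$ (this needs $\mathcal{A}$ unital or nondegenerate; in the paper's application $\id \in \mathcal{A}$ since the generating PVMs sum to the identity), hence $\mathcal{A}_p$ is WOT-dense in $(\mathcal{A}'')_p$, and the commutant depends only on the WOT-closure. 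There is, however, a much more elementary argument that avoids the reduction theorem and the density transfer entirely: given $T \in (\mathcal{A}_p)'$, extend it to $\bar T := T \oplus 0 \in \mathcal{B}(\cH)$, so that $\bar T = p\bar T p$. The hypothesis $T u_p = u_p T$ says $\bar T(pup) = (pup)\bar T$, and since $p \in \mathcal{A}'$ gives $pup = up$, this is $\bar T\, up = up\, \bar T$. Then $\bar T u = \bar T p u = \bar T\, up = up\,\bar T$ and $u\bar T = u\, p\bar T = up\,\bar T$, so $\bar T \in \mathcal{A}'$ and $T = \bar T_p \in (\mathcal{A}')_p$. Notice this direct argument uses only $p \in \mathcal{A}'$, and your forward-inclusion computation also only uses $p \in \mathcal{A}'$ and $uv = vu$; the additional hypothesis $p \in \mathcal{A}''$ is not what makes the commutant identity true (consistent with the textbook reduction theorem, which needs only $p$ in the commutant of the algebra). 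Its role is to make $p$ a \emph{central} projection of $\mathcal{A}''$, so that the compression $v \mapsto v_p$ on $\mathcal{A}'$ is itself a $\star$-homomorphism, which is what the ensuing proof of Theorem~\ref{thm:tsirelson} actually exploits. It would be worth flagging this separation of roles rather than attributing the commutant identity to $p \in \mathcal{A}''$.
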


\begin{theorem}\label{thm:vonNeumannFactorSpatialIsom}
Let $\mathcal{A}$ be a factor in $\mathcal{B}(\cH)$, where $\cH$ is finite-dimensional. Then there exists a spatial isomorphism\footnote{In the theory of operator algebra, it refers to the isomorphism induced by the adjoint action of some unitary.} $\cH \simeq \cH_1 \otimes \cH_2$, represented by a unitary map $u \in \mathcal{B}(\cH)$ such that $u\mathcal{A}u^{\dagger} = \mathcal{B}(\cH_1) \otimes \id_{\cH_2}$.

Furthermore, for the commutator $\mathcal{A}'$ we have $u\mathcal{A}'u^{\dagger} = \id_{\cH_1} \otimes \mathcal{B}(\cH_2)$\footnote{This is a direct consequence of Lemma~\ref{lem:CompressionCommutesCommutant}}.
\end{theorem}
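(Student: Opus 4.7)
The plan is to build the tensor decomposition directly from a system of matrix units inside $\mathcal{A}$, and then identify the commutant by the structural properties of this decomposition.

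First, I would invoke Theorem~\ref{thm:vonNeumannFiniteDecomposition}: since $\mathcal{A}$ is a factor, its centre is one-dimensional, so the direct-sum decomposition reduces to a single simple summand, and $\mathcal{A}$ is $\star$-isomorphic to $M_n(\mathbb{C})$ for some $n$. Pulling back the standard matrix units yields elements $e_{ij} \in \mathcal{A}$ for $1\leq i,j\leq n$ satisfying $e_{ij}e_{kl}=\delta_{jk}e_{il}$, $e_{ij}^{\dagger}=e_{ji}$, $\sum_i e_{ii}=\id_\cH$, and linearly spanning $\mathcal{A}$. In particular, $e_{11}$ is a minimal projection of $\mathcal{A}$.

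Next I would set $\cH_2 := e_{11}\cH$, fix an orthonormal basis $\{f_k\}_{k=1}^{m}$ of $\cH_2$, let $\cH_1 := \mathbb{C}^n$ with standard basis $\{\varepsilon_i\}_{i=1}^n$, and define a linear map $u : \cH_1 \otimes \cH_2 \to \cH$ by $u(\varepsilon_i \otimes f_k) = e_{i1} f_k$. A one-line computation using $e_{1i}e_{j1}=\delta_{ij}e_{11}$ gives $\langle e_{i1}f_k, e_{j1}f_l \rangle = \delta_{ij}\langle f_k, e_{11}f_l\rangle = \delta_{ij}\delta_{kl}$, so the vectors $e_{i1}f_k$ are orthonormal; and $\sum_i e_{ii}=\id_\cH$ implies that every $v \in \cH$ decomposes as $v = \sum_i e_{ii}v = \sum_i e_{i1}(e_{1i}v)$ with $e_{1i}v \in \cH_2$, so these vectors also span $\cH$. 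Hence $u$ is unitary. Checking on basis vectors, $u^{\dagger} e_{ij} u(\varepsilon_k \otimes f_l) = u^{\dagger}(e_{ij}e_{k1}f_l) = u^{\dagger}(\delta_{jk}e_{i1}f_l) = \delta_{jk}\varepsilon_i \otimes f_l = (E_{ij}\otimes \id_{\cH_2})(\varepsilon_k \otimes f_l)$, where $E_{ij}$ are the matrix units of $\mathcal{B}(\cH_1)$. Since the $e_{ij}$ span $\mathcal{A}$, this gives $u^{\dagger}\mathcal{A}u = \mathcal{B}(\cH_1)\otimes \id_{\cH_2}$.

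For the commutant, I would use the elementary fact that the commutant of $\mathcal{B}(\cH_1)\otimes \id_{\cH_2}$ inside $\mathcal{B}(\cH_1 \otimes \cH_2)$ is exactly $\id_{\cH_1}\otimes \mathcal{B}(\cH_2)$ (write an arbitrary operator in the product basis and impose commutation with each $E_{ij}\otimes \id$). Since conjugation by the unitary $u$ is a $\star$-isomorphism that preserves the operation of taking commutants, one concludes $u \mathcal{A}' u^{\dagger} = (u\mathcal{A}u^{\dagger})' = \id_{\cH_1}\otimes \mathcal{B}(\cH_2)$.

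The main obstacle, and the only step requiring care, is producing the explicit system of matrix units $\{e_{ij}\}$ inside $\mathcal{A}$ rather than an abstract $\star$-isomorphism with $M_n(\mathbb{C})$; this is what makes the spatial nature of the isomorphism work, and it rests on the fact that any finite-dimensional $\star$-representation of $M_n(\mathbb{C})$ decomposes as a multiple of its unique irreducible representation. All subsequent steps are then purely linear algebra; the result would be genuinely delicate only in the infinite-dimensional setting, which is not needed here.
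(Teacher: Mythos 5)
Your proof is correct. Note, however, that the paper does not supply a proof of Theorem~\ref{thm:vonNeumannFactorSpatialIsom} at all: it is stated as a known fact from the theory of operator algebras, with the reader referred to the textbook references cited at the start of Appendix~\ref{sec:AppendixBilocalTsirelson}. What you have written is exactly the standard textbook argument — pull a system of matrix units out of the (finite-dimensional) factor $\mathcal{A}\cong M_n(\mathbb{C})$, set $\cH_2 := e_{11}\cH$, and check directly that $\varepsilon_i\otimes f_k \mapsto e_{i1}f_k$ is unitary and intertwines $e_{ij}$ with $E_{ij}\otimes\id_{\cH_2}$. All the computations ($e_{1i}e_{j1}=\delta_{ij}e_{11}$ gives orthonormality, $\sum_i e_{ii}=\id$ gives surjectivity, and $e_{ij}e_{k1}=\delta_{jk}e_{i1}$ gives the intertwining) are correct. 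Two small remarks. First, you prove $u^{\dagger}\mathcal{A}u = \mathcal{B}(\cH_1)\otimes\id_{\cH_2}$ with $u:\cH_1\otimes\cH_2\to\cH$, whereas the theorem is phrased as $u\mathcal{A}u^{\dagger} = \mathcal{B}(\cH_1)\otimes\id_{\cH_2}$ for $u\in\mathcal{B}(\cH)$; this is only a relabelling of $u$ versus $u^{\dagger}$ and of which side of the isomorphism is regarded as the domain, so nothing is lost. Second, for the commutant you use the elementary fact that $\bigl(\mathcal{B}(\cH_1)\otimes\id_{\cH_2}\bigr)' = \id_{\cH_1}\otimes\mathcal{B}(\cH_2)$, whereas the paper's footnote points to Lemma~\ref{lem:CompressionCommutesCommutant}; your route is more concrete and equally valid, and conjugation by a unitary intertwines commutants, as you say. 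The closing remark about finite-dimensional $\star$-representations of $M_n(\mathbb{C})$ is not strictly needed once you manipulate the $e_{ij}$ directly, but it does correctly identify the conceptual reason the argument works and why it would be delicate in infinite dimension.
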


We are ready to give a sketch proof of Theorem~\ref{thm:tsirelson} referencing~\cite{tsirelsonproblem} and~\cite{tsirelsonNote}.
\begin{proof}
The technique is known as ``doubling the centre''. Note that $\mathcal{B} \subset \mathcal{A}$. If $\mathcal{A}$ is a factor, then we are done by Theorem~\ref{thm:vonNeumannFactorSpatialIsom}. Otherwise, since the centre $Z(\mathcal{A})$ is a finite-dimensional unital commutative $C^*$-algebra, the Gelfand--Naimark Theorem gives rise to a finite set of minimal orthogonal central projections $\{p_i\} \subset Z(\mathcal{A})$ that sum up to the identity $\id_{\cH}$.

Therefore, we may identify $\mathcal{A}$ and $\mathcal{A}'$ with direct sum representations $\bigoplus_i \mathcal{A}_i$ and $\bigoplus_i \mathcal{A}'_i$\footnote{Note there is no ambiguity in the notation $\mathcal{A}'_i$ due to Lemma~\ref{lem:CompressionCommutesCommutant}.}, respectively, where $\mathcal{A}_i = \mathcal{A}_{p_i}$. It can be checked that $\mathcal{A}_i$ is a factor for all $i$. Consequently, we have a direct sum decomposition $H \cong \bigoplus_i p_i(\cH) = \bigoplus_i \cH_i$.

Then, applying Theorem~\ref{thm:vonNeumannFactorSpatialIsom} we split $\mathcal{A}_i$ and $\mathcal{A}'_i$ with tensor products and, subsequently, we have $H_i = H^1_i \otimes H^2_i$. It can be checked that the natural embedding of $\cH$ in $\left(\bigoplus_i H^1_i\right) \otimes \left(\bigoplus_i H^1_i\right)$ induces the desired injective $^*$-homomorphism for $\mathcal{A}$ and $\mathcal{A}'$. Moreover, the ($C^*$-algebraic) functional on $\left(\bigoplus_i H^1_i\right) \otimes \left(\bigoplus_i H^1_i\right)$ induced from $\tr{\tau}{\cdot}$ can be checked as a state. Hence, finite-dimensionality implies the existence of an associated density operator $\bar\tau$, which we may purify.
\end{proof}

\begin{remark}
Now we are able to describe the difficulty of showing a bilocal version of Theorem~\ref{thm:tsirelson}, which is due to the existence of operators $\rho$ and $\sigma$. 
Indeed, for example, we may consider the fact that $\mathcal{B}, \mathcal{C}, \{\sigma\} \subset \mathcal{A'}$ and perform a minimal central decomposition of $\mathcal{A}$ and $\mathcal{A}'$. But then we cannot say the same about $\rho$ since $\rho \notin \mathcal{A}'$ in general, and hence it is not clear how to embed $\rho$ isomorphically in the new Hilbert space. A similar problem arises when we attempt to start with the algebra $\mathcal{C}$.

One may also consider the algebra generated by $\mathcal{A}$ and $\rho$, with $\mathcal{C}$ and $\sigma$ lying in its commutant, to obtain a splitting $\cH_{AB_L} \otimes \cH_{B_RC}$. Although seemingly recovering Tensor Bilocal Quantum correlations in Definition~\ref{def:TensorBilocQDistrib}, the algebra $\mathcal{B}$ may not be embedded into the splitting Hilbert space isomorphically. 
The failure of this example is also related to the difficulty of proving the converse of Theorem~\ref{thm:InflationNPAIsTighter} in finite dimension, as it is unclear how to find a sensible splitting of Bob's measurement $\mathcal{B}$.

Furthermore, evident from the proof above, we are not guaranteed that the new global state $\bar\tau$ admits the desired bilocal decomposition.
\end{remark}

\subsection{A new development to the problem}\label{sec:AppendixBilocalTsirelsonSolved}
At the time of the first draft, we left open the question whether the Projector Bilocal Quantum Distribution becomes equivalent to the standard tensor formulation in finite dimension (Appendix~\ref{sec:AppendixOutdatedTsirelson}), which is the case for the standard NPA hierarchy. 
Continuing from our question, the authors of~\cite{ligthart2023inflation} show that our relaxation, the Projector Bilocal Quantum Distribution, is equivalent to the \emph{commuting-observable model} for bilocal network, that is, for $\Vec{Q}$ there are
\begin{enumerate}
    \item mutually commuting $C^*$-subalgebras $\mathcal{A}, \mathcal{B}_L, \mathcal{B}_R, \mathcal{C} \subset \mathcal{D}$ for some global $C^*$-algebra $\mathcal{D}$;
    \item PVMs $\{\A\} \subset \mathcal{A}$, $\{\B\} \subset \mathcal{B} = \mathcal{B}_L \cdot \mathcal{B}_R$, and $\{\C\} \subset \mathcal{C}$;
    \item a state $\rho$ on $\mathcal{D}$ that is product in the sense
    \begin{align*}
        \rho( \alpha \beta_L \beta_R \gamma) = \rho(\alpha \beta_L) \rho(\beta_R \gamma)
    \end{align*}
    for all $\alpha \in \mathcal{A}$, $\beta_L \in \mathcal{B}_L$, $\beta_R \in \mathcal{B}_R$, and $\gamma \in \mathcal{C}$
    and satisfies $q(abc|xyz) = \rho ( \hA \hB \hC)$;
\end{enumerate}
and the \emph{mixed model} for bilocal network, that is, for $\Vec{Q}$ there are
\begin{enumerate}
    \item Hilbert spaces $\cH_{AB_L}$, $\cH_{B_RC}$ (equivalent to $V_{AB_L}$ and $V_{B_RC}$ in our construction);
    \item mutually commuting $C^*$-subalgebras $\mathcal{A}, \mathcal{B}_L \subset B(\cH_{AB_L})$, $\mathcal{B}_R, \mathcal{C} \subset B(\cH_{B_RC})$;
    \item PVMs $\{\A\} \subset \mathcal{A}$, $\{\B\} \subset \mathcal{B} = \mathcal{B}_L \bar{\otimes} \mathcal{B}_R$, and $\{\C\} \subset \mathcal{C}$;
    \item two density operators $\rho$ on $\cH_{AB_L}$ and $\sigma$ on $\cH_{B_RC}$ such that
    \begin{align*}
        q(abc|xyz) = \Tr{ (\rho \otimes \sigma)  (\hA \hB \hC) }.
    \end{align*}
\end{enumerate}
A direct consequence~\cite[Corollary~9]{ligthart2023inflation} is the bilocal Tsirelson's Theorem that establishes the equivalence of these sets of correlations in finite-dimensional settings.

Moreover, they propose another hierarchy, the \emph{polarisation hierarchy} that converges to the same set. Therefore, combining our results, the factorisation bilocal NPA hierarchy, the scalar extension bilocal hierarchy, the inflation-NPA hierarchy, and the polarisation hierarchy all characterise the same commutator relaxation of bilocal scenarios, with finite-dimensional equivalence to tensor formulation.

%------
% Insert acknowledgments and information
% regarding funding at the end of the last
% section, i.e., right before the bibliography.
%------

%------
% Insert the bibliography.
%------

\bibliographystyle{emss}
\bibliography{reference}

\end{document}